\documentclass[11pt]{article}
\usepackage[utf8]{inputenc}
\usepackage{enumitem}
\usepackage{fullpage}
\usepackage{hyperref}
\hypersetup{
    colorlinks = true,
    citecolor = blue
}

\usepackage[skins,xparse,breakable]{tcolorbox}
\usepackage{amsmath,amsfonts,amssymb}
\usepackage{bbm}
\usepackage{circuitikz}
\usepackage{tikz}
\usepackage{standard}
\usepackage{multicol}

\usepackage{graphics}

\usepackage{dsfont}
\usepackage{amsthm}
\usepackage{sectsty}
\usepackage{gastex}
\newtheorem{corollary}[theorem]{Corollary}

\usepackage{hhline}

\newtheorem*{theorem*}{Theorem}

\usepackage{physics}
\usepackage{braket}

\usepackage{pgfplots}
 
\pgfplotsset{compat = newest}
\usepackage{authblk}

\usepackage{colortbl}

\begin{document}

\title{Quantum Communication Complexity of Classical Auctions}
\author{Aviad Rubinstein\footnote{aviad@cs.stanford.edu. Supported by NSFCCF-1954927, and a David and Lucile Packard Fellowship.}}
\author{Zixin Zhou\footnote{jackzhou@stanford.edu. Supported by NSFCCF-1954927, and a Stanford Interdisciplinary Graduate Fellowship.}}
\affil{Stanford University}

\maketitle
\begin{abstract}
    We study the fundamental, classical mechanism design problem of single-buyer multi-item Bayesian revenue-maximizing auctions under the lens of communication complexity between the buyer and the seller. Specifically, we ask whether using quantum communication can be more efficient than classical communication. We have two sets of results, revealing a surprisingly rich landscape --- which looks quite different from both quantum communication in non-strategic parties, and classical communication in mechanism design.

    We first study the expected communication complexity of approximately optimal auctions. We give quantum auction protocols for buyers with unit-demand or combinatorial valuations that obtain an arbitrarily good approximation of the optimal revenue while running in exponentially more efficient communication compared to classical approximately optimal auctions. However, these auctions come with the caveat that they may require the seller to charge exponentially large payments from a deviating buyer. We show that this caveat is necessary - we give an exponential lower bound on the product of the expected quantum communication and the maximum payment.

    We then study the worst-case communication complexity of exactly optimal auctions in an extremely simple setting: additive buyer's valuations over two items. We show the following separations:
    \begin{itemize}
        \item There exists a prior where the optimal classical auction protocol requires infinitely many bits, but a one-way message of 1 qubit and 2 classical bits suffices.
        \item There exists a prior where no finite one-way quantum auction protocol can obtain the optimal revenue. However, there is a barely-interactive revenue-optimal quantum auction protocol with the following simple structure:  the seller prepares a pair of qubits in the EPR state,  sends one of them to the buyer, and then the buyer sends 1 qubit and 2 classical bits.
        \item There exists a prior where no multi-round quantum auction protocol with a finite bound on communication complexity can obtain the optimal revenue. 
    \end{itemize}
\end{abstract}

\setcounter{page}{0}
\thispagestyle{empty}
\newpage

\section{Introduction}

We study the quantum communication complexity of classical problems with strategic constraints. The communication problems we study differ from traditional (``cooperative'') problems in communication complexity like Set Disjointness due to Incentive Compatibility (IC) constraints.
Informally speaking, one can think of the goal of a traditional communication problem is to design   a protocol for both parties that optimizes some common objective function. 

For instance, in Set Disjointness, the common goal is to maximize the probability of outputing the correct answer. 
By contrast, in strategic communication, each party optimizes a differenet objective; we seek protocols that are communication-efficient, and at the same time don't expect strategic parties to take actions that are mis-aligned with their objectives.

Specifically, we study the quantum communication complexity of a fundamental setting in mechanism design: a monopolistic, revenue-maximizing seller with Bayesian prior auctioning $n$ items to a single buyer; this setting has proved very attractive to researchers in theoretical computer science (e.g.~\cite{BabaioffILW20, ChenDOPSY15, BabaioffGN22, Gonczarowski18, GuoHZ19, RubinsteinZ21} and references therein).

It is known that even with seemingly benign Bayesian priors, revenue-optimality requires complex auctions, e.g.~ones that allow the buyer to choose among lotteries~\cite{Thanassoulis04,ManelliV10,BriestCKW10,Pavlov11,HartR15}. 
This realization has sparked a fruitful line of work on the simplicity-vs-complexity of  (approximately) optimal auctions. 
Understanding the tradeoffs of simplicity vs.~complexity requires formal definitions of complexity. 

Perhaps the most well-studied notion of complexity for this problem is the number of distinct lotteries offered to the buyer (``menu-size complexity''~\cite{HartN19}). 
The measure exactly characterizes the {\em deterministic communication} complexity of the interaction between a buyer and seller who both know the rules of the auction%
 \footnote{Specifically, $\text{deterministic-CC} = \log\big(\lceil \text{menu-size complexity}\rceil \big)$.}~\cite{BabaioffGN22}.

We study the communication complexity of auctions subject to an Incentive Compatibility (IC) constraint: it is crucial that a strategic buyer must not be able to gain from deviating from the protocol.
(As is standard in the literature, we assume that the seller is non-strategic and follows the protocol faithfully. See Section~\ref{sec:model} for formal definition, and e.g.~\cite{FadelS09,DobzinskiR21,RubinsteinZ21,RubinsteinSTWZ21} for further discussion.)
Under this IC constraint, \cite{RubinsteinZ21} show that it is possible to obtain dramatic savings in communication by considering (classical) {\em randomized communication}. The main question we ask in this work is whether {\em quantum communication} can be even more efficient than classical randomized communication complexity for this problem:

\begin{quote}
    Can quantum auction protocols achieve super-classical performance? 
\end{quote}

Specifically,~\cite{RubinsteinZ21} show that even though randomized auction protocols can be much more efficient than deterministic ones, they  still have limitations:
\begin{itemize}
    \item {\bf Worst-case vs expected CC barrier:} \cite{RubinsteinZ21} improve the communication complexity in expectation (over the randomness of the protocol), but the worst-case communication complexity of randomized protocols is still characterized by the menu-size complexity. 
    \item {\bf Combinatorial valuations barrier:} For buyers with combinatorial valuations over the items, \cite{RubinsteinZ21} prove exponential lower bounds even for the expected communication of randomized auction protocols. These lower bounds hold even for approximately optimal mechanisms, and even against restricted classes of valuations (e.g.~monotone submodular valuations). 
\end{itemize}
In this paper, we investigate to what extent quantum communication can break these classical barriers.

\subsection{Our Contributions}
We formalize a model of {\em quantum auction protocols} (Definition~\ref{def:quantumauction}) that extends the randomized auction protocols of~\cite{RubinsteinZ21} by allowing the buyer and seller to send, receive, operate on, and measure qubits. We then provide two sets of results, centered around the two barriers for classical auction protocols mentioned above.

\subsubsection{(Un)expected Quantum CC with Combinatorial Valuations}
Our first result is an exponential quantum speed-up for auction protocols. It is stated in a general form for a mechanism that chooses an allocation among $B$ possible allocations. Specifically, it gives a near-equivalent IC quantum auction protocol that uses only $O(\log(B))$ qubits --- matching the cost of naively encoding the allocation (without strategic considerations). 
\begin{theorem}[Efficient in-expectation quantum auction protocols]\label{thm:combinatorial-protocol} \hfill

    Let $\cal{D}$ be a prior over buyer's combinatorial valuations over $n$ items; assume all valuations are in the range%
    \footnote{Our protocol assumes that we're given some finite upper bound $U$ on valuations, but the communication complexity does not depend on $U$.} $[0,U]$. 
    Let $\cM$ be any mechanism that can only possibly allocate one of $B$ subsets of the items. Finally, let $\delta > 0$ be any parameter ($\delta$ may be a function of $n$ or $B$). Then there is an IC quantum auction protocol that guarantees a $(1-\delta)$-fraction of $\cM$'s expected revenue using $O(\log(B))$ qubits in expectation.
\end{theorem}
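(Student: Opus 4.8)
The plan is to reduce the theorem to the task of ``simulating'' the menu of the (IC but possibly wildly complex) mechanism $\cM$ with a short quantum conversation, exploiting two features of the quantum model: (i) an allocation-lottery supported on the $B$ feasible outcomes can be handed over in a single $\lceil\log B\rceil$-qubit pure state $\sum_S \sqrt{x(S)}\,\ket{S}$, and (ii) a payment can be realized as the outcome of a measurement performed by the \emph{seller}, so the buyer merely commits a quantum state rather than reporting a number. By the taxation/menu principle, $\cM$ is described by a (possibly infinite) menu $\{(x_i,p_i)\}$ of allocation-lotteries $x_i$ over the $B$ feasible sets together with prices $p_i$; a buyer with valuation $v$ picks $i^\star\in\arg\max_i(\langle v,x_i\rangle-p_i)$, and since every feasible set is worth at most $U$ and the null outcome is available, $p_{i^\star}\le\langle v,x_{i^\star}\rangle\le U$. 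The obstruction to a cheap \emph{classical} protocol is exactly that the seller must learn enough to pin down $(x_{i^\star},p_{i^\star})$ --- the menu can be enormous and the prices are real numbers with no $U$-free encoding --- which is why the classical worst-case cost is the menu size and why even the $(1-\delta)$-approximate version does not help classically for combinatorial valuations.

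The honest protocol: the buyer computes $i^\star$ and sends the $\lceil\log B\rceil$-qubit register $A$ in state $\ket{\psi^A_{i^\star}}=\sum_S\sqrt{x_{i^\star}(S)}\,\ket{S}$ together with one payment qubit $Q$ in state $\sqrt{1-p_{i^\star}/U}\,\ket 0+\sqrt{p_{i^\star}/U}\,\ket 1$. With probability $1-\delta$ the seller runs the ``real'' branch: measure $A$ in the computational basis and allocate the resulting set $S$; measure $Q$ and charge $U$ on outcome $\ket 1$ and $0$ on outcome $\ket 0$. On this branch the realized allocation is distributed as $x_{i^\star}$ and the expected payment is exactly $p_{i^\star}$, so an honest buyer's expected utility equals $\langle v,x_{i^\star}\rangle-p_{i^\star}$ and the expected revenue contributed is $p_{i^\star}$; in expectation over $\mathcal D$ this is $(1-\delta)\,\mathrm{Rev}(\cM)$. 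The communication is $\lceil\log B\rceil+1$ qubits. (An honest buyer occasionally pays the large amount $U$ rather than a ``normal'' amount; only the expectation is pinned down, and the genuinely large payments appear in the deviation branch below.)

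Incentive compatibility is the heart of the argument and the main obstacle. Without any check, a deviating buyer would simply send the payment qubit $\ket 0$ (never pay) while still measuring out the best allocation for free, so we add a \emph{test branch}, entered with probability $\delta$, which is where large penalties come from. From the single copy it holds the seller must verify that the (allocation-distribution, payment) pair $(\hat x,\hat p)$ induced by the buyer's message is menu-feasible, i.e.\ $\hat p\ge P(\hat x):=\sup_v\bigl(\langle v,\hat x\rangle-u^\star(v)\bigr)$, the convex conjugate of the buyer's optimal-utility function $u^\star(v)=\max_i(\langle v,x_i\rangle-p_i)$, which the seller can compute from $\cM$. Since a single projective measurement cannot do this for an arbitrarily rich menu, the test is interactive: the seller issues a random challenge forcing the buyer to answer a value-query $\langle\tilde v,\hat x\rangle$ (a single real, cheap to send) consistently with the already-committed register $A$, together with $\hat p$, and it rejects and charges an enormous penalty $T$ whenever $\langle\tilde v,\hat x\rangle-\hat p>u^\star(\tilde v)+(\text{slack})$. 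A buyer whose report is $\varepsilon$-profitably infeasible is caught with some probability $q(\varepsilon)>0$; the delicate, menu-dependent step is to lower-bound $q(\varepsilon)$, after which one sets $T$ so that $\delta\cdot q(\varepsilon)\cdot T$ exceeds the maximum deviation gain $U$, forcing $T\gtrsim U/(\delta\,q(\varepsilon))$ --- exponentially large in $n$ in the worst case, exactly as the caveat (and the companion lower bound on expected-communication times maximum-payment) predict.

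The accounting then closes the proof: the revenue loss is precisely the $\delta$ fraction ceded to the test branch, giving a $(1-\delta)$-approximation; the expected communication is $O(\log B)$ qubits plus the $O(1)$ reals/bits of the rare test branch, and none of it depends on $U$ (all $U$-dependence is pushed into the \emph{magnitude} of the penalty, not the \emph{length} of any message); and the IC analysis above makes honesty interim-optimal for the buyer. I expect essentially all of the genuine difficulty to be in the test branch: designing a challenge whose answers are short and $U$-independent yet that certifies menu-feasibility with a quantifiably-bounded-below detection probability for \emph{every} mechanism $\cM$, while accounting for the fact that the buyer's ``commitment'' is a quantum state he may prepare as an arbitrary superposition rather than the honest $\ket{\psi^A_{i^\star}}$.
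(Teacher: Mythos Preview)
Your high-level architecture --- a ``real'' branch that measures the $\lceil\log B\rceil$-qubit state to sample the allocation, plus a rare ``test'' branch backed by a large penalty --- matches the paper. Encoding the payment as a single $0/U$ qubit is equivalent to the paper's reduction to $2B$ outcomes. The genuine gap is in the design of the test branch.

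Your test asks the seller to pick a random $\tilde v$ and the buyer to report $\langle\tilde v,\hat x\rangle$ and $\hat p$, then checks $\langle\tilde v,\hat x\rangle-\hat p\le u^\star(\tilde v)+\text{slack}$. But nothing in this check ties the buyer's \emph{classical answer} to the \emph{quantum state} he already sent. A deviating buyer can send whatever quantum state he likes on register $A$ (e.g.\ $\ket{b^\star}$ for his favorite bundle, with payment qubit $\ket 0$), and then in the test branch simply report numbers that pass the inequality for the announced $\tilde v$ --- the seller has no way, from a single copy of $A$, to detect that the reported $\langle\tilde v,\hat x\rangle$ is inconsistent with the actual amplitudes. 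You flag ``consistently with the already-committed register $A$'' as a requirement but give no mechanism for enforcing it; estimating $\sum_S \tilde v(S)\,x(S)$ from one copy of $\sum_S\sqrt{x(S)}\,\ket S$ is not possible with useful precision. You also leave $q(\varepsilon)$ entirely unquantified, and ``a single real, cheap to send'' is not a well-defined communication cost.

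The paper's fix is the missing idea: in the test branch the buyer is asked to send the \emph{entire classical description} of the distribution (roughly $O(B\log B)$ bits). This is expensive, but the test branch is entered only with probability $O(1/B)$, so the expected cost stays $O(\log B)$. Given the full classical description $(\widehat p_1,\dots,\widehat p_{2B})$, the seller can (i) check feasibility directly, and (ii) form the specific pure state $\ket\psi=\sum_i\sqrt{\widehat p_i}\,\ket i$ and measure the earlier quantum message against it, succeeding with probability $\Tr(\rho\,\ket\psi\bra\psi)$. The failure probability is then at least $\tfrac{1}{2B}\,T(\rho,\ket\psi\bra\psi)^2$, which is the concrete lower bound on your $q(\varepsilon)$ and is what calibrates the penalty. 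The point is that one cannot test an unknown state against an unknown feasible set, but one \emph{can} test it against a single declared target state; the trick is to make the buyer declare the target after he is already committed, and to make that declaration rare enough that its length does not hurt the expectation.
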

As a corollary, we only need $O(\log(n))$  qubits for unit-demand, or $O(n)$ qubits for arbitrary combinatorial valuations. On the contrary, \cite{RubinsteinZ21} shows that any randomized classical communication protocol requires at least $\Omega(n)$ bits for unit-demand, and  $\Omega(2^n)$ bits for combinatorial valuations.
To the best of our knowledge, this is the first exponential separation of quantum and classical communication in algorithmic game theory.

The positive result in Theorem~\ref{thm:combinatorial-protocol} has a caveat: the protocol may require large payments. Specifically, we need the ability to inflict a large penalty on buyers who deviate from the intended quantum strategy. However, the probability of catching deviating buyers may be exponentially small, so we use exponentially large payments. Even though buyers who follow the protocol can never be penalized, to be accountable for a potentially large payment the buyer may need significant collateral to participate in the auction. 
Our second result shows that unfortunately without exponentially large payments all the lower bounds from~\cite{RubinsteinZ21} extend to quantum communication.

\begin{theorem}[Efficient protocols require large payments - short version] \label{thm:large-payments-intro} \hfill

    Let $n$ be the number of items, $P$ be an upper bound on the payments in the quantum auction protocol (when the valuations are normalized to $[0,1]$), and $\hat{K}$ an upper bound on the expected communication complexity. Then for a buyer with combinatorial (XOS) valuations, any quantum auction protocol that obtains $\Omega(1)$-approximation to the optimal revenue must satisfy $\hat{K} P= 2^{\Omega(n)}$. See Theorem~\ref{thm:large-payments} for full statement and additional results.
\end{theorem}

Interestingly, we are not aware of any classical analogs of such tradeoffs between maximum payment and communication complexity. In particular, the exponential lower bounds against classical auction protocols in~\cite{RubinsteinZ21} hold even with arbitrarily large payments.

\subsubsection{Worst-Case Quantum CC with Two Items}
Our second set of results focuses on the particularly simple case of a buyer with additive valuations over only two items, and the Bayesian prior for these values is independent. In this case, classical protocols of~\cite{RubinsteinZ21} already achieve $O(1)$ expected communication, but their worst-case communication is infinite. Note that this is for exactly optimal mechanisms --- for approximately optimal mechanisms, worst-case can be reduced to expected communication.

We show that on one hand, a {\em single qubit} can sometimes replace an {\em infinite} stream of classical bits. 

\begin{theorem}[Separating one-way quantum vs classical]\label{thm:1-way-protocol} \hfill

    For the problem of auctioning two items to a single buyer, there is a Bayesian prior over independent item values, such that there is a revenue-optimal and IC one-way quantum auction protocol where the buyer sends 1 qubit and 2 classical bits; yet no finite classical auction protocol can achieve the optimal revenue.
\end{theorem}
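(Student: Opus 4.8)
\textbf{Proof plan.} The plan is to exhibit one concrete product prior $\mathcal{D}=\mathcal{D}_1\times\mathcal{D}_2$ over additive valuations whose revenue-optimal mechanism $\cM^{\star}$ has a menu of the following shape: at most three ``discrete'' lottery--price atoms, together with a single ``continuum edge'' $\{(\vec q(\omega),P(\omega)):\omega\in[\omega_-,\omega_+]\}$ along which \emph{every} entry is the \emph{strict} utility-maximizer for a positive-measure set of buyer types, and along which each coordinate of the allocation $\vec q(\omega)$ and the price $P(\omega)$ is an affine function of $(1,\cos\omega,\sin\omega)$ --- i.e.\ the edge traces an arc of an ellipse in allocation--payment space. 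I would build such a prior by reverse-engineering the mechanism: first fix a real three-outcome POVM $\{E_0,E_1,E_2\}$ on a single qubit (two outcomes provably cannot work, since a two-outcome measurement makes the buyer's expected utility affine in the encoded parameter, collapsing the edge to its two endpoints), fix allocations $A_0,A_1,A_2\subseteq\{1,2\}$ and payments $\pi_0,\pi_1,\pi_2$, and declare the edge to be $\vec q(\omega)=\sum_j p_j(\omega)\,e_{A_j}$ and $P(\omega)=\sum_j p_j(\omega)\pi_j$, where $p_j(\omega)=\langle\psi_\omega|E_j|\psi_\omega\rangle$, $|\psi_\omega\rangle=\cos(\omega/2)\ket{0}+\sin(\omega/2)\ket{1}$, and $e_S\in\{0,1\}^2$ is the indicator vector of $S$; after adding a few discrete atoms, I would invoke the Daskalakis--Deckelbaum--Tzamos strong-duality machinery to certify a prior for which this menu is exactly optimal. (Alternatively, one can start from a known optimal two-item mechanism with a curved randomization region and check that its edge reparametrizes into such an elliptical arc.)

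Given the prior, the classical lower bound is immediate: $\cM^{\star}$ has infinitely many distinct outcomes, each the strict utility-maximizer for a positive-measure type set, so its menu-size is infinite; since the worst-case communication of any IC classical auction protocol implementing a mechanism --- even a randomized one --- is at least $\log$ of that mechanism's menu-size (the characterizations recalled in the introduction), no finite classical auction protocol can implement $\cM^{\star}$, and hence none can obtain the optimal revenue for $\mathcal{D}$.

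For the quantum protocol, the buyer computes their optimal menu entry under $\cM^{\star}$ and always sends two classical bits (naming which atom, or ``the edge'') together with one qubit. For a discrete atom the seller allocates and charges that atom and ignores the qubit. For the edge with parameter $\omega^{\star}$ the buyer sends $\ket{\psi_{\omega^{\star}}}$; the seller applies the fixed POVM $\{E_0,E_1,E_2\}$ and, on outcome $j$, allocates $A_j$ and charges $\pi_j$. By construction the honest buyer's induced expected allocation and expected payment are exactly $\vec q(\omega^{\star})$ and $P(\omega^{\star})$, so they face precisely $\cM^{\star}$. Incentive compatibility splits into two parts. Deviating across modes cannot help, because $\cM^{\star}$ is a menu (the honest report is by definition the buyer's favorite entry) and the protocol faithfully realizes each entry. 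Within the edge mode, a deviating buyer may send an arbitrary qubit (and pre-process arbitrarily), but since the protocol is one-way and all $E_j$ are real operators, the buyer's expected utility depends only on the $x$- and $z$-components of the state's Bloch vector and is affine in them; it is therefore maximized on the Bloch circle, at a pure real state. We calibrate the POVM and the payment labels $\pi_j$ so that, among circle states, each type's utility is maximized exactly at the intended $\ket{\psi_{\omega^{\star}}}$ --- which is precisely the requirement that the elliptical arc $\omega\mapsto(\vec q(\omega),P(\omega))$ coincide with the IC menu-edge of $\cM^{\star}$, which is how it was set up.

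The main obstacle is the first step: producing a prior whose revenue-optimal mechanism \emph{simultaneously} genuinely uses a continuum of lotteries (to defeat every finite classical protocol) and has that continuum be a single-qubit-expressible elliptical arc (to fit a one-qubit message). The ellipse constraint is rigid, so the prior very likely must be engineered rather than taken off the shelf, and the technical heart is exhibiting the strong-duality certificate --- a suitable signed measure on type space, equivalently a feasible dual of the optimal-transport program --- that verifies optimality of the engineered $\cM^{\star}$. A secondary subtlety is jointly calibrating the seller's POVM and the payments $\pi_j$ so that honest encoding is a buyer \emph{best response} rather than merely feasible, i.e.\ aligning the incentive geometry on the Bloch circle with that of the menu-edge of $\cM^{\star}$.
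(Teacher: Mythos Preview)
Your proposal is correct and follows essentially the same approach as the paper: reverse-engineer a product prior (via the GK/DDT duality framework) so that the optimal mechanism's curved menu region is exactly expressible as the top eigenvalue of a $2\times 2$ Hermitian affine in the buyer's type, then implement that region with a single-qubit POVM and use two classical bits to select among the remaining discrete menu atoms. The paper's execution differs only in order and detail---it fixes the utility curve $s_1(y)$ first (chosen to have the form $\|yA+B\|$), solves the resulting ODE from the GK conditions to obtain $f_2$, and uses a four-outcome POVM rather than three---but the geometric content and the IC argument are the same as yours.
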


Furthermore, interactive quantum protocols are even more powerful:

\begin{theorem}[Separating interactive quantum vs one-way quantum]\hfill\label{thm:interactive-vs-1-way}

    For the problem of auctioning two items to a single buyer, there is a Bayesian prior over independent item values, such that there is an IC revenue-optimal quantum auction protocol where the seller sends 1 qubit to the buyer, who replies with 1 qubit and 2 classical bits; yet no finite classical or one-way quantum auction protocol can achieve the optimal revenue.
\end{theorem}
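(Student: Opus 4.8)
The plan is to exhibit an explicit Bayesian prior over two independent items for which the optimal (randomized) menu has a structure that forces infinitely many distinct menu entries — so by the characterization of worst-case classical communication via menu-size, no finite classical protocol works — and moreover a structure that cannot be realized by any finite one-way quantum message, yet can be realized with one round of EPR-assisted interaction. For the classical (and one-way-quantum) lower bound I would first recall/adapt the argument behind Theorem~\ref{thm:1-way-protocol}: there, a single qubit sent from buyer to seller encodes a continuum of pure states, and a suitable POVM on the seller's side implements a lottery whose allocation probabilities vary continuously with the buyer's reported type, something no finite classical menu can do. For the present theorem I need a prior whose optimal mechanism requires \emph{two} independent continuous degrees of freedom in the menu (one per item, morally), so that a single qubit's two real parameters are consumed by — but not sufficient to also carry — the information a one-way protocol would need; whereas giving the buyer half of an EPR pair first lets the buyer's single return qubit plus two classical bits steer a joint measurement that realizes the required two-dimensional family of lotteries.

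Concretely, the steps I would carry out are: (i) construct the prior — I expect a product of two distributions each supported on a small finite set of values augmented by a continuous ``tie-breaking'' piece, engineered (via the standard duality/ironing machinery for two independent items, cf.~the references in the introduction) so that the revenue-optimal menu offers a one-parameter family of lotteries on item~1 and a one-parameter family on item~2 that must be selected jointly; (ii) prove the classical lower bound by showing this optimal menu has infinitely many entries and invoking $\text{deterministic-CC} = \log\lceil\text{menu-size}\rceil$, together with the observation from~\cite{RubinsteinZ21} that randomization only helps in expectation, not in worst case, so worst-case classical CC is still infinite; (iii) prove the one-way \emph{quantum} lower bound — here I would argue that a one-way message of any finite number of qubits, composed with the seller's fixed decoding POVM, induces a mechanism whose menu lies in a fixed finite-dimensional real variety, and by a dimension/compactness argument this cannot coincide with the optimal mechanism (which requires the full two-parameter family with the IC constraints pinning down a non-algebraic or higher-dimensional structure); (iv) construct the matching upper bound: the seller prepares $\tfrac{1}{\sqrt2}(\ket{00}+\ket{11})$, sends half to the buyer; the buyer, knowing its type, applies a type-dependent unitary to its qubit and returns it along with 2 classical bits; the seller then performs a Bell-type measurement on the returned qubit together with its retained qubit, and the outcome statistics — steered by the buyer's unitary — reproduce exactly the optimal allocation probabilities and payments. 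I would then verify IC: a buyer reporting truthfully gets its intended lottery, and any deviation (different unitary, different classical bits, or tampering with the qubit) either changes nothing profitably or is caught and penalized, exactly as in the IC analysis underlying Theorem~\ref{thm:1-way-protocol}.

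The main obstacle I anticipate is step~(iii), the one-way quantum impossibility: ruling out \emph{all} finite one-way quantum protocols is more delicate than ruling out classical ones, because a one-way quantum message of $k$ qubits already encodes a continuum, so the menu it induces need not be finite. The key will be to pin down exactly which families of menus are realizable by one-way quantum communication — I expect these to be parametrized by (mixed) states in a $2^k$-dimensional space passed through a \emph{fixed} channel, hence lying in a bounded-dimensional semialgebraic set — and then to show the specific optimal menu for my prior provably escapes every such set, e.g.\ by a rank/dimension count or by exhibiting infinitely many menu entries that are ``in general position'' in a way incompatible with the convex-geometric image of a fixed-dimensional quantum state space. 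Designing the prior so that this dimension obstruction is clean, while simultaneously keeping the EPR-assisted upper bound simple (a single return qubit plus 2 classical bits), is where the real work lies; I would likely reverse-engineer the prior from the upper-bound construction, choosing the type-dependent unitaries first and then finding a prior whose optimal menu is exactly the induced family.
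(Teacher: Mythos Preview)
Your high-level plan for step~(iv) is close to the paper's construction, but step~(iii) --- the one-way quantum lower bound --- misidentifies the obstruction and, as proposed, would not go through.

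The paper's argument is not a dimension or rank count. In \emph{any} one-way quantum protocol, the buyer with type $(x,y)$ chooses a state $\rho$ to maximize $\Tr(\rho(V(x,y)-Q))$, where $V(x,y)$ is a Hermitian matrix \emph{linear} in $(x,y)$ and $Q$ is fixed. Hence the optimal utility along the slice $y=1$ is exactly the maximum eigenvalue $\|xA+B\|$ of a Hermitian pencil. The paper uses the i.i.d.\ Beta$(1,2)$ prior from~\cite{DaskalakisDT17}, whose unique optimal utility on a small interval is the rational function $u(x,1)=\tfrac{2-2x}{4-5x}$; then it shows that no such pencil can have this function as its top eigenvalue, by a characteristic-polynomial-and-pole argument: if $u(x)$ were a root of $\det(\lambda I-(xA+B))$, differentiating and multiplying through by $(4-5x)^{2^k+1}$ produces a nonzero limit as $x\to 4/5^-$, contradicting identical vanishing. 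This is an analytic obstruction specific to the \emph{max-eigenvalue-of-a-linear-pencil} form, not a dimension obstruction.

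Your proposed dimension/semialgebraic argument cannot separate one-way from interactive here: the EPR-assisted protocol also produces a menu lying in a bounded-dimensional semialgebraic family (indeed, the paper later shows all finite quantum protocols yield semialgebraic utility). What distinguishes the interactive case is that the buyer now maximizes $\langle\rho, C(y)\rangle$ over states with the \emph{partial-trace constraint} $\Tr_2(\rho)=I/2$ (the seller's half of the EPR pair is fixed), and this constrained maximum --- unlike an unconstrained top eigenvalue --- can produce a non-linear rational function in the type. The paper verifies this by an explicit POVM and a direct computation of the constrained optimum. Your ``two continuous degrees of freedom'' intuition is also off: the obstruction already appears on the one-dimensional slice $y=1$; there is no need for a two-parameter family of lotteries, and no reverse-engineering of a new prior is required since the Beta$(1,2)$ example from the literature already works.
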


However, in the worst case, even fully interactive quantum auction protocols cannot achieve optimal revenue in finite worst-case communication. 

\begin{theorem}[Limitations of finite interactive quantum auction protocols]\label{thm:non-algebraic}\hfill

    For the problem of auctioning two items to a single buyer, there is a Bayesian prior over independent item values, such that there is a revenue-optimal classical auction protocol that requires a constant number of bits {\em in exepctation}; yet no worst-case finite IC quantum auction protocol can achieve the optimal revenue.
\end{theorem}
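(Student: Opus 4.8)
The plan is to exhibit a two‑item prior whose \emph{unique} revenue‑optimal mechanism is ``transcendental,'' and to show that every finite quantum auction protocol can only realize mechanisms of \emph{bounded algebraic complexity}, so the two cannot coincide.

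\textbf{Step 1: finite quantum protocols realize semi‑algebraic mechanisms.} First I would prove a structural lemma: any finite quantum auction protocol induces a mechanism that is semi‑algebraic, of complexity bounded only in terms of the total communication. The point is that a finite protocol exchanges a bounded number of qubits and classical bits (and, w.l.o.g., uses boundedly many rounds), so even though the buyer may keep arbitrarily large local ancillas, the \emph{communicated} registers and the final classical outcome live in a fixed finite‑dimensional space. Modeling the buyer's strategy as a quantum comb over these bounded‑dimensional wires, the induced distribution over outcomes --- hence the expected allocation vector $Q\in[0,1]^2$ and payment $P$ --- is a polynomial of bounded degree in the (boundedly many) real parameters of a strategy, and the set of strategies is a compact semi‑algebraic set of bounded complexity. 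By Tarski--Seidenberg the reachable set $\mathcal{R}\subseteq[0,1]^2\times\mathbb{R}$ of outcome pairs $(Q,P)$ is compact and semi‑algebraic of bounded complexity. Since expected utility is linear in the outcome distribution, a buyer of type $v$ best‑responds to a point of $\mathcal{R}$ that maximizes $\langle Q,v\rangle-P$, so IC (Definition~\ref{def:quantumprotocol}) forces the honest outcome to lie among these maximizers; consequently the buyer's indirect‑utility function $u(v)=\max_{(Q,P)\in\mathcal{R}}\big(\langle Q,v\rangle-P\big)$ is the support function of $\mathcal{R}$, again semi‑algebraic, and the realized allocation/payment rule is a subgradient of $u$. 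In particular, if the protocol is revenue‑optimal for a full‑support prior, then the (essentially unique) optimal mechanism must be semi‑algebraic.

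\textbf{Step 2: a prior whose optimal mechanism is transcendental, but still classically cheap in expectation.} Next I would reverse‑engineer, via the convex‑analytic / optimal‑transport characterization of optimal two‑item mechanisms (à la Daskalakis--Deckelbaum--Tzamos and Hart--Reny): fix a convex candidate indirect utility $u^{*}$ on a bounded box with $\nabla u^{*}\in[0,1]^2$ that contains a genuinely transcendental feature --- e.g.\ a pooling region whose boundary is an arc of a transcendental curve such as $\{v_2=c\,e^{v_1}\}$ --- and then choose independent marginal densities $f_1,f_2$ for which $u^{*}$ satisfies the duality / complementary‑slackness conditions certifying optimality, and for which the optimum is unique almost everywhere. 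A transcendental arc lies in no algebraic curve, so $u^{*}$ is not semi‑algebraic; by Step 1, no finite quantum protocol can be revenue‑optimal for this prior. On the other hand $u^{*}$ is still tame (piecewise‑analytic on a bounded domain with rectifiable boundaries), so the randomized classical protocol of~\cite{RubinsteinZ21} --- a geometric ``zoom‑in'' on the relevant real parameter with continuation probability halving per scale --- implements it with $O(1)$ expected communication.

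\textbf{Main obstacle.} I expect Step 2 to be the crux: solving the inverse problem of producing a legitimate prior (nonnegative, normalized, independent across items) whose optimal mechanism is a prescribed transcendental mechanism, while simultaneously guaranteeing uniqueness of the optimum (so that a revenue‑optimal protocol must realize \emph{exactly} this transcendental $u^{*}$, not merely approximate it) and tameness for the $O(1)$‑expected classical protocol. A secondary obstacle is making Step 1 fully rigorous for arbitrary multi‑round protocols with unbounded local memory --- pinning down that ``finite'' bounds the dimension of every \emph{communicated} register (local ancillas being irrelevant since only the communicated registers and the classical outcome affect $\mathcal{R}$), and that the best‑response reduction holds under the paper's exact IC definition. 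With both in hand, the contradiction --- a semi‑algebraic indirect utility equal almost everywhere to a transcendental one --- closes the proof.
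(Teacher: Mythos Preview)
Your proposal is correct and follows essentially the same two-step strategy as the paper: (i) show that the buyer's indirect utility in any finite quantum auction protocol is semialgebraic (the paper does this via the SDP/Choi--Jamio{\l}kowski characterization of~\cite{GutoskiW07} together with Tarski--Seidenberg, which is exactly your quantum-comb-plus-projection argument), and (ii) exhibit a prior whose unique optimal mechanism has a non-semialgebraic indirect utility.

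The one place where the paper is simpler than your plan is Step~2: rather than reverse-engineering a prior from a prescribed transcendental $u^*$, the paper takes an example off the shelf. Giannakopoulos--Koutsoupias~\cite{GK18} already characterize the optimal two-item mechanism for i.i.d.\ truncated exponentials $f_1(x)=f_2(y)=\tfrac{e^{-x}}{1-1/e}$ on $[0,1]$ and show that, on a slice $y=1$, the unique optimal utility is $u(x,1)=2x+W(e^{1-x}(2-x))-1$, where $W$ is the Lambert $W$ function. The paper then gives a short direct argument (analytic continuation plus the exponential decay of $v(x)=W(e^{1-x}(2-x))$) that no nonzero bivariate polynomial can vanish on the graph of $u(\cdot,1)$, so $u$ is not semialgebraic. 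This sidesteps entirely the inverse problem you flag as the main obstacle --- constructing a legitimate independent-items prior whose optimum is a prescribed transcendental mechanism with certified uniqueness --- since~\cite{GK18} already supplies both the prior and the uniqueness. The $O(1)$-expected classical protocol then comes directly from~\cite{RubinsteinZ21}, as you note.
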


\subsection{Key Takeaway: Thinking About Quantum and Incentives Together}\label{sub:quantumplusincentives}

Communication complexity in game-theoretic setting and quantum communication complexity have each been studied extensively over the past few decades, but in separate lines of work, and to a large extent in disjoint communities. The key takeaway from our work is that we can unlock significant advantages by thinking of both together - advantages that are not possible by composing disjoint results for classical-strategic communication and quantum-cooperative communication. 

In particular, it is important to note that our quantum speed-ups are not achievable by a generic quantum speed-up on communication (for example, Holevo's theorem states that an $n$-qubit quantum state, even with infinite precision, can only carry up to $n$ classical bits accessible information \cite{Holevo73}). In fact, they cannot be derived from a quantum speed-up on any non-strategic communication problem. 

We further note that our infinite separations require quantum operations with infinite precision. While this is clearly not practical, it is the standard textbook model of quantum computing, and, to the best of our knowledge, no previous work on quantum-cooperative communication complexity exhibits infinite separations\footnote{ Compared with quantum advantages in interactive proofs,  e.g.~the celebrated $\mathsf{MIP}^* = \mathsf{RE} $~\cite{JNVWY21}, it is worth noting that the latter is only an ``unbounded'' separation, i.e.~that separation still needs the complexity of the $ \mathsf{MIP}^*$ provers to go to infinity. In contrast, we show separation of 3 qubits vs infinitely many classical bits.}.

We highlight some of the ways in which our results are distinct from previous work in either line of work in Figure~\ref{fig:surprise-quantum} and Figure \ref{fig:surprise-AGT}.

\begin{figure}[t]
\centering
\begin{tcolorbox}

{\bf Surprises for the quantum side:}
\begin{itemize}
    \item \textbf{Exponential quantum speedup on a natural problem (Theorem~\ref{thm:combinatorial-protocol}).} For combinatorial and unit-demand auctions, we find quantum protocols that are exponentially more efficient than any classical protocol. Existing exponential quantum-classical separations (in cooperative communication complexity) are for problems like Hidden Matching Problem~\cite{BZJK04} that were designed for the purpose of exhibiting a separation. In contrast, the strategic communication problem we study here was considered before in classical algorithmic game theory~\cite{BabaioffGN22, RubinsteinZ21}.

    \item \textbf{Infinite 1-way vs. 1-way + entanglement separation (Theorem~\ref{thm:interactive-vs-1-way}). }%
    Specifically, we show that a one-way protocol with pre-shared entanglement (an EPR pair) is infinitely more efficient than any one-way protocol with no shared entanglement. This unique separation does not exist in the cooperative quantum communication environment as the honest sender can always prepare the EPR pair and send one half through the channel.
    
    \item \textbf{Infinite quantum-classical separation (Theorem~\ref{thm:1-way-protocol}).} We construct an example where we can implement the optimal auction with a one-way quantum communication protocol with $3$ qubits in the worst case. However, no worst-case finite classical protocol can implement it.

\end{itemize}

\end{tcolorbox}

\caption{Summary of most surprising aspects of our results from quantum perspective.}
\label{fig:surprise-quantum}
\end{figure}

\begin{figure}
\centering
\begin{tcolorbox}

{\bf Surprises for the AGT side:}
\begin{itemize}
    \item \textbf{Infinite 1-way vs. interactive separation (Theorem~\ref{thm:interactive-vs-1-way}).} 
    We show an example where an interactive two-way quantum protocol is infinitely better than any one-way quantum protocol. This unique separation does not exist in the classical setting. Since with a trusted party (the seller in our setting), any classical protocol can be ``flattened'' to a one-way protocol incurring an exponential overhead in the worst-case communication complexity, yet this overhead remains finite.
    \item \textbf{Exponential lower bound on CC $\times$ payment (Theorem~\ref{thm:large-payments}).} We prove that our exponentially more efficient quantum protocol in Theorem~\ref{thm:combinatorial-protocol} necessitates an exponentially large payment, by establishing an exponential lower bound on the product of payment and communication complexity. This characteristic is distinctive to the quantum setting, as the exponential lower bounds for classical protocols~\cite{RubinsteinZ21} apply even with arbitrarily large payments.
    
\end{itemize}

\end{tcolorbox}

\caption{Summary of most surprising aspects of our results from AGT perspective.}
\label{fig:surprise-AGT}
\end{figure}

\paragraph*{Core conceptual idea: Efficient, samplable and verifiable distribution encodings.}
At a high level, the communication protocol of Bayesian auction boils down to the following task: The seller has a set $\mathcal{S}$ consisting of valid distributions of auction outcomes (allocation and payment). The buyer then selects a distribution $D$ from $\mathcal{S}$ that maximizes his utility. Subsequently, the seller draws a sample from the chosen distribution $D$ and outputs it as the outcome. In general, the complexity of describing a distribution is exponentially higher than specifying an outcome. It is worth noting that this task is completely trivial in a fully cooperative environment as the buyer can simply draw the outcome himself and send only this outcome to the seller. A natural quantum solution to this problem is that the buyer can efficiently encode $D$ in state $\sum_{x} \sqrt{D_x} \ket{x}$. To draw an element from the distribution, the seller only needs to measure this quantum state in the computational basis. However, this simple approach has a caveat -- in general it is information theoretically impossible for the seller to verify that this quantum state indeed encodes a valid distribution in set $\mathcal{S}$. To overcome this issue, our paper proposes two solutions. The first one is a general spot-check method -- with a small probability, after receiving the quantum state, the seller asks the buyer to send the whole classical description of the distribution; the seller verifies that the classical description is valid and the quantum state is close enough to this classical description. By imposing an exponentially large penalty, we ensure that the buyer is always incentivized to prepare a quantum state corresponding to a valid distribution while keeping the {\em expected} communication complexity low. The other solution works for the worst case communication complexity under some specific assumptions on the set of valid distribution of auction outcomes. These assumptions are satisfied for example by the canonical classically-hard example of~\cite{DaskalakisDT17}, but not in general (see Theorem~\ref{thm:non-algebraic}).  By carefully tailoring the quantum protocol to the desired auction, we can ensure that the space of distributions that the buyer can encode  coincides with the desired valid space.

\subsection{Additional Related Works}\label{sub:related}
Our work extends a rich tradition of studying mechanism design and game theory under the lens of communication complexity --- including auctions~\cite{NisanS06,BlumrosenNS07,BabaioffBS13,DobzinskiV13,DobzinskiNO14,Dobzinski16a,Dobzinski16b,BravermanMW16,Assadi17,BravermanMW18,EzraFNTW19,AssadiKSW20,BabaioffGN22, WeinbergZ22}, and also stable matching~\cite{GonczarowskiNOR19}, voting rules~\cite{ConitzerS05,ProcacciaR06a,CaragiannisP11,ServiceA12a},
 fair division~\cite{BranzeiN19,PlautR19}, computation of equilibria~\cite{ConitzerS04,HartM10,RoghgardenW16,GoosR18,GanorS18,BabichenkoDN19,BabichenkoR20,GanorSP21,BabichenkoR22}, and interdomain routing~\cite{LSZ11}. In particular, the communication complexity of IC implementing a mechanism  vs that of (non-IC) computing the outcome was the focus of~\cite{FadelS09,DobzinskiR21,RubinsteinSTWZ21,DobzinskiRV22}.

We show exponential separations (and for worst-case complexity --- infinite separations) between quantum and classical communication complexity of auctions. Earlier works on separating the two measures (in non-strategic settings) include
general boolean functions (constructed for obtaining separations)~\cite{ Raz99, BZJK04, Gav08, GKKRdW09, RegevK11}, sampling~\cite{ASTAVW03, Montanaro19}, and very recently also linear regression~\cite{MontanaroS22, TLWXH+22}.

Our work is also related to works on quantum game theory --- including nonlocal games~\cite{CHSH69, CHTW04}, quantization of classical games~\cite{EWL99, Meyer00}, quantum equilibria~\cite{Deckelbaum11, WeiZ13}, and quantum interactive games~\cite{GutoskiW07}. In particular, quantum interactive strategies are also studied in quantum interactive proofs~\cite{Watrous03, BSW10, JJUW10, NW19, JNVWY21}.


\subsection{Organization}
We begin with a review of the quantum communication model and mechanism design in Section~\ref{sec:prelim-quantum} and \ref{sec:prelim-MD}, and then introduce our main model of quantum auction protocols in Section~\ref{sec:model}. We give a high-level overview of all our proofs in Section~\ref{sec:overview}. Part~\ref{part:combinatorial} brings our results for expected communication: Our quantum auction protocol for combinatorial valuations in Section~\ref{sec:eps-IC-protocol}, and our lower bound against quantum auction protocols with bounded maximum payment in Section~\ref{sec:small-payments}. Part~\ref{part:2-items} focuses on worst-case communication: we begin with preliminaries of optimal 2-item auctions in Section~\ref{sec:prelim-2-items}; in Section~\ref{sec:1-way-protocol} we construct an example separating one-way quantum auction protocols from finite classical; in Sections~\ref{sec:limit-1-way} and~\ref{sec:barely-interactive} we separate interactive quantum auction protocols from one-way; and finally in Section~\ref{sec:finite-round} we show that in general no finite quantum auction protocol can guarantee optimal revenue.

\section{Preliminaries I: Quantum}\label{sec:prelim-quantum}
\paragraph{Bra-ket notation}
In this paper, we may occasionally use bra-ket notation. Specifically, within an $N$-dimensional complex vector space, we represent each unit-length column vector as a ket, denoted as $\ket{\phi}$. Correspondingly, for every unit-length vector $\ket{\phi}$, a bra $\bra{\phi}$ is defined as an $N$-dimensional row vector that is the conjugate transpose of $\ket{\phi}$.

Moreover, we use the notation $\ket{a}$ for $a \in \{1, \ldots, N\}$ to indicate the column vector with a value of $1$ at the $a$-th coordinate and $0$ in all other positions. We refer to $\ket{1}, \ldots, \ket{N}$ as the computational basis.

We employ the notation $\ket{\phi}$ to represent a pure quantum state associated with the density matrix $\ket{\phi}\bra{\phi}$. Inversely, a quantum state described by the density matrix $\rho$ is considered a pure state if there exists a $\ket{\phi} \in \mathbb{C}^{N}$ such that $\rho = \ket{\phi}\bra{\phi}$.

\paragraph{Closeness of states}

Given two positive semidefinite matrices $\rho, \sigma \in \mathbb{C}^{N \times N}$, the trace distance between them is defined as 

\[
T(\rho, \sigma) = \max_{F} \frac{1}{2} \sum_{\ell} \left | \Tr(F_{\ell}\rho) - \Tr(F_{\ell}\sigma)  \right|,
\]
where $\{F_\ell \}$ is maximized over all possible POVMs\footnote{a positive operator-valued measure (POVM) is a finite set of positive semidefinite Hermitian matrices that sum to identity. }.

In particular, when $\rho$ and $\sigma$ are density matrices, $T(\rho, \sigma)$ is equal to the total variation distance between classical distributions obtained by measuring two states maximized over all possible measurements.

Below is an important property of the trace distance:

\begin{equation}
    \label{eqn:tracedistanceswaptest}
    T(\rho, \sigma) \le \sqrt{1 - \Tr(\rho\sigma)}.
\end{equation}

\subsection{(Non-Strategic) Quantum Communicatiom Protocols}

We give an overview of multi-party quantum communication protocols. For readers who are familiar with quantum communication, this model is equivalent to the ones used in the literature (e.g.~\cite{Yao93}). A formal description of two-party strategic communication model is given in Section~\ref{sec:model}.  A multi-party quantum communication protocol is defined over a system of qubits, that are initially partitioned between the parties.
The protocol proceeds in rounds; in each round, one party can locally manipulate or measure her qubits, and then send a subset of them to other parties. The {\em communication complexity} of a protocol is the total number of qubits sent by parties across all rounds.  We now provide more detail on each of those components.

\paragraph{Quantum systems}
Let $m$ be an upper bound on the number of qubits in the system.\footnote{We assume for simplicity of notation that there is a finite upper bound on the total number of qubits. Our results can be generalized e.g.~to a setting where each party can add qubits in each round of the protocol, and a setting where local operators are defined by general quantum channels.}
The state $\rho^{(r)}$ of the system at the beginning of round $r$ can be mathematically represented as a {\em density matrix}%
\footnote{A matrix is a {\em density matrix} if it is a positive semidefinite, trace $1$ Hermitian matrix. {\em Hermitian} means that $A^{\dagger} = A$, where $A^{\dagger}$ is the conjugate transpose of $A$.} $\rho^{(r)} \in (\mathbb{C}^{2 \times 2})^{\otimes m}$. 
Note that $(\mathbb{C}^{2 \times 2})^{\otimes m} = \mathbb{C}^{2^m \times 2^m}$, i.e.~it is just a $2^m$-by-$2^m$ complex matrix; however, the former tensor product notation will be useful when we consider the qubits held by each party.

\paragraph{Initial state of the system}
Initially, each party holds $m_i^{(0)}$ qubits ($\sum_i m_i^{(0)} = m$). 
Because we're concerned with quantum protocols for mechanisms with classical inputs, we assume that initially all the qubits are not entangled (e.g.~the initial state is $\rho^{(0)} = (\ket{0}\bra{0})^{\otimes m}$).
In particular, it is important that qubits held by different parties are initially non-entangled.

\paragraph{Local histories}
A party's local history consists of the number of qubits that she sent and received in each round so far in the protocol, as well as the outcomes of measurements that she locally performed on her qubits (see more on measurements below).

\paragraph{Local manipulations: unitary operators and measurements}
In each round, before sending any qubits, the active party can locally apply quantum operations and measurements to the qubits that she currently holds. If $m_i^{(r)}$ is the number of qubits held by party $i$ at the beginning of round $r$, we can represent the state $\rho^{(r)}$ as a density matrix in $(\mathbb{C}^{2 \times 2})^{\otimes m_i^{(r)}} \otimes  (\mathbb{C}^{2 \times 2})^{\otimes m-m_i^{(r)}}$. Party $i$'s operations can transform the state into 
\[ \rho^{(r+1/2)} = (U \otimes I_{2^{m-m_i^{(r)}}})^{\dagger} \rho^{(r)}(U \otimes I_{2^{m-m_i^{(r)}}}), \]
where $I_{2^{m-m_i^{(r)}}}$ is the identity operator on qubits held by other parties, $U$ is a unitary operator of $i$'s choice, acting only on $i$'s qubits, and $\rho^{(r+1/2)}$ is the new state of the quantum system after applying the operator (but before the measurement). 

Similarly, party $i$ can measure her qubits. 
A POVM  is defined by $L$ matrices $\{A_{\ell} \}_{\ell=1}^L \in \mathbb{C}^{2^{m_i^{(r)}} \times 2^{m_i^{(r)}}}$ such that $\sum_{\ell} A^{\dagger}_{\ell} A_{\ell} = I_{2^{m_i^{(r)}}}$.
After applying the measurement, with probability \[\Tr\left ( \left (A_\ell \otimes I_{2^{m-m_i^{(r)}}} \right)^\dagger \left (A_\ell \otimes I_{2^{m-m_i^{(r)}}} \right )\rho^{(r+1/2)} \right),\] the state of the system is updated to 
$$  \rho^{(r+1)} = \frac{\left(A_\ell\otimes I_{2^{m-m_i^{(r)}}}\right)^\dagger \rho^{(r+1/2)}\left(A_\ell \otimes I_{2^{m-m_i^{(r)}}}\right)} {\Tr \left (\left(A_\ell \otimes I_{2^{m-m_i^{(r)}}}\right)^\dagger \left (A_\ell \otimes I_{2^{m-m_i^{(r)}}} \right)\rho^{(r+1/2)} \right )}.   $$

It is wlog for each party to perform the measurement after all the unitary operators in a given round.

\paragraph{Sending and receiving qubits}
After applying local operations, the active party sends exactly one of the qubits she holds to another party.
Sending qubits does not change the state of the quantum system, but it changes which party can operate on the sent qubits. 

Note that it is wlog to send e.g.~the last qubit, because locally qubits can be swapped by unitary operators.

\paragraph{Termination of the protocol}
The protocol may terminate after a pre-determined number of rounds, or by any party as a function of her private inputs and/or local history. 

\paragraph{Complexity of the protocol}
The main metric of complexity of the protocol that we use is the total number of qubits sent by different parties. We give bounds for the complexity of both in-expectation (over the outcome of quantum measurements) and worst-case communication. In addition to the total number of qubits, we will show that:
(i) In some cases it is possible to simplify protocols by replacing some qubits with classical bits, and
(ii) we also consider the effect of restricting the number of rounds of the protocol.

\subsubsection{Conventions}\label{sub:simplifications}
We make the following conventions to simplify both our notation and analysis:
\begin{itemize}
    \item There is no seperate channel for classical information. For convenience, when we say a message is intended to be classical, it means the receiver immediately measures the qubit in the computational basis ($\ket{0}, \ket{1}$).
    \item  For convenience, in a 2-party protocol, when we mention that the  a player sends $m$ consecutive qubits to the other, it technically means that this player sends these qubit over $m$ rounds and the other player responds with a dummy qubit in each round.
\end{itemize}

\subsubsection{Choi-Jamiołkowski representation of protocols and strategies}
Consider a quantum protocol with a fixed number of rounds $R$ and a fixed number of qubits sent in each round.
A {\em strategy} $s_i$ of a party $i$ who is active in $R_i$ rounds is a sequence of $R_i$ mappings applied to the qubits that it holds at each round, together with a measurement of its qubits at the end of the protocol. A {\em co-strategy} $s_{-i}$ is a sequence of mappings by other parties at their rounds (and finally a measurement). Notice that the tuple of protocol, strategy, and co-strategy, fully determine the distribution of measurement realizations at the end of the protocol.

Suppose that party $i$'s measurements have $L_i$ possible outcomes and the other parties have $L_{-i}$ possible outcomes. \cite{GutoskiW07} show that any strategy (resp, co-strategy) can be represented as $L_i$ (resp. $L_{-i}$) matrices of dimension that depend only on the communication complexity, and not on the additional (possibly very large) quantum memory of the parties. 
For ease of presentation, we state only the simplest form of the theorem that we need; in particular, we avoid the notation necessary for actually defining the Choi-Jamiołkowski representation.

\begin{theorem}[Choi-Jamiołkowski representation of strategies in interactive quantum protocols~\cite{GutoskiW07}]
\label{thm:choi}
Consider any $R$-round quantum protocol with communication complexity $K$, a party $i$ in the protocol, and strategy $s_i$ for $i$ and co-strategy $s_{-i}$ for the rest of the parties. Let $\Phi^{(s_i)}, \Psi^{(s_{-i})}$ denote the respective Choi-Jamiołkowski representation.  Then the probability of measuring outcome $(a_i,a_{-i})$ is given by 
\[
2^K \cdot \Tr\left (\Phi^{(s_i)}_{a_i}  \Psi^{(s_{-i})}_{a_{-i}} \right).
\]
Moreover, each of $\Phi^{(s_i)}_{a_i}$, $\Psi^{(s_{-i})}_{a_{-i}}$ is a $2^{2K}$ by $2^{2K}$ positive semidefinite Hermitian matrix, and $\sum_{a_i = 1}^{L_i} \Tr (\Phi^{(s_i)}_{a_i} ) = \sum_{a_{-i} = 1}^{L_{-i}} \Tr (\Phi^{(s_{-i})}_{a_{-i}} ) = 1$\footnote{In the original definition of~\cite{GutoskiW07}, Choi-Jamiołkowski representations $\Phi^{(s_i)}, \Psi^{(s_{-i})}$ are not normalized. Here, we normalize all Choi-Jamiołkowski representations (now they are all density matrices), that is why we have an additional $2^K$ factor in the probability of outcome compared to the original paper. }.
\end{theorem}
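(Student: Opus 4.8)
\textbf{Proof plan for Theorem~\ref{thm:non-algebraic}.}

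The plan is to exploit a rigidity-type phenomenon: the outcome probabilities of any finite quantum auction protocol — as functions of the buyer's valuation — are constrained to lie in a ``nice'' algebraic family, whereas the revenue-optimal mechanism for a carefully chosen prior forces a ``non-algebraic'' relationship. First I would invoke the Choi–Jamio{\l}kowski machinery (Theorem~\ref{thm:choi}) to fix, for any $R$-round protocol with communication complexity $K$, the optimal buyer strategy as a tuple of PSD matrices $\Phi^{(s_i)}_{a_i}$ of fixed dimension $2^{2K}$; then, exactly as in the argument behind Theorems~\ref{thm:1-way-protocol} and~\ref{thm:interactive-vs-1-way}, the buyer's equilibrium strategy for valuation $v = (v_1,v_2)$ is the solution of an optimization problem (maximize $\sum_{a_i}\sum_{a_{-i}} 2^K\Tr(\Phi^{(s_i)}_{a_i}\Psi^{(s_{-i})}_{a_{-i}})\cdot u(a_i,a_{-i};v)$ over PSD matrices with a trace constraint), i.e.~a semidefinite program whose objective is linear in $v$. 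The key structural step is to argue that the value of this SDP, and the induced allocation/payment functions, are \emph{semialgebraic} functions of $v$ — equivalently, that the buyer's indifference curves (the boundaries in $v$-space between regions where different discrete outcomes are chosen) are pieces of real algebraic varieties of degree bounded in terms of $K$. This should follow from Tarski–Seidenberg / quantifier elimination applied to the first-order formula describing the optimal SDP solution, together with the IC constraints that tie the protocol's ``promised'' allocation to the equilibrium outcome.

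Next I would design the prior $\mathcal{D}$ over independent item values so that the revenue-optimal two-item mechanism (using the toolkit from Section~\ref{sec:prelim-2-items}) has a menu whose ``indifference set'' between two adjacent menu options is a curve in $(v_1,v_2)$-space that is provably \emph{not} semialgebraic — for instance a transcendental curve such as a piece of $\{v_2 = e^{v_1}\}$ or $\{v_2 = \log v_1\}$, obtained by choosing the marginal densities so that the relevant virtual-value / ironing computation produces an exponential/logarithmic boundary. Concretely, I expect to reverse-engineer the prior from the desired optimal menu: pick a finite menu of lotteries whose optimal-purchase regions are separated by a transcendental arc, then verify via the standard sufficient conditions (e.g.~a dual certificate / Rochet-style subgradient argument, or the LP-duality characterization of optimal auctions) that this menu is indeed revenue-optimal for some independent product prior with well-behaved marginals. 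Since any such menu is finite, it has finite menu-size complexity, hence (by~\cite{RubinsteinZ21}, as quoted in the introduction) admits a classical auction protocol with $O(1)$ expected communication — giving the positive half of the statement — while the transcendence of the indifference curve contradicts the semialgebraicity forced on any finite quantum protocol, giving the impossibility half.

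Finally I would assemble the contradiction: suppose some finite quantum auction protocol achieves the optimal revenue on $\mathcal{D}$. By IC the buyer's equilibrium outcome for each $v$ must realize the optimal menu's allocation and payment (up to the measure-zero indifference set), so the protocol's allocation-as-a-function-of-$v$ agrees with the optimal menu a.e.; in particular the partition of $(v_1,v_2)$-space into distinct-outcome regions must have boundaries contained in the protocol's semialgebraic indifference sets. But one of the optimal menu's region boundaries is the transcendental arc, which cannot be contained in any real algebraic variety of positive codimension (a transcendental curve meets every such variety in a measure-zero set, so it cannot be \emph{covered} by finitely many of them) — contradiction. I expect the main obstacle to be the structural step: making rigorous that the optimal-buyer-strategy map $v \mapsto (\text{outcome distribution})$ is semialgebraic with the required uniformity in $K$. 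The delicate points are (i) that the buyer's best response is a \emph{selection} from a possibly set-valued SDP argmax, so one must either argue the argmax is generically a singleton or handle ties, and (ii) that passing through the IC fixed-point (the protocol's promised payments must themselves be consistent with the buyer's best response, which is itself parametrized by the same matrices) does not destroy semialgebraicity — this is fine because it is still a finite system of polynomial equalities and inequalities over reals, so Tarski–Seidenberg applies, but the bookkeeping must be done carefully. A secondary obstacle is exhibiting an explicit independent-item prior whose optimal menu has a genuinely transcendental indifference curve; here I would lean on known closed-form solutions for two-item additive auctions with simple marginals (e.g.~from~\cite{HartR15,ManelliV10}) and perturb them, or construct the marginals directly from the desired boundary via the inverse of the optimal-mechanism map.
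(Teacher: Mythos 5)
The statement you were asked to prove is Theorem~\ref{thm:choi}: the Gutoski--Watrous representation theorem asserting that for any $R$-round protocol with communication complexity $K$, a strategy and co-strategy admit Choi--Jamio{\l}kowski representations $\Phi^{(s_i)}_{a_i},\Psi^{(s_{-i})}_{a_{-i}}$ that are $2^{2K}\times 2^{2K}$ PSD matrices with unit total trace, such that the outcome probabilities are exactly $2^K\Tr(\Phi^{(s_i)}_{a_i}\Psi^{(s_{-i})}_{a_{-i}})$. Your proposal does not engage with this statement at all: it opens with ``Proof plan for Theorem~\ref{thm:non-algebraic}'' and proceeds to sketch the impossibility result for finite quantum auction protocols, \emph{using} Theorem~\ref{thm:choi} as a black box in its first step. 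A proof of the actual statement would have to construct the Choi representation of a sequence of local channels/measurements, establish the trace inner-product formula for the interaction of a strategy with a co-strategy, verify positive semidefiniteness, the dimension bound $2^{2K}$ (independent of the parties' local memories --- this is the nontrivial content), and the normalization $\sum_{a_i}\Tr(\Phi^{(s_i)}_{a_i})=1$. None of this appears in your write-up, so as a proof of the stated theorem it is vacuous. (For what it is worth, the paper itself does not reprove this theorem either; it imports it from Gutoski--Watrous.)

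As an aside, had the target actually been Theorem~\ref{thm:non-algebraic}, your plan is broadly aligned with the paper's route --- SDP characterization of the buyer's optimal strategy, semialgebraicity of the value function via Tarski--Seidenberg, and a prior whose optimal utility is analytic but not semialgebraic --- except that the paper does not need to construct a new prior with a transcendental indifference curve: it directly uses the exponential i.i.d.\ example of~\cite{GK18}, whose optimal utility on the slice $y=1$ involves the Lambert $W$ function, and it argues non-semialgebraicity of that one-variable function rather than reasoning about boundaries of outcome regions and argmax selections. But this comparison is moot: the submitted proof is for a different theorem than the one assigned.
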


\section{Preliminaries II: Mechanism Design}\label{sec:prelim-MD}

We consider the mechanism (auction) design for selling $n$ indivisible items to a single risk-neutral buyer. A buyer has a \emph{type} (valuation function) $v: 2^{[n]} \to \mathbb{R}_{\ge 0} $ specifying his value for each bundle (subset). We use $X$ to denote the type set, which contains all possible types of the buyer. For our purposes, it is important to define the two simplest and most widely studied class valuations:
\begin{itemize}
    \item \textbf{Additive} If there exists a value of each item $v_1, \ldots, v_n$ such that $v(S) = \sum_{i \in S} v_i$.
    \item \textbf{Unit-demand} If there exists a value of each item $v_1, \ldots, v_n$ such that $v(S) = \max_{i \in S} v_i$.
\end{itemize}
Some of our results also hold for more general classes of valuations%
\footnote{See e.g.~\cite{LehmannLN01} for definitions; they are not necessary for understanding our paper.}, which satisfy the following hierarchy of increasing generality:
\[
\text{additive, unit-demand} \subset \text{gross-substitutes} \subset \text{submodular} \subset \text{XOS} \subset \text{subadditive} \subset \text{combinatorial}.
\]
In addition to satisfying these structures, valuations are usually assumed to be monotone; our results hold for both monotone and non-monotone valuations.

Without loss of generality, we consider direct mechanisms. The buyer reports a type $v' \in X$ to the mechanism, and the mechanism then allocates a (randomized) bundle to the buyer and charges the buyer a price.  $\mathcal{M}$ consists of two functions.
\begin{itemize}
    \item An allocation function $\mathcal{A}: X \to [0, 1]^{2^{[n]}}$ gives the probability of allocating each bundle to the buyer declares to have each possible type.
    \item A payment function $\mathcal{Q}: X \to \mathbb{R}_{\ge 0} $ gives the price the buyer needs to pay for each declared type of the buyer.
\end{itemize}

Let $D$ be a distribution over bundles. With a slight abuse of notation, we denote by $v(D) = \mathbb{E}_{S \sim D} v(S)$ the expected value of the buyer with type $v$.

We say a mechanism $\mathcal{M}=(\mathcal{A}, \mathcal{Q})$ is \emph{incentive compatible} (IC) if 
\[
v(\mathcal{A}(v)) - \mathcal{Q}(v) \ge  v(\mathcal{A}(v')) - \mathcal{Q}(v') \quad \forall v, v' \in X.
\]

We say a mechanism $\mathcal{M}=(\mathcal{A}, \mathcal{Q})$ is \emph{$\varepsilon$-incentive compatible} ($\varepsilon$-IC) if 
\[
v(\mathcal{A}(v)) - \mathcal{Q}(v) \ge  v(\mathcal{A}(v')) - \mathcal{Q}(v') - \varepsilon \quad \forall v, v' \in X.
\]


We say a mechanism $\mathcal{M}=(\mathcal{A}, \mathcal{Q})$ is \emph{individually rational} (IR) if 
\[
v(\mathcal{A}(v)) - \mathcal{Q}(v) \ge 0 \quad \forall v \in X.
\]

We say a mechanism $\mathcal{M}=(\mathcal{A}, \mathcal{Q})$ is \emph{$\varepsilon$-individually rational} ($\varepsilon$-IR) if 
\[
v(\mathcal{A}(v)) - \mathcal{Q}(v) \ge -\varepsilon \quad \forall v \in X.
\]

For a mechanism $\mathcal{M}=(\mathcal{A}, \mathcal{Q})$, we denote by $u: X \to \mathbb{R}$ the expected utility of the buyer when he truthfully reports the valuation function. It follows from the definition that $u(v) = v(\mathcal{A}(v)) - \mathcal{Q}(v).$

\paragraph*{Revenue Maximization}
In this paper, we primarily focus on the revenue-optimal Bayesian mechanism design. In this setting, the buyer knows his type $v$ for sure. However, the seller only knows the probability distribution over $X.$ Let $f: X \to \mathbb{R}$ be the probability density function of this distribution.

The goal of revenue-optimal Bayesian mechanism design is to find an IC and IR mechanism $\mathcal{M}=(\mathcal{A}, \mathcal{Q})$ that maximizes the revenue of the seller:
\[
Rev = \int_{X} \mathcal{Q}(v) f(v) \dd v.
\]

\section{Quantum Communication Model with a Strategic Player}\label{sec:model}

We now introduce the main strategic communication model of this work, which formally defines the elements of a two-player quantum communication protocol subject to strategic manipulation.   In essence, when the length of the protocol is fixed, this model is equivalent to the one used in the literature of quantum games and quantum interactive proofs  (see e.g.~\cite{GutoskiW07}). The communication is between one strategic player (we call it the buyer), and a truthful player (we call it the seller). In this setup, the seller initially possesses $n$ qubits, while the buyer has 
 $m$ qubits, with $S$ representing the finite set of possible communication outcomes. Initially, the joint state is $\ket{0}^{\otimes (n + m)}$. The communication proceeds in rounds (or steps). Since we will cover protocols with infinite worst-case communication complexity (but bounded expected communication), we do not specify the total number of rounds in our model. In each round, one of the player performs a local operation on qubits in their hand and then sends \emph{one} qubit to the other player. Or you can alternatively think there is a one-qubit register shared by both players. We assume the seller takes the first round and then they alternate in the following rounds.

\paragraph*{The seller's operations.}
Without loss of generality, we assume the seller only performs general measurements in her rounds\footnote{If the seller simply want to apply a unitary $U$, she can do it by letting  $A^i_{\perp} = U$, and $A^i_{x} = 0$ for any $x \in S$.}. For round $i$, let $\{A^i_x\}_{x \in S \cup \{ \perp \}}$, such that $\sum_{x \in S \cup \{ \perp \}} (A^i_{x})^\dagger A^i_x = I_{2^n}$, be the seller's measurements. Let $h_i \in S \cup \{ \perp\}$ be the measurement outcome of round $i$. For each round $i$, if $h_i = \perp$ then communication continues, otherwise the seller terminates the communication and outputs $h_i$ as the outcome\footnote{By the principle of deferred measurement (see e.g. Chapter 4.4 of ~\cite{NielsenC01}), any non-terminating measurements outcomes can be removed by adding more qubits to the system. Therefore, it is wlog to only consider measurement with at most one non-terminiating outcome($\perp$) each round.}.

\paragraph*{The buyer's operations.}
In our model, only seller can terminate the protocol and output an outcome. Therefore, by the principle of deferred measurement (see e.g. Chapter 4.4 of ~\cite{NielsenC01}), we wlog assume the buyer makes no measurement\footnote{By the principle of deferred measurement, the buyer can always obtain the same outcome distribution by adding more qubits to his local memory and removing measurements. Since the buyer is allowed to choose an arbitrarily large memory size $m$ (we will discuss it later), it is wlog not to consider buyer's measurement.}, and his only local operation in round $i$ is a unitary $U_i$ with dimension $2^{m+1}$, for $i = 2, 4, \ldots$.

\paragraph*{The seller's strategy.}
A seller's strategy  includes the following elements:
\begin{itemize}
    \item Set of communication outcomes: $S$.
    \item The size of initial local quantum memory: $n$.
    \item General measurements: $\{A^i_{x} \}_{x \in S \cup \{ \perp \}} $, for $i = 1, 3, 5, \ldots$.
\end{itemize}

\paragraph*{The buyer's strategy.}
A buyer's strategy  $s^{\text{buy}}$ includes the following elements:
\begin{itemize}
    \item The size of initial local quantum memory: $m$.
    \item Unitary operators: $U_i$, for $i = 2, 4, 6, \ldots$.
\end{itemize}

\paragraph*{Quantum auction protocols.}

Our objective is to implement an auction using the strategic quantum communication model previously outlined. Our quantum auction protocols are a generalization of classical auction protocols defined in~\cite{RubinsteinZ21}. The classical auction protocols are also a special case of Bayesian incentive compatibility (BIC)-incentivizable binary dynamic mechanism (BDM) defined in~\cite{FadelS09}. For simplicity, here we only define the quantum analog for auctions, but exploring the quantum communication complexity of mechanisms more broadly is an interesting direction for future work. 

\begin{definition}[Quantum auction protocols] \label{def:quantumauction}
    A \emph{quantum auction protocol} $\mathcal{P}$ that sells $n$ items to a single buyer with type space $X$ consists of:
\begin{itemize}
    \item A seller's strategy $s^{\mathcal{P}}_{\text{seller}}$ such that the outcomes in the outcome set $S$ are in the form $(B, p)$, where $B \subseteq [n]$ is a subset of items and $p \in \mathbb{R}_{\ge 0}$ is the price. 
    \item A suggested strategy function $s^{\mathcal{P}}_*$ that maps each valuation in the type space $X$ to a buyer's strategy.
\end{itemize}
\end{definition}

Given a seller's strategy $s_{\text{seller}}$, and a buyer's strategy $s_{\text{buyer}}$ let $D(s_{\text{seller}}, s_{\text{buyer}})$ be the outcome distribution of the communication\footnote{Throughout the paper we only consider seller's strategies which guarantee termination within finite steps with probability $1$ for any buyer's strategy.}. Let $\mathcal{A}(s_{\text{seller}}, s_{\text{buyer}})$ be the marginal distribution of the first component of $D(s_{\text{seller}}, s_{\text{buyer}})$. So, $\mathcal{A}(s_{\text{seller}}, s_{\text{buyer}})$ is a distribution over subsets of $[n]$. Let $\mathcal{Q}(s_{\text{seller}}, s_{\text{buyer}})$ be the expected value of the second component (price) of $D(s_{\text{seller}}, s_{\text{buyer}})$.

With definitions above, we say a quantum auction protocol $\mathcal{P}$ \emph{implements} the mechanism \[\mathcal{M}^\mathcal{P} = \left ( \mathcal{A}(s^{\mathcal{P}}_{\text{seller}}, s^{\mathcal{P}}_{*}(\cdot)), \mathcal{Q}((s^{\mathcal{P}}_{\text{seller}}, s^{\mathcal{P}}_{*}(\cdot))) \right).\]

Further, we say a quantum auction protocol $\mathcal{P}$ is $\varepsilon$-IC if for any type $v \in X$, and any buyer's strategy $\hat{s}$, the following holds,
\[
v \left ( \mathcal{A}(s^{\mathcal{P}}_{\text{seller}}, s^{\mathcal{P}}_{*}(v)) \right) -  \mathcal{Q}(s^{\mathcal{P}}_{\text{seller}}, s^{\mathcal{P}}_{*}(v))  \ge v \left ( \mathcal{A}(s^{\mathcal{P}}_{\text{seller}}, \hat{s}) \right) - \mathcal{Q}(s^{\mathcal{P}}_{\text{seller}}, \hat{s}) - \varepsilon.
\]
By definition, the mechanism implemented by an $\varepsilon$-IC quantum auction protocol is an $\varepsilon$-IC mechanism.

We say quantum auction protocol $\mathcal{P}$ is $\varepsilon$-IR if for any type $v \in X$
the following holds,
\[
v \left ( \mathcal{A}(s^{\mathcal{P}}_{\text{seller}}, s^{\mathcal{P}}_{*}(v)) \right) -  \mathcal{Q}(s^{\mathcal{P}}_{\text{seller}}, s^{\mathcal{P}}_{*}(v))  \ge - \varepsilon.
\]
By definition, the mechanism implemented by an $\varepsilon$-IR quantum auction protocol is an $\varepsilon$-IR mechanism.

Exactly IC and IR quantum protocols are defined similarly.

The main goal of our paper is to find an ($\varepsilon$-)IC and ($\varepsilon$-)IR quantum auction protocol that implements the revenue-optimal mechanism.

\section{Technical Overview}\label{sec:overview}
This section gives a brief and informal, high-level overview of our proofs.

\subsubsection*{Technical highlights: quantum auction protocol for combinatorial valuations} 
The starting point for this result is a mechanism $\cM$ that, for any buyer's type, samples an allocation from a distribution over $B$ possible subsets of items. As the first warm-up, we could let the buyer sample his allocation and specify it in $\log(B)$ bits; this is quite communication-efficient, but it is not IC: the buyer can always claim that he ``sampled'' his favorite subset $b^*$. 

Our second warm-up protocol is already quantum: we ask the buyer to encode the distribution $D(v)$ corresponding to his type $v$ as the quantum state $\ket{\psi_{D(v)}}=\sum_{b \in B} \sqrt{\Pr_{D(v)}[b]} \ket{b}$; this protocol uses $\log(B)$ qubits, and the seller can sample exactly from the correct distribution by measuring the state sent by the buyer. However, it is again not IC since the buyer can just send $\ket{b^*}$ for his favorite outcome $b^*$ --- the seller cannot distinguish between this and the intended message $\ket{\psi_{D(v)}}$ after she measures the latter. 

The strategic situation with the previous attempt is actually worse: even if the seller didn't already destroy the buyer's message by measuring it to sample the allocation, in general, there is no quantum measurement that can distinguish between (i) the set of valid $\ket{\psi_{D(v)}}$ that correspond to some lottery from the mechanism (aka for some type $v$), and (ii) the set of invalid $\ket{\psi_{D'(v)}}$ that don't correspond to an allowable lottery but may be preferable for the buyer.

Our key idea is to modify this non-IC quantum protocol by spot-checking the buyer: with low probability, we ask the buyer to resend the entire classical encoding of the same distribution (because this second stage happens with low probability, it has a negligible effect on the expected communication complexity). Given the classical encoding, the seller can (i) verify that the distribution is feasible; (ii) use the classical description to sample an outcome; and (iii) use a quantum measurement on the right basis to verify that the original quantum message was indeed close to the specific feasible distribution --- and penalize the buyer with a big payment if they fail this test.

\subsubsection*{Technical highlights: a lower bound on $(\text{CC}) \times (\text{payment})$}
This proof largely follows the framework of~\cite{RubinsteinZ21}: (i) reduce proving a lower bound on expected CC to worst-case CC; and (ii) use a counting argument to show that there aren't enough low-worst-case CC auction protocols to cover the large number of different priors that require different mechanisms. 

For (i), we simply have to truncate the protocol in the unlikely event that it significantly passes the expected communication. Because these events are very unlikely (by Markov's inequality), we expect the truncation to have a negligible effect on the expected buyer's incentives --- unless the protocol may charge the buyer a particularly high payment after an unlikely long interaction (notice that this is indeed what our efficient protocol does!). 

For (ii), a naive approach of counting low-CC protocols runs into the obstacle that the few qubits exchanged in the protocol can be entangled with each party's unbounded local memory in an arbitrarily complex way. 
We overcome this obstacle with a characterization due to~\cite{GutoskiW07} of quantum co-strategies that use the Choi-Jamiołkowski representation and are independent of the size of the local memory.

\subsubsection*{Technical highlights: worst-case quantum communication complexity of two-item auctions}
Our results in the second part of the paper build on works of~\cite{GK18,DaskalakisDT17} who characterize the revenue-maximizing auctions of various priors over independently, continuously distributed values $v_1,v_2 \in [0,1]$ for two items (assuming additive buyer's utilities). Specifically, they uniquely characterize the mechanism by the expected utility of the buyer $u(v_1,v_2)$ as a function of his values.

\paragraph{One-way quantum auction protocols}
Given a one-way quantum auction protocol we can write matrices $A^{(1)},A^{(2)},A^{(\text{pay})}$ that capture the seller's measurement of buyer's message such that the following hold: (i) Given a buyer with values $v_1,v_2$ for the items, his optimal strategy is to send a message that is an eigenvector of the matrix 
$$A(v_1,v_2) := (v_1 A^{(1)} + v_2 A^{(2)} - A^{(\text{pay})}),$$ 
and his expected utility is the maximum eigenvalue of the same matrix. So now we can equate the maximum eigenvalue of $A(v_1,v_2)$ with the expected utility function in the characterization of the optimal auction.
We observe the following:
\begin{itemize}
    \item A mechanism can be implemented by a finite classical mechanism iff the $u(\cdot,\cdot)$ is a piecewise linear function (with a finite number of pieces).
    \item If $u(\cdot,\cdot)$ has non-linear asymptotics, in particular, if it is a non-linear polynomial or rational function,  it cannot be equal to the maximum eigenvalue of $A(v_1,v_2)$.
\end{itemize}
We use an example from~\cite{DaskalakisDT17} with non-linear rational utility to show that some mechanisms cannot be implemented by finite one-way quantum protocols. We construct another example with utility that is non-linear but has the asymptotics of a linear function (specifically $u(v_1,v_2) =  v_1 - \frac{49}{24} + \frac14 (3 v_2 + \sqrt{121/4 - 10 v_2 + v_2^2})$) and show that it can be implemented by a one-way quantum protocol, but not by any finite classical protocol. 

\paragraph{Entanglement + one-way quantum auction protocols}
For the aforementioned example where the utility is a (non-linear) rational function and cannot be implemented by a one-way quantum auction protocol, we nevertheless give a finite interactive quantum auction protocol. Our protocol takes perhaps the simplest possible form for an interactive quantum protocol: the seller prepares an EPR pair and sends one of the qubits to the buyer; the buyer operates on his qubit from the EPR pair based on his private type and then sends it back to the seller. 

\paragraph{Limitations of finite quantum auction protocols}
We have seen that interactive multi-round quantum auction protocols are much richer than one-way quantum or finite classical protocols. 
However, even in this general case, the optimal strategy in any finite quantum protocol can be computed using an SDP~\cite{GutoskiW07}. In particular, the buyer's utility as a function of his values is always a semialgebraic function.
We use a construction due to~\cite{GK18} of a prior where the buyer's utility in the optimal mechanism is analytic but not semialgebraic to show that in general finite quantum auction protocols fail to achieve optimal revenue.

\part{Expected communication}\label{part:combinatorial}

\section{$\varepsilon$-IC Quantum Protocols for General Valuations}\label{sec:eps-IC-protocol}

Consider selling $n$ items to a single buyer with combinatorial valuations drawn from prior $\D$. We will show that for any direct IC mechanism $\mathcal{M}$ that only ever allocates $B$ different deterministic bundles, there is an $\varepsilon$-IC quantum protocol with the same expected payment for every type using $O(\log B)$ qubits of communication in expectation. Note that our protocol holds for arbitrarily small $\varepsilon$, and the constant factor in $O(\log B)$ does not depend on $\varepsilon$. 

Moreover, by employing a standard $\varepsilon$-IC-to-IC reduction that discounts all the payments by $(1-\sqrt{\varepsilon})$-factor, we can transform an $\varepsilon$-IC quantum protocol into an exactly IC quantum protocol that has the same expected communication complexity, while incurring only a negligible loss in revenue (see e.g.~\cite[Theorem 7]{GonczarowskiW18}).

\begin{theorem*}[Theorem~\ref{thm:combinatorial-protocol} restated] \hfill

    Let $\cal{D}$ be a prior over buyer's combinatorial valuations over $n$; assume all valuations are in the range $[0,U]$. 
    Let $\cM$ be any mechanism that can only possibly allocate one of $B$ subsets of the items. Finally, let $\delta > 0$ be any parameter ($\delta$ may be a function of $n$ or $B$). Then there is an IC auction quantum protocol that guarantees a $(1-\delta)$-fraction of $\cM$'s expected revenue using $O(\log(B))$ qubits in expectation.
\end{theorem*}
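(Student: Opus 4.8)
The plan is to (i) reduce to the case where $\cM$ has a \emph{finite} menu, (ii) construct a three‑round quantum protocol whose third round is reached only with tiny probability, and (iii) invoke the quoted $\varepsilon$-IC‑to‑IC reduction. By the taxation principle, $\cM$ is described by a menu of (lottery, price) pairs, where each lottery is a distribution over the $B$ allowable bundles and, by IR, each price lies in $[0,U]$. By a standard discretization argument — at the cost of at most a $\delta/3$ fraction of the revenue — we may assume this menu is finite, say $\{(D_j,q_j)\}_{j=1}^N$, and we keep writing $\cM,\mathcal{A},\mathcal{Q},u$ for this finite‑menu mechanism; alternatively, one skips this step and instead has the buyer transmit, in the rare third round below, a sufficiently precise classical description of his lottery and price, which the seller checks against the (known) menu.

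\textbf{The protocol.} For a buyer of type $v$, let $(D,q)=(\mathcal{A}(v),\mathcal{Q}(v))$ be his favorite menu entry and let $j^\star$ be its index. In round $1$ the buyer sends the $\lceil\log_2 B\rceil$‑qubit state $\ket{\psi_D}=\sum_b\sqrt{D[b]}\,\ket{b}$. In round $2$ the seller flips a private coin: with probability $1-p$ (``deliver'') she measures the received qubits in the computational basis, obtaining $b\sim D$, allocates $b$, and charges $0$; with probability $p$ (``audit'') she keeps the state and asks the buyer for the index $j^\star$ (a $\lceil\log_2 N\rceil$‑bit classical message). On audit, given an index $j$ she applies the two‑outcome projective measurement $\{\ket{\psi_{D_j}}\!\bra{\psi_{D_j}},\ I-\ket{\psi_{D_j}}\!\bra{\psi_{D_j}}\}$; on ``accept'' she allocates $b\sim D_j$ and charges $q_j/p$, and on ``reject'' she allocates $\emptyset$ and charges a large penalty $Q_{\max}$. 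Choosing $p=\Theta(\log B/\log N)$ makes the expected communication $O(\log B)$; this is independent of $U$, which enters only through $N$ and through the magnitudes of the payments $q_j/p$ and $Q_{\max}$ (the source of the large‑payment caveat of Theorem~\ref{thm:large-payments}).

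\textbf{Incentives.} An honest buyer passes the test with certainty, so his expected payment is $(1-p)\cdot 0+p\cdot q/p=q=\mathcal{Q}(v)$ — hence the protocol's revenue equals $\cM$'s — and his expected utility is $v(D)-q=u(v)\ge 0$, giving IR. For a deviating buyer, a routine linearity argument over his private randomness and ancilla (the seller's outcome distribution is affine in the reduced state $\sigma$ of the sent qubits for each fixed claimed index) lets us assume he sends a pure state $\ket{\phi}=\sum_b\alpha_b\ket{b}$ and claims a fixed index $j$; put $\mu[b]:=|\alpha_b|^2$. In the deliver branch his utility is $v(\mu)$; in the audit branch the test accepts with probability $|\langle\psi_{D_j}|\phi\rangle|^2\le\big(\sum_b\sqrt{D_j[b]\,\mu[b]}\big)^2=:1-r$, so the reject probability is $\ge r$ and his audit‑branch utility is at most $(1-r)(v(D_j)-q_j/p)+r(-Q_{\max})$. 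Using the two inequalities $v(\mu)-v(D_j)\le\tfrac12 U\|\mu-D_j\|_1\le U\sqrt{r}$ (the value gained by tilting the reported lottery is controlled by the \emph{square root} of the probability of being caught) and $v(D_j)-q_j\le u(v)$ (IC of the finite‑menu $\cM$), his total utility is at most $u(v)+2U\sqrt{r}-prQ_{\max}$, which is at most $u(v)+U^2/(pQ_{\max})$ after maximizing over $r\in[0,1]$. Hence $Q_{\max}:=U^2/(p\varepsilon)$ makes the protocol $\varepsilon$-IC and $\varepsilon$-IR with $\varepsilon:=(\delta/3)^2$. Finally apply the standard $\varepsilon$-IC‑to‑IC reduction that discounts all payments by a $(1-\sqrt\varepsilon)$ factor (\cite[Theorem~7]{GonczarowskiW18}); it leaves the communication unchanged, preserves IR, and loses only a further $\sqrt\varepsilon=\delta/3$ fraction of the revenue, so the resulting exactly‑IC quantum protocol collects at least $(1-\delta/3)^2\,\mathrm{Rev}(\cM)\ge(1-\delta)\,\mathrm{Rev}(\cM)$ using $O(\log B)$ qubits in expectation.

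\textbf{Main obstacle.} The delicate part is the incentive bound for deviating buyers: a single, type‑independent (but inevitably exponentially large) penalty $Q_{\max}$ must defeat \emph{every} deviation, including a buyer who tilts his reported lottery by an arbitrarily small amount. This succeeds precisely because the extra value such a tilt can buy is only $O(U\sqrt{r})$ while the expected penalty it incurs is $\Theta(prQ_{\max})$, so the penalty dominates once $Q_{\max}$ is large enough; promoting the argument from the pure‑state case to arbitrary entangled deviating strategies is handled by the linearity reduction above (or, alternatively, by the Choi–Jamiołkowski representation of Theorem~\ref{thm:choi}).
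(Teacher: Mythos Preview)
Your proposal is correct and follows the same core idea as the paper: encode the lottery as a superposition over bundles, with high probability measure in the computational basis to sample the allocation, and with small probability audit by asking for a classical description, applying the projective test $\{\ket{\psi_{D_j}}\!\bra{\psi_{D_j}},\,I-\ket{\psi_{D_j}}\!\bra{\psi_{D_j}}\}$, and penalizing on failure; then invoke the $\varepsilon$-IC-to-IC reduction. The $\varepsilon$-IC analysis via the fidelity/trace-distance bound and a large penalty is also the same in spirit.

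There are, however, two implementation-level differences worth noting. First, \emph{payments}: the paper folds the price into the sampled outcome by splitting each bundle into two outcomes with payment $0$ or $U$ (so the deliver branch already charges correctly), whereas you charge $0$ in the deliver branch and the amplified price $q_j/p$ only in the audit branch. Your scheme is simpler but concentrates all payment in the rare branch (which is fine in expectation). Second, \emph{the audit message}: you discretize the menu upfront to $N$ entries and have the buyer send a $\lceil\log N\rceil$-bit index, tuning $p=\Theta(\log B/\log N)$; the paper avoids discretization entirely by having the buyer send $O(B\log B)$ classical bits describing the first $2\log B$ bits of each probability and then, if needed, continuing a bit-by-bit classical protocol, with audit probability $\Theta(1/B)$. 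Your route is shorter but costs an extra $\delta/3$ in revenue for the discretization; the paper's route preserves the exact allocation/payment of $\cM$ for the honest buyer at the price of a more elaborate audit sub-protocol. Either choice yields the claimed $O(\log B)$ expected communication.
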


We first present a modification of an $O(B\log(B))$-bits classical protocol given in~\cite{RubinsteinZ21}. Then, we augment it with a single quantum message from the buyer to the seller that gives an exponential improvement in the expected communication. 

\subsection{Warm-up: An Inefficient Classical Protocol~\cite{RubinsteinZ21}}\label{sub:classical-protocol}

First note that if the maximum value of the buyer over any bundle is at most $U$, any IC and IR direct mechanism $\mathcal{M}$ that only ever allocates $B$ different bundles can be converted into an equivalent direct mechanism $\mathcal{M}'$ that only ever allocates $B$ different bundles and always charges either $0$ or $U$. This implies the new mechanism $\mathcal{M}'$ has $2B$ different deterministic outcomes (each deterministic outcome is a bundle-payment pair). Suppose $B$ different bundles allocated in $\mathcal{M}$ are $\pi_1, \ldots, \pi_B$, then $\mathcal{M}'$ creates $2$ outcomes $(\pi_j, 0), (\pi_j, U)$ for each bundle in $\mathcal{M}$. For each type, $v$, $\mathcal{M}$ has a distribution over $B$ bundles and an expected payment $P$ (by IR, $P \le U$). Then $\mathcal{M}'$ defines a distribution over $2B$ outcomes for each type $v$: suppose in $\mathcal{M}$ the proability of receiving $\pi_j$ is $p_j$, $\mathcal{M}'$ gives outcome $(\pi_j, U)$ with probability $p_j \cdot \frac{P}{U}$ and outcome $(\pi_j, 0)$ with probability $p_j \cdot (1 - \frac{P}{U}).$

We first present a classical (randomized) IC protocol that implements $\mathcal{M}'$ with $O(B \log B)$ expected communication complexity. This protocol differs slightly from the one in~\cite{RubinsteinZ21} and serves as a more straightforward foundation for constructing a quantum protocol. A probability distribution over $2B$ outcomes can be represented by $2B$ non-negative real numbers $p_1, \ldots, p_{2B}$ such that $\sum_{i=1}^{2B} p_j = 1.$  We call a distribution {\em feasible} if it is a distribution over $2B$ outcomes for some type in mechanism $\mathcal{M}'$. At each round, the buyer sends $2B$ bits, and the seller either terminates the protocol with an allocation and a payment or continues by moving to the next round and letting the buyer send more bits. 

\paragraph{Buyer's suggested strategy} Given the buyer's type and mechanism $\mathcal{M'}$, we denote by $p_1, \ldots, p_{2B}$ the probability of each outcome. The buyer sends $2B$ bits each round. The suggested strategy is to send, in the $r$-th Buyer round, the $r$-th bit of the binary representation of $p_1, \ldots, p_{2B}$, respectively.

\paragraph{Seller's strategy} After receiving each bit, the seller first checks if all buyer's messages so far are consistent with some feasible distributions, which means messages are binary prefixes of probabilities corresponding to some feasible distributions.  When there is only one possible value for the next bit that is consistent with some feasible distribution, then the seller sets the value of the next bit of the message to be the only feasible value and ignores the original bit of the message. 

Let $m^{(r)}_j$ be the $j$-th bit of message receives at round $r$. For round $r$, we denote by $\tau^{(r)}$ the total probability revealed so far, i.e. 
\[
   \tau^{(r)} = \sum_{t = 1}^r \sum_{j=1}^{2B} m^{(t)}_j \cdot 2^{-t}.
\]
In addition, we define $\tau^{(0)} = 0.$ After receiving all $2B$ bits of message at round $r$, the protocol terminates with probility $\frac{\tau^{(r)} - \tau^{(r-1)}}{1-\tau^{(r-1)}}$. Conditioned on terminating, the protocol assigns allocation and payment of outcome $j$ of $\mathcal{M}'$ with probability $ \frac{m_{j}^{(r)} \cdot 2^{-r}}{\tau^{(r)} - \tau^{(r-1)}}$ for each $j$.
Finally, with probability $1 - \frac{\tau^{(r)} - \tau^{(r-1)}}{1-\tau^{(r-1)}},$ the protocol continues. It is important to note that the protocol terminates in no longer than $r$ rounds with probability $\tau^{(r)}.$

\paragraph{IC} Since the protocol automatically keeps all messages consistent with a feasible distribution and since the buyer learns nothing about the seller's randomness other than that the protocol continues, it is easy to see that each strategy corresponds to some feasible distribution associated with a type (specifically, the distribution corresponding to the concatenation of all buyer's messages should the protocol continue indefinitely). Suppose strategy $s$ corresponds to a feasible distributions $\mathcal{D} = (p_1, \ldots, p_{2B}).$ We will show that for each outcome $i$, the probability of getting outcome $j$ when the buyer plays strategy $s$ is exactly $p_j$. Let 
\[ p_i = a_1\cdot 2^{-1} + a_2\cdot 2^{-2} + \cdots
\] be the binary representation of $p_j$.
Then the  the probability of realizing outcome $j$ when the buyer plays strategy $s$ is 
\[
 a_1\cdot 2^{-1} +  (1-\tau^{(1)}) \frac{a_2\cdot 2^{-2}}{1-\tau^{(1)}} + (1-\tau^{(2)}) \frac{a_3\cdot 2^{-3}}{1-\tau^{(2)}} + \cdots , 
\]
which is exactly equal to $p_j.$ Therefore, in this communication protocol the outcome distribution of a strategy associated with type $v$ is exactly the same as the outcome distribution of $v$ in mechanism $\mathcal{M}'$. Finally, by the fact that $\mathcal{M}'$ is IC, the suggested strategy is optimal for the buyer.

\paragraph{Communication complexity}

We will show that the protocol ends in $O(\log B)$ rounds in expectation. First note that the probability that the protocol does not end in $r$ rounds is $1 - \tau^{(r)} $. Next, by the fact that each strategy corresponds to a probability distribution $(p_1, \ldots, p_{2B})$, $1 - \tau^{(r)} = (\sum_{j}^{2B} p_j ) - \tau^{(r)} = \sum_{t=r + 1}^{\infty} \sum_{j=1}^{2B} m^{(r)}_j \cdot 2^{-t} \le 2B \cdot 2^{-r}.  $  As a result, the probability that the protocol does not end in $t$ rounds is at most $2B \cdot 2^{-r}.$  We can write the expected number of rounds $\mathbb{E} [T]$ as the following expression.
\begin{equation}
    \begin{aligned}
        \mathbb{E}[T] & = \sum_{t=1}^{\infty} \Pr[\text{protocol ends in $\ge t$ rounds}] \\
        & \le \sum_{t=0}^{\infty} 2\log B \cdot \Pr[\text{protocol ends in $> 2t\log B$ rounds}] \\
        & = 2 \log B + 2 \log B \cdot \sum_{k=1}^{\infty} \Pr[\text{protocol ends in $> 2k\log B$ rounds}] \\
        & \le 2 \log B + 2 \log B \cdot \sum_{k=1}^{\infty} 2B \cdot 2^{-2k \log B} \\
        & = 2 \log B + 4 \log B \cdot \sum_{k=1}^{\infty} \cdot B^{-2k + 1} \\
        & = O(\log B),
    \end{aligned}
\end{equation}
as desired.

Finally, the communication cost for each round is $O(B)$ bits. So the overall expected communication complexity is $O(B \log B)$.

\subsection{Quantum Protocol}
The idea of our quantum protocol is to replace all classical bits the buyer sends in the first $2 \log B$ rounds (total of $O(B \log B)$ bits) with a single message with $ \log(B) + 1  $ qubits and $2 \log B$ classical bits. Note that $ \log(B) + 1  $  qubits can encode an arbitrary distribution with support size $2B$ (see details below). The buyer is supposed to encode the distribution associated with his type into a quantum state. Then the seller can measure the quantum state and decide the allocation and payment according to the outcome. Here, the problem with this approach is that the buyer might encode an infeasible distribution (not associated with any type), and the seller has no way to tell if a quantum state encodes a feasible distribution. 

To overcome this challenge, the seller will,  with high probability, blindly trust the buyer and determine the allocation and payment according to the measurement outcome. However, with a small probability, rather than measuring the qubits, the seller asks the buyer to reveal the probability distribution he encoded by sending its full description classically. Subsequently, the seller can perform a measurement to verify if the quantum state accurately encodes the given distribution. Specifically, any quantum state that unfaithfully encodes the distribution will fail the test with a non-zero probability. As a result, the protocol can penalize the buyer with a big payment once she observes that the test has failed.

\paragraph{Seller's strategy}
In the first round, the seller expects a quantum message of $ (\log(B) + 1)  $ qubits, which we denote by $m_Q$, and a $\log (B+1)$-bit classical message that represents an integer $\hat{\tau} \in [0, B]$.

With probability $\gamma(\hat{\tau}) = 1 - \frac{1}{2B} - (1 - \frac{1}{2B})\frac{\hat{\tau}}{B^2}$, the seller measures the quantum message $m_Q$ in the computational basis $(\ket{1}, \ket{2}, \ldots, \ket{2B})$ and terminates the protocol. Suppose the measurement outcome is  $\ket{a} \in \{\ket{1}, \ket{2}, \ldots, \ket{2B}\},$ the allocation and payment are determined according to outcome $a$ of mechanism $\mathcal{M}'.$ 

With probability $ 1 - \gamma(\hat{\tau}) = \frac{1}{2B} + (1 - \frac{1}{2B})\frac{\hat{\tau}}{B^2}$, the seller asks the buyer to send $4B \log B$ classical bits to represent $2B$ binary numbers, $\widehat{p_1}, \ldots, \widehat{p_{2B}} $, each consisting of $2 \log B$ bits (we denote by $m_C$ this classical message). Applying the same correction procedure as described in Subsection~\ref{sub:classical-protocol}, we can assume that 
$\widehat{p_1}, \ldots, \widehat{p_{2B}} $ is the rounding-down to nearest multiple of $1/B^2$ of a feasible distribution 
$p_1, \ldots, p_{2B}$. 
The seller then verifies that $\hat{\tau} = B^2 \cdot (1 - \sum_{i} \widehat{p_i})$; if it isn't, the protocol terminates with the empty allocation and payment $2BU^3\varepsilon^{-2}$. Next, the seller measures the quantum message she receives earlier $m_Q$ in a way such that the measurement has two outcomes $0, 1$, and the probability of outcome $1$ is given by 
\[
\Tr(\rho \ket{\psi}\bra{\psi}),
\] where $\rho$ is the reduced density matrix that represents the state of $m_Q$ at the time of measurement, and $\ket{\psi}$ is the canonical state of classical message $m_C$ that is given by
\[
\ket{\psi} = \frac{B}{\sqrt{B^2 - \hat{\tau}}} \sum_{i=1}^{2B} \sqrt{\widehat{p_i}} \ket{i}.
\]

If the measurement outcome is $0$, the protocol terminates with empty allocation and payment $2BU^3\varepsilon^{-2}$. Otherwise, the protocol continues as a purely classical protocol in the following manner: for each $i \in \{ 1, \ldots, 2B\}$, with probability $\frac{1}{2B (1 - \gamma(\hat{\tau}))} \widehat{p_i}$, the protocol terminates with outcome $i$ of mechanism $\mathcal{M}'.$ Finally, with probability $1 -  \frac{1}{2B (1 - \gamma(\hat{\tau}))} \sum_i \widehat{p_i} $, the seller starts to run the classical protocol from round $2 \log B + 1$ and let $\widehat{p_1}, \ldots, \widehat{p_{2B}}$ be the message she received in the first $2 \log B$ rounds.

\paragraph{Buyer's suggested strategy}

Given the buyer's type and mechanism $\mathcal{M}',$ we denote by $p_1, \ldots, p_{2B}$ the probability of each outcome. Let $\widetilde{p_1}, \ldots, \widetilde{p_{2B}} $ be the numbers such that for any $i$, the binary representation of $\widetilde{p_i}$ consists of first $2 \log B$ bits of the binary representation of $p_i$. The suggested strategy is to send, for quantum message $m_Q$ whose density matrix is given by $\ket{\varphi}\bra{\varphi}$, where $\ket{\varphi}$ is defined as 
\[
\ket{\varphi} = \frac{B}{\sqrt{B^2 - \tilde{\tau}}} \sum_{i=1}^{2B} \sqrt{\widetilde{p_i}} \ket{i},
\]
and send integer $\tilde{\tau} = B^2(1 - \sum_i \widetilde{p_i}).$\footnote{Note that $ \widetilde{p_i}$ is a multiple of $2^{-2\log B} = \frac{1}{B^2}$, so $\tilde{\tau}$ is an integer. Moreover, $ 0 \le 1 - \sum_i \widetilde{p_i} = \sum_i (p_i - \widetilde{p_i})  \le B \cdot 2^{-2\log B} = \frac{1}{B}$, so $\tilde{\tau}\in [0, B].$ } In addition, when the protocol asks the buyer to send classical message $m_C$, the suggested strategy sends $\widetilde{p_1}, \ldots, \widetilde{p_{2B}}.$ Finally, as for the classical protocol part, for round corresponding to the $r$-th round of the classical protocol, the suggested strategy is to send the $r$-th bit of the binary representation of $p_1, \ldots, p_{2B}$, respectively.

We will show that for every type, the probability of each outcome when executing the suggested strategy in the quantum protocol is precisely the same as the probability of that outcome in mechanism $\mathcal{M}'$. 

It is important to note that the suggested strategy passes the measurement test with a probability $1$ and never pays the punishment payment. Hence, the suggested strategy is essentially equivalent to the suggested strategy of the classical protocol. 

For completeness, we include the verbatim calculation of probabilities. For outcome $i$, when measure the state $\ket{\varphi}$, the probability of obataining measurement $i$ is $\frac{B^2}{B^2 - \tilde{\tau}} \widetilde{p_i}.$ Next, the probability of getting outcome $i$ in the first $2 \log B$ rounds of the classical protocol is $ \frac{\widetilde{p_i}}{2B (1 - \gamma(\tilde{\tau}))}$. The probability of getting outcome $i$ in rest of the classical protocol is $(1 -  \frac{1}{2B (1 - \gamma(\tilde{\tau}))} \sum_j \widetilde{p_j} )\frac{(p_i - \widetilde{p_i}) B^2}{\tilde{\tau}} $. Finally, we assemble the probabilities of obtaining outcome $i$, we have 
\begin{align*}
    \Pr[\text{outcome } \, i] & = \gamma(\tilde{\tau}) \cdot \frac{B^2}{B^2 - \tilde{\tau}} \widetilde{p_i} + (1 - \gamma(\tilde{\tau})) \cdot  \left ( \frac{ \widetilde{p_i}}{2B (1 - \gamma(\tilde{\tau}))} +\left (1 -  \frac{ \sum_j \widetilde{p_i}}{2B (1 - \gamma(\tilde{\tau}))} \right)\frac{(p_i - \widetilde{p_i}) B^2}{\tilde{\tau}} \right) \\
    & =  \gamma(\tilde{\tau}) \cdot \frac{B^2}{B^2 - \tilde{\tau}} \widetilde{p_i} + \frac{ \widetilde{p_i}}{2B } +\left (1  - \gamma(\tilde{\tau})-  \frac{ \sum_j \widetilde{p_i}}{2B} \right)\frac{(p_i - \widetilde{p_i}) B^2}{\tilde{\tau}}  \\
    & = \gamma(\tilde{\tau}) \cdot \frac{B^2}{B^2 - \tilde{\tau}} \widetilde{p_i} + \frac{ \widetilde{p_i}}{2B } + \left (1 - \gamma(\tilde{\tau})-  \frac{ B^2 - \tilde{\tau}}{2B^3} \right)\frac{(p_i - \widetilde{p_i}) B^2}{\tilde{\tau}}  \\
    & =  \left ( 1 - \frac{1}{2B} - (1 - \frac{1}{2B})\frac{\tilde{\tau}}{B^2} \right) \cdot \frac{B^2}{B^2 - \tilde{\tau}} \widetilde{p_i} + \frac{ \widetilde{p_i}}{2B } + \left (1  - \gamma(\tilde{\tau}) -  \frac{ B^2 - \tilde{\tau}}{2B^3} \right)\frac{(p_i - \widetilde{p_i}) B^2}{\tilde{\tau}}  \\
    & =\left ( 1 - \frac{1}{2B} - (1 - \frac{1}{2B})\frac{\tilde{\tau}}{B^2} \right) \cdot \frac{B^2}{B^2 - \tilde{\tau}} \widetilde{p_i} + \frac{ \widetilde{p_i}}{2B } + \frac{\tilde{\tau}}{B^2} \frac{(p_i - \widetilde{p_i}) B^2}{\tilde{\tau}}  \\
    & = \frac{2B - 1}{2B} \widetilde{p_i} + \frac{ \widetilde{p_i}}{2B }  + p_i - \widetilde{p_i} \\
    & = p_i,
\end{align*}
as desired.

\paragraph{$\varepsilon$-IC}

Given the buyer's type and mechanism $\mathcal{M}',$ we denote by $p_1, \ldots, p_{2B}$ the probability of each outcome. We also denote by $u$ his expected utility. Let $M = 2^{4B \log B}$, we can represent $m_C$ by an integer $\ell \in [M]$. We will show that for any strategy, the expected utility he can get from the quantum protocol is at most $u + \varepsilon$. Wlog, every buyer's strategy can be defined by tuple $(\rho, \widehat{\tau}, \ell, s_C)$, where $\rho$ is the reduced density matrix of $m_Q$ and $\ell \in \{1, \ldots, M\}$  represents $m_C$, which is the classical message that consists of $4 B \log B$ bits the protocol wants the buyer to send with some small probability, and $s_C$ is a strategy of the classical protocol starting from round $2 \log B$ that is consistent with $\ell$. 

It is important to note that the buyer gains no advantage by performing a local measurement between sending $m_Q$ and  $m_C$, as any such strategy is equivalent to the strategy in which the buyer performs the same local measurement at the very beginning before sending $m_Q$.

Let $\psi_\ell$ be the canonical state associated with $\ell$, the probability that a strategy fails the test is given by
\begin{equation}
    \begin{aligned}
    \Pr[\text{Fail}] & = (1 - \gamma(\widehat{\tau})) \cdot (1 - \Tr(\rho\ket{\psi_{\ell}}\bra{\psi_{\ell}}) \\ 
    & \ge \frac{1}{2B} \cdot T^2(\rho,\ket{\psi_{\ell}}\bra{\psi_{\ell}}) \\
    \end{aligned}
\end{equation}
where $T(\cdot, \cdot)$ is the trace distance.

Note that once the buyer fails the test, he has to pay $2BU^3\varepsilon^{-2}$, therefore, the expected payment is at least 
\[
U^3\varepsilon^{-2} \left (  T(\rho,\ket{\psi_{\ell}}\bra{\psi_{\ell}}) \right)^2.
\]
As a result, for any optimal strategy, the inequality $T(\rho,\ket{\psi_{\ell}}\bra{\psi_{\ell}}) \le \frac{\varepsilon}{U}$ holds. Otherwise, the expected payment would be greater than the maximum value $U$. Suppose the optimal strategy  $s^*$ is given by $ \rho^*, \tau^*$, $\ell^*$, and $s_C^*$, where $s_C^*$ is the optimal strategy for the classical protocol that is consistent with $\ell^*$.

Now we consider a strategy $s$ that sends quantum state $\ket{\psi_{\ell^*}} \bra{\psi_{\ell^*}}$, integer $\tau^*$, and when the seller asks more bits, strategy $s$ sends $\ell^*$ and plays classical strategy $s_C^*$ in the classical protocol. It is important to note that, $s$ is the suggested strategy for some type if and only if $\tau^*$ is consistent with the message labeled with $\ell^*$. Therefore, the expected utility of $s$ is upper bounded by the suggested strategy for some type.  By IC of mechanism $\mathcal{M}'$ and the fact that the suggested strategy of each type achieves the same utility as mechanism $\mathcal{M}'$, the expected utility of $s$ is upper bounded by the expected utility of the suggested strategy.

To conclude the proof, we will demonstrate that the expected utility of $s^*$ is more than the expected utility of $s$ by at most $\varepsilon$. Conditioning on the protocol does not terminate after the first round (i.e.~the seller requires classical message $m_C$), by definition of $s$, the expected utility of $s$ is at least as large as that of $s^*$. Therefore, we only need to compare the expected utility contributed by measuring the reported quantum state immediately. The total variation distance between the measurement outcome distributions of $s$ and $s^*$ is upper bounded by 
\begin{align*}
    T \left (\rho^*, \ket{\psi_{\ell^*}}\bra{\psi_{\ell^*}} \right) \le \frac{\varepsilon}{U}.
\end{align*}
Finally, as the maximum value is bounded by $U$, the expected utility of $s^*$ is at most $\varepsilon$ greater than the expected utility of $s$, as desired.

\paragraph{Communication complexity}

First note that protocol ends immediately after the buyer's first message with probability $\gamma(\widehat{\tau}) \ge 1 - \frac{3}{2B}$. In this case, the communication cost is $O(\log B)$. Otherwise, the seller requests an additional $4 B \log B$ bits and may require more bits when running the classical protocol. As we have shown in the analysis of the classical protocol, the expected communication cost of this part is $O(B \log B)$. Since the probability that the protocol does not end in one round is no more than $1 - \gamma(\widehat{\tau}) \le \frac{3}{2B}$, the overall expected communication complexity is $O(\log B)$.
 
\section{Lower Bounds for Quantum Protocols with Small Payments}\label{sec:small-payments}

Throughout this section, we normalize the valuations so that the upper bound on the highest value is $U=1$.

\begin{theorem}[Full version of Theorem~\ref{thm:large-payments-intro}] \hfill \label{thm:large-payments} 

    Let $n$ be the number of items, $P$ be the upper bound on the payments and in the quantum auction protocol, and $\hat{K}$ an upper bound on the expected communication complexity. Then we have the following lower bounds on $\hat{K} P$:
    \begin{itemize}
        \item For unit-demand valuations, any quantum auction protocol that obtains $\Omega(1)$-approximation to the optimal revenue must satisfy $\hat{K} P=\Omega(n)$. 
        \item For Gross-substitutes valuations, any quantum auction protocol that obtains $\Omega(1)$-approximation to the optimal revenue must satisfy $\hat{K} P= 2^{\Omega(n^{1/3})}$. 
        \item For XOS valuations, any quantum auction protocol that obtains $\Omega(1)$-approximation to the optimal revenue must satisfy $\hat{K} P= 2^{\Omega(n)}$. 
        \item For XOS valuations over independent items, any quantum auction protocol that obtains $4/5+\Omega(1)$-approximation to the optimal revenue must satisfy $\hat{K} P= 2^{\Omega(n)}$. 
    \end{itemize}
\end{theorem}

\begin{proof}
    We write the proof for unit-demand valuations; the proof for other valuations followed by the respective bullet points of Lemma~\ref{lemma:hard-priors}.
    By Lemma~\ref{lem:inifinitetofinitereduction}, we can convert any quantum auction protocol with {\em expected} communication $\hat{K}$ and upper bound $P$ on payments and values to an $\varepsilon$-IC and $\varepsilon$-IR  quantum auction protocol {\em worst-case} communication complexity $K=O(\hat{K}P/\varepsilon)$. 
    
    By Lemma~\ref{lemma:hard-priors} (which is due to~\cite{RubinsteinZ21}) we know that there exists a family of $2^{2^{\Omega(n)}}$ priors, such that no single mechanism can simultaneously obtain any constant approximation of the optimal revenue for $\omega(1)$ of the priors in the family.
    
    Finally, by Lemma~\ref{lem:worstcasecounting}, given upper bound $K$ on worst-case communication, upper bound $B$ on number of feasible bundles%
    \footnote{$B=n+1$ for unit-demand, and $B=2^n$ for gross-substitutes and XOS}, and upper bound $P$ on highest payment, for any constant $\tau > 0$, there exists a cover, aka a family of $2^{2^{O(K + \log B + \log \log P + \log n)}}$ mechanisms such that the revenue of any quantum auction protocol with worst-case upper bounds $K$ on communication and $P$ on payment can be $(1 - \tau)$-approximately recovered by a mechanism in the cover. Therefore unless $K + \log \log P  = \Omega(n)$, there are not enough mechanisms in the cover to obtain good revenue against all priors in the family. 
\end{proof}

\subsection{Worst-Case-to-Expected Communication Reduction}\label{sub:wc2e}
To simplify the proof, we first recall a few assumptions on quantum protocols we made that do not sacrifice generality. 

First, we assume all messages are quantum messages. If a message from the buyer to the seller is intended to be a classical message, the seller simply measures these qubits immediately to obtain the classical bits. Note that the buyer gains no advantage by sending non-classical information, as the buyer can always measure the qubits by himself before sending them to the seller.

Next, we assume each party always sends exactly one qubit in each round. When the protocol stipulates that the buyer sends $m$ consecutive qubits to the seller, it formally means that the buyer sends these qubits over $m$ rounds while the seller sends back one dummy qubit in each round.

Finally, for any finite quantum auction protocol, if the size of each message is fixed across the protocol, by the deferred measurement principle, we can always assume that all measurement happen at the very end of the protocol. In fact, this also implies that only the seller is required to perform a measurement, while the buyer does not need to conduct any measurement.

\begin{lemma}[Worst-case-to-expected communication reduction]
\label{lem:inifinitetofinitereduction} \hfill

Let $X$ be a type space, and $\mathcal{P}$ be an $\varepsilon$-IC and $\varepsilon$-IR quantum auction protocol with $\hat{K}$ expected communication. Let $P$ be an upper bound of its payment and values in $X$. We can transform  $\mathcal{P}$ into a $(2\varepsilon)$-IC and $(2\varepsilon)$-IR finite quantum auction protocol $\mathcal{P}'$ with $K=O(\hat{K}P/\varepsilon)$ worst-case communication complexity. Moreover, for every type, the expected payment of $\mathcal{P}'$ is no less than the expected payment of $\mathcal{P}.$
\end{lemma}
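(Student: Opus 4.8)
The plan is to truncate the protocol $\mathcal{P}$ after a fixed number of qubits have been communicated, and argue that this truncation only mildly hurts the IC and IR guarantees because long transcripts are rare (Markov) and the payment bound $P$ controls how much a strategic buyer could have hoped to gain in the truncated tail.

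\textbf{Step 1 (Setup/simplifications).} Using the three reductions listed just before the lemma statement, I may assume $\mathcal{P}$ is a protocol in which every message is a single qubit, there are (a priori unbounded) rounds, and all measurements happen at the end. Fix the threshold $K := \lceil 2\hat K P / \varepsilon \rceil$ (or a similar constant multiple). Define $\mathcal{P}'$ to be: run $\mathcal{P}$ for at most $K$ rounds; if $\mathcal{P}$ would terminate on its own within $K$ rounds, output whatever it outputs; otherwise, at round $K$, force termination with some safe default outcome, e.g.\ the empty allocation $S=\emptyset$ and payment $q=0$. By construction $\mathcal{P}'$ has worst-case communication complexity $O(\hat K P/\varepsilon)$, and the final bullet of the lemma (expected payment does not decrease) is immediate if the forced-termination payment is $0$ and all payments in $\mathcal{P}$ are nonnegative; if payments can be negative one instead forces the default payment to match the largest payment consistent with IR, or simply observes that replacing the tail with payment $0$ only removes negative payments in expectation — I would state the nonnegativity assumption explicitly to keep this clean.

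\textbf{Step 2 (Bounding the tail via Markov).} Let $T$ be the (random, strategy-dependent) number of qubits communicated in $\mathcal{P}$ under a fixed buyer strategy $s$. Here is the subtlety: the bound $\hat K$ is on the expected communication of the \emph{suggested} strategies, not of arbitrary deviating strategies. So I would split the argument: (i) for the honest strategy $s^*(v)$, $\Pr[T > K] \le \hat K / K \le \varepsilon/(2P)$ by Markov; (ii) for a deviating strategy $\hat s$, I do not control $\Pr[T>K]$, but I do not need to — the buyer's realized utility on any transcript is at most $P$ (value bound) plus $P$ (payment bound), and in $\mathcal{P}'$ a deviator who runs past round $K$ simply gets the default outcome, which is no better for him than before only in expectation over the honest comparison. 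The cleaner framing: for \emph{any} strategy, truncation changes the buyer's expected utility by at most $2P\cdot\Pr[T>K]$, and for the honest strategy this is $\le 2P\cdot \varepsilon/(2P) = \varepsilon$; for a deviator, truncation can only be analyzed relative to the honest benchmark, and since in $\mathcal{P}'$ any deviator's utility is at most his utility in $\mathcal{P}$ (the default outcome is weakly worse than continuing, or we can make it exactly the $\emptyset,0$ outcome which the buyer could have secured anyway by IR-type reasoning) — this is the step I most want to nail down carefully.

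\textbf{Step 3 (Assembling $\varepsilon$-IC and $\varepsilon$-IR).} Combining: for true type $v$, the honest utility in $\mathcal{P}'$ is within $\varepsilon$ of the honest utility in $\mathcal{P}$; and the best deviation utility in $\mathcal{P}'$ is at most the best deviation utility in $\mathcal{P}$ (deviations are only hurt by truncation, or at worst we argue the analogous $\le \varepsilon$ loss only helps the honest side of the inequality). Since $\mathcal{P}$ was $\varepsilon$-IC, we get
\[
\mathbb{E}[\text{honest utility in }\mathcal{P}'] \ge \mathbb{E}[\text{honest utility in }\mathcal{P}] - \varepsilon \ge \mathbb{E}[\text{deviation utility in }\mathcal{P}] - 2\varepsilon \ge \mathbb{E}[\text{deviation utility in }\mathcal{P}'] - 2\varepsilon,
\]
giving $2\varepsilon$-IC, and similarly $\mathbb{E}[\text{honest utility in }\mathcal{P}'] \ge -\varepsilon - \varepsilon = -2\varepsilon$ for $2\varepsilon$-IR. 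The expected-payment claim follows from Step 1.

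\textbf{Main obstacle.} The delicate point is Step 2's treatment of deviating strategies: since $\hat K$ only bounds the honest communication, I cannot apply Markov to a deviator, so I must instead argue that truncation is \emph{unilaterally safe} — replacing the post-round-$K$ tail with the default $(\emptyset, 0)$ outcome weakly decreases every deviator's payoff (because any buyer can already guarantee himself utility $\ge$ that of $(\emptyset,0)$, and because forcing termination removes future payment obligations but also future allocations; one has to check the net effect is nonpositive for the deviator, which uses the payment upper bound $P$ and possibly a small additive slack). If a fully unilateral argument fails, the fallback is to only truncate \emph{honest} behavior and argue that the relevant deviations of interest are themselves near-honest up to the $\varepsilon$-IC slack — but I expect the direct "default outcome is safe to force" argument to go through.
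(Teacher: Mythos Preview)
Your high-level plan (truncate after $K$ rounds, use Markov on the honest strategy) matches the paper, but your choice of truncation outcome is the wrong one, and this is exactly what creates the obstacle you flag at the end. The paper terminates with the empty allocation and payment $P$ (the \emph{maximum} payment), not payment $0$. That single change dissolves both of your difficulties simultaneously:

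\begin{itemize}
\item \textbf{The deviator issue.} With truncation utility equal to $-P$, the truncated outcome is the worst possible for the buyer on any path. Hence for \emph{every} buyer strategy (honest or deviating), the expected utility in $\mathcal{P}'$ is at most the expected utility in $\mathcal{P}$. No Markov bound on the deviator's truncation probability is needed at all; the direction of the inequality is automatic. Your attempt to argue that the $(\emptyset,0)$ outcome ``weakly decreases every deviator's payoff'' is false in general: a deviator whose continuation utility past round $K$ in $\mathcal{P}$ is very negative (say $-P$, which the protocol may well inflict) strictly \emph{gains} from being truncated to utility $0$, and you have no control over how often this happens.
\item \textbf{The expected-payment claim.} Your Step~1 reasoning is inverted. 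If payments are nonnegative and you replace the tail payment by $0$, expected payment can only \emph{decrease}, contradicting the lemma. Replacing by $P$ (the upper bound) makes the expected payment weakly increase, which is what the lemma asserts.
\end{itemize}

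With the corrected truncation outcome, the argument is the short one the paper gives: Markov bounds the honest truncation probability by $\varepsilon/(4P)$, so honest utility drops by at most $O(\varepsilon)$; deviator utility cannot rise; combine with $\varepsilon$-IC/$\varepsilon$-IR of $\mathcal{P}$.
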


\begin{proof}
Suppose the expected communication complexity of $\mathcal{P}$ is $\hat{K}$. 
We construct a new auction protocol, $\mathcal{P}'$, by terminating the protocol with an empty allocation and payment $P$ when the total communication cost would exceed $4 \varepsilon^{-1} \hat{K}P$ if the protocol were to proceed to the next buyer's round. According to Markov's inequality, the original auction protocol incurs a communication cost of $\ge 4 \varepsilon^{-1} \hat{K}P$ with a probability of at most $\frac{\varepsilon}{4P}$.

As a result, when employing any strategy from $\mathcal{P}$ on the trimmed $\mathcal{P}'$, the expected utility and payment change by at most $P \cdot \frac{\varepsilon}{4P} = \frac{\varepsilon}{4}$. Therefore, by designating the suggested strategy of $\mathcal{P}'$ as the (trimmed) suggested strategy of $\mathcal{P}$, it becomes evident that the new protocol is $(2\varepsilon)$-IC and $(2\varepsilon)$-IR with worst-case communication cost $4 \varepsilon^{-1} \hat{K}P = O(\hat{K}P/\varepsilon)$ while incurring no loss in expected revenue for each type.
\end{proof}

\subsection{A List-Decodable Code of Bayesian Priors}
The lower bounds of~\cite{RubinsteinZ21} against approximately optimal classical auction protocols construct, for each valuation class (unit-demand, submodular, etc.), a family of Bayesian priors over valuations from this class. Each family has the following ``list-decodability'' guarantee: no mechanism can simultaneously obtain high revenue on a ``list'' of $\omega(1)$ priors from the family.
We now state the results from~\cite{RubinsteinZ21}, with different parameters for family size and approximability for different valuation classes.

\begin{lemma}[Family of hard priors~\cite{RubinsteinZ21}]\label{lemma:hard-priors}\hfill

    For each of the following combinations of valuation class $X$ over $n$ items, family size $\zeta$, and approximability factor $\gamma$, there exists a family of $\zeta$ Bayesian priors over valuations from $X$, and a small constant $\varepsilon > 0$ such that no single $\varepsilon$-IC and $\varepsilon$-IR mechanism can simultaneously obtain a $\gamma$-approximation of the optimal revenue from $\omega(1)$ distinct priors from the family.
    \begin{itemize}
        \item $X = $ unit-demand; $\zeta = 2^{2^{\Omega(n)}}$; $\gamma = $ arbitrarily small constant. 
        \item $X = $ gross-substitute; $\zeta = 2^{2^{2^{o(n^{1/3})}}}$; $\gamma = $ arbitrarily small constant. 
        \item $X = $ XOS; $\zeta = 2^{2^{2^{\Omega(n)}}}$; $\gamma = $ arbitrarily small constant. 
        \item $X = $ XOS over independent items; $\zeta = 2^{2^{2^{\Omega(n)}}}$; $\gamma = 4/5 + \tau $ for arbitrarily small constant $\tau$. 
    \end{itemize}
\end{lemma}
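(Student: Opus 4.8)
The plan is to recover this statement from the constructions of \cite{RubinsteinZ21}; since the lemma is stated there in essentially the same form, the ``proof'' is an inspection of the structure those families must have, which I now sketch. For each valuation class $X$ I would build the family as a \emph{list-decodable code of priors}: fix a gadget ground set $\mathcal{E}$ of ``elementary types'' of size $N = N(X,n)$ that are pairwise incompatible in the sense below, index priors by subsets $S \subseteq \mathcal{E}$ drawn from an error-correcting code $\mathcal{C} \subseteq 2^{\mathcal{E}}$ of large minimum symmetric-difference distance, and set $\mathcal{D}_S$ to be (a reweighting of) the gadget distribution conditioned on the elementary type lying in $S$, padded with a ``null'' type to normalize the revenue scale. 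Two facts then do all the work: (i) the optimal revenue of $\mathcal{D}_S$ is essentially realized only by a menu that simultaneously ``serves'' (charges nearly its value to) a $(1-o(1))$-weight fraction of the types in $S$; and (ii) any single $\varepsilon$-IC, $\varepsilon$-IR mechanism can serve $S$ that well for only $O(1)$ many $S \in \mathcal{C}$. Taking $\mathcal{C}$ to be, say, a random subset code gives $|\mathcal{C}| = 2^{\Theta(N)}$ together with the distance we need, yielding $\zeta = 2^{\Theta(N)}$ and list size $O(1)$.

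The parameters in the four bullets come entirely from which \emph{robust} menu-size / revenue-approximation gadget is available for each class. For unit-demand I would take a robust version of a menu-size lower bound in the spirit of \cite{HartN19, BabaioffGN22}: a prior over $n$ items whose revenue cannot be $\gamma$-approximated by any menu supported on fewer than $2^{\Theta(n)}$ essential entries, so $N = 2^{\Theta(n)}$ and $\zeta = 2^{2^{\Theta(n)}}$, with $\gamma$ an arbitrarily small constant. For XOS the combinatorial revenue-gap instances give a \emph{doubly} exponential robust menu bound, $N = 2^{2^{\Theta(n)}}$, hence $\zeta = 2^{2^{2^{\Theta(n)}}}$; for gross-substitutes the sharpest gadget currently known has dimension only $2^{2^{o(n^{1/3})}}$, which is exactly why that bullet carries the $o(n^{1/3})$ exponent; and for XOS over independent items the base gadget must itself be a \emph{product} distribution, and the best known gap for such distributions only rules out $\gamma > 4/5$, which is where the $4/5+\tau$ threshold comes from. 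In each case the elementary types carry roughly uniform weights, and ``the essential entry of type $e$'' means the lottery that $e$ strictly prefers to every lottery demanded by the other types.

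The heart of the argument is step (ii). Given a mechanism $\mathcal{M}$ that is $\varepsilon$-IC, $\varepsilon$-IR and $\gamma$-approximates on $\mathcal{D}_{S_1}, \dots, \mathcal{D}_{S_k}$, I would first robustify: a $\mathcal{GonczarowskiW18}$-style discounting of all payments turns $\mathcal{M}$ into a genuinely IC, IR mechanism, hence a menu, at an $O(\sqrt\varepsilon)$ relative revenue loss. Then a charging argument shows that, for each $S_j$, $\mathcal{M}$'s menu must contain entries $O(\sqrt\varepsilon)$-close to the essential entry of a $(1-o(1))$-weight fraction of the types of $S_j$ --- any type of $S_j$ left ``unserved'' either contributes too little revenue directly, or, because every other essential entry is cheaper for it, forces IC-losses elsewhere; summed over an $\Omega(1)$-fraction of $S_j$ these losses exceed the $(1-\gamma)$ slack. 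Consequently the ``served set'' of $\mathcal{M}$ determines each $S_j$ up to symmetric difference $\le r$ for a small $r$, so all the $S_j$ lie in a single radius-$r$ Hamming-type ball; by the minimum distance of $\mathcal{C}$ this forces $k = O(1)$, contradicting $k = \omega(1)$, and picking $\varepsilon$ a small enough constant (depending only on $\gamma$ and the gadget) keeps the $O(\sqrt\varepsilon)$ losses from swamping the budget.

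I expect the main obstacle to be exactly this passage from the \emph{existence} of a large optimal menu to \emph{robust incompatibility} under only approximate incentive compatibility: a menu-size lower bound says the optimal mechanism is complex, but one needs that \emph{any} approximately optimal and approximately IC mechanism already reveals most of $S$. Making this quantitative requires a careful taxation-principle / Lipschitz argument (approximately IC $\Rightarrow$ revenue-close to a menu), a clean conversion of ``fails to serve type $e$'' into a bounded revenue loss that composes additively over $\Omega(1)$-fractions of $S$, and bookkeeping of the different revenue scales across priors so that the code distance $r$ and the list bound $O(1/\gamma)$ come out of the same inequality. A secondary, genuinely external obstacle is that the gross-substitutes and XOS-over-independent-items parameters are bottlenecked by the current best gadget dimension ($n^{1/3}$) and achievable threshold ($4/5$) for those classes; improving them is outside the scope of this lemma, which only packages \cite{RubinsteinZ21} for use in Theorem~\ref{thm:large-payments}.
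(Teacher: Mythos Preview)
The paper does not prove this lemma at all: it is stated as a black-box citation of \cite{RubinsteinZ21}, with only a one-paragraph informal description of the ``list-decodability'' guarantee preceding the statement. So there is no proof in the paper to compare your proposal against.

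Your sketch is in the right spirit --- the list-decodable-code-of-priors framing, the role of robust menu-size gadgets per valuation class, and the $\varepsilon$-IC-to-IC discounting step are all consonant with how \cite{RubinsteinZ21} proceeds --- but for the purposes of this paper none of that is needed: the lemma is simply imported. If you want to include a proof, the honest thing to do is either cite the specific theorems in \cite{RubinsteinZ21} that give each bullet, or clearly flag your write-up as a sketch of their argument rather than a new proof. As written, your proposal reads as though the paper is expected to contain a proof here, which it is not.
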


\subsection{Approximate Cover over Efficient Quantum Auction Protocols}\label{sub:cover}
Our next objective is to demonstrate the existence of a finite set of protocols capable of approximating any (almost) IC and (almost) IR quantum protocol with bounded worst-case communication complexity.  Formally speaking, we have the following lemma.

\begin{lemma}[Approximate cover over efficient quantum auction protocols]
\label{lem:worstcasecounting} \hfill

Let $X$ be a type space with $n$ items, $B$ an upper bound of number of feasible bundles, $P$ an upper bound of payments and values of types in $X$, $\varepsilon > 0$ as an approximation parameter,  and  $K$ an upper bound on the communication complexity, there exists a set $\cS = \cS(n,B,P,\varepsilon,K) $ of mechanisms such that the following hold:
\begin{itemize}
    \item Small cover: $|\cS| = 2^{2^{O(K + \log B + \log \log P + \log n)}}$.
    \item Mechanisms in $\cS$ are $2 \sqrt{\varepsilon}$-IC and $2 \sqrt{\varepsilon}$-IR.
    \item Approximate covering property: For any $\varepsilon$-IC and $\varepsilon$-IR quantum protocol $\mathcal{P}$  with a worst-case communication complexity bounded by $K$, let $\mathcal{M}_{\mathcal{P}} = (\mathcal{A}_{\mathcal{P}}, \mathcal{Q}_{\mathcal{P}})$ be the mechanism induced by $\mathcal{P}$. Then there exists a mechanism $\mathcal{M} = (\mathcal{A}, \mathcal{Q}) \in \mathcal{S}$ such that for every type $v \in X$,  we have
\[
\mathcal{Q}(v) \ge (1 - \sqrt{\varepsilon}) \mathcal{Q}_{\mathcal{P}}(v).
\]
\end{itemize}  
\end{lemma}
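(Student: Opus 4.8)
The plan is to build the cover $\cS$ by discretizing the finitely many degrees of freedom that a worst-case-$K$ quantum auction protocol actually exposes to the buyer, namely the seller's Choi-Jamiołkowski representation, and then rounding. First I would invoke the simplifications from Subsection~\ref{sub:wc2e}: every message may be taken to be a single qubit, there are exactly $2K$ rounds, and by the deferred-measurement principle all of the seller's measurement happens at the end. So a quantum auction protocol with worst-case communication $K$ is, from the buyer's point of view, a fixed $R=2K$-round protocol where the seller's co-strategy $s_{-\text{buyer}}$ has $L_{-i}$ possible outcomes, one for each (allocation, payment) pair the seller can output. By Theorem~\ref{thm:choi} (Gutoski--Watrous), this co-strategy is completely described by $L_{-i}$ matrices $\Psi_{a_{-i}}$, each $2^{2K}\times 2^{2K}$ positive semidefinite Hermitian, with $\sum \Tr(\Psi_{a_{-i}})=1$; and the probability of outcome $(a_i,a_{-i})$ under any buyer strategy $\Phi$ is $2^K \Tr(\Phi_{a_i}\Psi_{a_{-i}})$. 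Crucially, the buyer's achievable outcome distributions — hence his incentives and the induced mechanism — depend on the protocol \emph{only} through these $\Psi$ matrices. The number of distinct (allocation, payment) outcomes the seller can report is at most (number of feasible bundles)$\times$(number of distinct payment values); using the same reduction as in Subsection~\ref{sub:classical-protocol} we may assume payments lie in a set of size $\mathrm{poly}(B, n, \log P, 1/\varepsilon)$ — or, more directly, discretize payments to multiples of a small grid up to $P$ — so $L_{-i} \le B \cdot \mathrm{poly}(\ldots) = 2^{O(\log B + \log\log P + \log n + \log(1/\varepsilon))}$.

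Next I would construct $\cS$ by an $\eta$-net: each $\Psi_{a_{-i}}$ is a matrix with $2^{O(K)}$ complex entries, each bounded in magnitude by $1$ (they are PSD with bounded trace), so an $\eta$-net over the tuple $(\Psi_{a_{-i}})_{a_{-i}}$ in, say, Frobenius norm has size $(1/\eta)^{2^{O(K)}\cdot L_{-i}} = 2^{2^{O(K + \log B + \log\log P + \log n)}}$ for $\eta$ polynomially small in $\varepsilon, 2^{-K}, 1/P$. For each net point I would (i) project it back to a legal co-strategy (PSD, correct trace-normalization, and the consistency constraints that make it a valid $R$-round Choi representation — these are all SDP-type constraints, so projection stays in the net up to another $\eta$), and (ii) read off the induced mechanism $\cM=(\cA,\cQ)$ by letting the buyer play his optimal strategy (which, by Theorem~\ref{thm:choi}, is the solution of an SDP: maximize expected utility $\sum_{a_{-i}} (v(S_{a_{-i}})-q_{a_{-i}})\cdot 2^K\Tr(\Phi \Psi_{a_{-i}})$ over legal $\Phi$). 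This defines $\cS$. For the \emph{approximate covering property}: given an $\varepsilon$-IC, $\varepsilon$-IR protocol $\cP$ with co-strategy $(\Psi^{\cP}_{a_{-i}})$, pick the net point $(\Psi)$ within $\eta$; since all outcome probabilities $2^K\Tr(\Phi\Psi)$ change by at most $2^K \|\Phi\|_F \eta \le \mathrm{poly}(2^K)\eta$ under the swap, and payments are bounded by $P$, the buyer's expected utility for \emph{every} fixed strategy changes by at most $\mathrm{poly}(2^K, P)\eta \le \varepsilon$ (choose $\eta$ small enough — this only costs another $2^{O(K)}$ factor in $\log\log$ of the net size, absorbed into the stated bound). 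Hence $\cM$ is $2\sqrt\varepsilon$-IC and $2\sqrt\varepsilon$-IR (the extra slack also covers the $\varepsilon$-IC-to-IC-style discounting), and the optimal-buyer-strategy values move by $\le \varepsilon$, so the induced payment satisfies $\cQ(v) \ge \cQ_\cP(v) - \varepsilon \cdot(\text{something}) \ge (1-\sqrt\varepsilon)\cQ_\cP(v)$ after the standard payment-discount trick (discount all payments by a $(1-\sqrt\varepsilon)$ factor to restore exact-looking IC/IR while only losing a $\sqrt\varepsilon$ fraction of revenue); I would be a little careful here to discount \emph{down} so that $\cQ(v)\ge (1-\sqrt\varepsilon)\cQ_\cP(v)$ holds with the right direction of inequality, exactly as in the $\varepsilon$-IC-to-IC reduction cited after the theorem statement.

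The main obstacle I expect is \textbf{handling the buyer's unbounded local memory and unbounded-size optimal strategy cleanly}: a priori the buyer could entangle the $K$ communicated qubits with an astronomically large private register, so one cannot naively ``count protocols'' or discretize the buyer's side. This is exactly what Theorem~\ref{thm:choi} is for — it guarantees that \emph{both} the best-response strategy $\Phi$ and the payoff-relevant part of the seller's behavior $\Psi$ live in a space of dimension $2^{O(K)}$ independent of local memory — so the real work is (a) verifying that the set of legal Choi representations of $R$-round co-strategies is carved out by finitely many polynomial (SDP) constraints so that a net point can be projected back onto it, and (b) bookkeeping the error propagation: a Frobenius-$\eta$ perturbation of $\Psi$ perturbs each of the $\le L_{-i}$ outcome probabilities by $\mathrm{poly}(2^K)\eta$, which perturbs the buyer's optimal-strategy value and every deviation value by $\mathrm{poly}(2^K,P,L_{-i})\eta$, and one must choose $\eta$ inverse-polynomial in all of these to drive the IC/IR slack below $\varepsilon$ — while checking that $\log\log(1/\eta)$ is still $O(K + \log B + \log\log P + \log n)$ so the cover size bound is as claimed. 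A secondary, more routine point is arguing the payment-value set can be taken finite of the stated size (via the $\{0,U\}$-payment normalization of Subsection~\ref{sub:classical-protocol} or a direct payment grid), which is what keeps $L_{-i}$, and hence the exponent, at $O(K + \log B + \log\log P + \log n)$.
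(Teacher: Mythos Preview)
Your proposal is correct and follows the same overall strategy as the paper: represent the seller's side via the Gutoski--Watrous/Choi--Jamio\l{}kowski matrices (Theorem~\ref{thm:choi}), discretize those finitely many $2^{2K}\times 2^{2K}$ matrices, and count. Two points of comparison are worth noting.

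First, the paper avoids your bookkeeping over $L_{-i}$ entirely by aggregating before discretizing: it defines, for each feasible bundle $a$, the matrix $\widehat\Phi_a = 2^K\sum_{\ell:a_\ell=a}\Phi_\ell$, and a single payment matrix $Q = 2^K\sum_\ell p_\ell \Phi_\ell$, so that for every buyer strategy $\Psi$ the expected utility is $\sum_a v(a)\Tr(\Psi\widehat\Phi_a) - \Tr(\Psi Q)$. Thus only $B+1$ matrices need to be discretized, and crucially no discretization of the payment \emph{values} is ever required --- arbitrary real payments are absorbed linearly into $Q$. This is exactly what buys the $\log\log P$ (rather than $\log P$) in the cover-size exponent: your ``direct grid on payments up to $P$'' alternative would give $L_{-i}\approx B\cdot P/\eta'$ and hence $\log P$ in the exponent, which is too large; only your $\{0,U\}$-normalization option (which is equivalent, up to a linear change of coordinates, to the paper's $(\widehat\Phi_a,Q)$ parametrization) achieves the stated bound. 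The $\log n$ term comes simply from enumerating which set $\mathcal B\subseteq 2^{[n]}$ of at most $B$ bundles the protocol uses.

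Second, the paper does not project net points back onto the set of valid co-strategies. Instead it picks, for each nonempty discretization bucket, one actual protocol that lands in that bucket; since every $\varepsilon$-IC, $\varepsilon$-IR protocol lands in some bucket, this representative is itself a valid protocol and is within the discretization error of the given one. This sidesteps your step (i) entirely. The final $(1-\sqrt{2\varepsilon})$-discount-and-shift on payments is then applied to each representative to absorb both the original $\varepsilon$-IC/IR slack and the discretization error, yielding the $2\sqrt\varepsilon$-IC/IR and revenue guarantees.
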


\begin{proof}
 Each quantum protocol with $K$ worst-case communication can be characterized by its Choi-Jamiołkowski representation, allocation and payment mapping, and suggested strategy $s$. We denote by $\Phi_1, \dots, \Phi_L \in \mathbb{C}^{2K \times 2K}$ the Choi-Jamiołkowski representation of the quantum auction protocol, where $L$ is the number of measurement outcomes; similarly we let $\Psi  \in \mathbb{C}^{2K \times 2K}$ denote the representation of strategy $s$. Let $a_{\ell}$ and $p_{\ell}$ be the allocation and payment associated with  $\Phi_{\ell}$; their probability is given by
\[
2^K \Tr(\Phi_{\ell} \Psi).
\]

Suppose $\mathcal{P}$ only ever allocates bundles in set $\mathcal{B} \subseteq \{ 0, 1\}^{[n]}$ with $|\mathcal{B}| \le B$. For each feasible bundle $a \in \mathcal{B}$, we define 
\[
\widehat{\Phi}_a = 2^K \sum_{\ell, a_\ell = a} \Phi_{\ell}.
\]
Similarly, we define
\[
Q = 2^K \sum_{\ell} p_\ell \Phi_{\ell}.
\]

Moreover, by Theorem~\ref{thm:choi}, $\Phi_{\ell}$ is positive semidefinite and $\Tr(\sum_{\ell} \Phi_{\ell}) = 1$. As a result,  $\| \widehat{\Phi}_a \|_{\mathrm{F}} \le 2^K$ for all $a \in \mathcal{B}$ and  $\| Q \|_{\mathrm{F}} \le P \cdot 2^K$, where $\| \cdot \|_{\mathrm{F}}$ is the Frobenius norm. Note that the Frobenius norm upper bounds the absolute value of each entry of the matrix.

It can be readily observed that for any type $v \in X$ and any strategy $\Psi$, the expected utility is given by $\left ( \sum_{a} \Tr(\Psi \widehat{\Phi}_a) v(a) \right) - \Tr(\Psi Q),$ while the expected payment amounts to $\Tr(\Psi Q)$.
Hence, $\mathcal{P}$ can be represented by the set $\mathcal{B}$ of feasible allocations, and $B + 1$ Hermitian matrices $\{\widehat{\Phi}_a \}_{a\in \cB}$ and $Q$. 

 Now we consider discretizing $\widehat{\Phi}_a$ by rounding-toward-zero each entry to a multiple of $B^{-1} P^{-1} \cdot 2^{-\beta K}$ for some large constant $\beta$ (we round each entry in both real and imaginary parts).  Similarly, we discretize $Q$ by rounding toward zero each entry to a multiple of $2^{-\beta K}$. 
  
 Note that for two matrices $\rho$ with the same discretization, for any $\sigma$, we have 
\begin{align*}
\left | \Tr(\rho \sigma) - \Tr(\rho' \sigma) \right | & \le \| \sigma \| \cdot \left \| \rho - \rho' \right \|_1.
\end{align*}

As a result for every strategy $\Psi$,   
\[
   \left | \Tr(\widehat{\Phi}_{a} \Psi)  -  \Tr(\widehat{\Phi}'_{a} \Psi) \right | = B^{-1} P^{-1} 2^{-\Omega(K)},
\]
and 
\[
   \left | \Tr(Q \Psi)  -  \Tr(Q' \Psi) \right | =  2^{-\Omega(K)}.
\]

Therefore, for any strategy, the expected utilities the buyer obtains from two auction protocols with the same discretization differ at most $\cdot  2^{-\Omega(K)}$. Similarly, the difference in expected payment is at most $2^{-\Omega(K)}$. As a result, we only need to pick one protocol to approximate all protocols with the same discretization. We call the set of representative protocols $\mathcal{S}_I$. Next, we multiply all the payments of the protocols in $\mathcal{S}_I$ by a factor of $(1 - \sqrt{2\varepsilon})$ and increase all the payments by $\sqrt{2\varepsilon}$. We finally let the suggested strategy be the optimal strategy of each type. Let $\mathcal{S}$ be the set of all protocols constructed in this manner. 

Finally, we conclude that for every quantum $\varepsilon$-IC and $\varepsilon$-IR protocol with communication bounded by $K$, we can find an IC and $2\sqrt{\varepsilon}$-IR protocol in $\mathcal{S}$ while incurring at most $2\sqrt{\varepsilon}$ fraction loss of revenue for each type.

Now we count the number of protocols $\mathcal{S}$. For each $a$, by the fact that $\| \widehat{\Phi}_a \|_{\mathrm{F}} \le 2^K$, the absolute value of each entry is at most $2^K$. Therefore, the number of different discretized representation for each $a$ is bounded by $(B\cdot P \cdot 2^{O(K)})^{2^{O(K)}} = 2^{2^{O(K + \log \log P + \log \log B )} } $. Similarly, the number of different discretized representations of $Q$ is bounded by $2^{2^{O(K + \log \log P)} }$. In addition, the number of different $\mathcal{B}$ is at most $2^{nB}$. Finally, the cardinality $|\mathcal{S}|$ is upper bounded by $(2^{2^{O(K + \log \log P + \log \log B)} } )^ {O(B)} \cdot 2^{nB} = 2^{2^{O(K + \log \log P + \log n + \log B)} }.$

\end{proof}

\part{Worst-case communication}\label{part:2-items}

\section{Preliminaries III: Optimal Mehchanisms for Selling 2 Items}\label{sec:prelim-2-items}

In this section, we introduce a special case of framework of~\cite{DaskalakisDT17, GK18} to characterize optimal auctions for selling two goods to a single additive buyer with independent valuations.  We will first give a picture of their framework and then discuss how to apply duality theory to show that infinite menu complexity (aka infinite worst-case classical communication complexity) is inevitable for the optimal mechanisms of some prior distributions.

For simplicity, we only consider the case where the buyer's type space $X$ is $[0, 1]^2$, and each coordinate represents the buyer's value of each good. We assume the prior distribution has a density function $f(x, y) = f_1(x) f_2(y)$. Further, we assume $f_1, f_2 $ are continuous and differentiable with bounded derivatives. 
 
It is well-known (see e.g.~\cite{RochetC98, ManelliV07}, that for any IC and IR mechanism the utility function $u: [0, 1]^2 \to \mathbb{R}$ is convex, non-negative, non-decreasing, and $1$-Lipschitz with respect to the $\ell_1$ norm. Also, given any $u: [0, 1]^2 \to \mathbb{R}$ with these conditions, the utility function uniquely\footnote{Up to measure zero.} defines an IC and IR mechanism $\mathcal{M} $ with allocation function $\mathcal{A}(v) = \nabla u(v)$ and payment function $\mathcal{Q}(v) = \mathcal{A}(v) \cdot v - u(v).$

At a high level (see Figure~\ref{fig:canonicalfig}), the mechanisms established in~\cite{DaskalakisDT17, GK18} partition $[0, 1]^2$ into $4$ regions $Z, \mathcal{A}, \mathcal{B}, \mathcal{W}$ induced by a convex area $Z$ defined by two concave functions $s_1, s_2$ and a straight line $x + y = P_{\mathrm{crit}}$ for some $P_{\mathrm{crit}} \in [0, 2]$. Moreover, ~\cite{DaskalakisDT17} calls $Z, \mathcal{A}, \mathcal{B}, \mathcal{W}$ {\em the canonical partition with respect to $Z$}, and $P_{\mathrm{crit}}$ {\em the critical price}.
For ease of exposition, we only consider the case that the line $x + y = P_{\mathrm{crit}}$ intersects both curves $x = s_1(y)$ and $y = s_2(x)$. 

More precisely, let $s_1, s_2 : [0, 1] \to [0, 1]$ be two $1$-Lipschitz, concave and non-increasing function, and $P_{\mathrm{crit}} \in [0, 2].$ Let $x_{\mathrm{crit}}$ be the solution to $s_2(x) = P_{\mathrm{crit}} - x$, and $y_{\mathrm{crit}}$  be the solution to $s_1(y) = P_{\mathrm{crit}} - y$. We can always find such solutions since we assume the line $x + y = P_{\mathrm{crit}}$ intersects both curves.

Region $Z \subseteq [0, 1]^2$ is defined as the region enclosed by $s_1, s_2, $ and line $x + y = P_{\mathrm{crit}}.$ Formally, $Z = \{(x, y) \in [0, 1]^2: x \le s_1(y), y \le s_2(x), x + y \le P_{\mathrm{crit}}\}.$ The other three regions are defined as follows: 

$$
\mathcal{A} = \{(x,y) \in [0, 1]^2 : x < x_{\textrm{crit}}\} \setminus Z;
\quad \mathcal{B} = \{(x,y) \in [0, 1]^2 : y < y_{\textrm{crit}}\} \setminus Z; \quad \mathcal{W} = [0, 1]^2 \setminus (Z \cup \mathcal{A} \cup \mathcal{B}).$$

\begin{figure}[!ht]
\begin{center}
    
\caption{The canonical partition}\label{fig:canonicalfig}
\begin{tikzpicture}
\begin{axis}[width=2.5in, height=2.5in,ymin=0, ymax=10, xmin=0, xmax=10,   ytick pos=left, ytick={3.56,0} , yticklabels={$y_{\textrm{crit}}$,$0$}, xtick={3,0},xticklabels={$x_{\textrm{crit}}$,$0$}, xtick pos=left]
  
  \addplot+[color=gray!50, fill=gray!50, domain=0:3,mark=none]
 {8-.16*x*x}
 \closedcycle;
   \addplot+[color=gray!50, fill=gray!50, domain=3:6,mark=none]
 {9.56-x}
 \closedcycle;
\addplot+[color=gray!50,fill=gray!50,domain=6:7.135,mark=none]
{4.56-(x-5)*(x-5)} \closedcycle;

  \addplot+[color=gray!90, fill=gray!50, domain=0:3,mark=none]
 {8-.16*x*x};
   \addplot+[color=gray!90, fill=gray!50, domain=3:6,mark=none]
 {9.56-x};
\addplot+[color=gray!90,fill=gray!50,domain=6:7.135,mark=none, solid]
{4.56-(x-5)*(x-5)};
 
 \addplot+[color=black, domain=0:10,samples=2,  mark=none, dotted]
 {3.56} 
 ;
 \addplot[color=black, , dotted, mark=none] coordinates{
(3,0)
(3,11)
};
\node at (axis cs:1.5,5){$Z$};
\node at (axis cs:4.5,2){$Z$};
\node at (axis cs:1.5,2){$Z$};
\node at (axis cs:4,4.5){$Z$};
\node at (axis cs:8.2,2){$\mathcal{B}$};    
\node at (axis cs:7.2,7.2){$\mathcal{W}$};
\node at (axis cs:1.5,8.5){$\mathcal{A}$};
\node (dest4) at (axis cs: 2.3, 7.1){};
\node (lab4) at (axis cs: 2.3,9.2){};
\node (dest5) at (axis cs: 6.9,0.7){};
\node (lab5) at (axis cs: 9.2,0.7){};
\draw [->] (lab4)--(dest4);
\draw [->] (lab5)--(dest5);
\node at (axis cs: 2.3,9.3){\small{$s_2$}};
\node at (axis cs: 9.3,0.7){\small{$s_1$}};
\end{axis}
\end{tikzpicture}

This figure is from~\cite{DaskalakisDT17} which visualizes the canonical partition. In region $Z$, the seller allocates nothing; in region $\mathcal{W}$, the seller allocates both items; in regions $\mathcal{A}$ and $\mathcal{B}$, the seller allocates items by a lottery.
\end{center}

\end{figure}

We now define a technical condition that we call {\em GK Conditions}, which is needed for the characterization in Theorem~\ref{thm:characterization}; it holds for our examples in Section~\ref{sec:1-way-protocol} and Section~\ref{sec:finite-round}. (It actually fails to hold for the example in Section~\ref{sec:limit-1-way} and Section~\ref{sec:barely-interactive}, but for that example, we can use a similar unique characterization result directly from~\cite{DaskalakisDT17}.)

\begin{definition}[GK conditions~\cite{GK18}]\hfill\label{def:GK}
    
Let $\mu(x, y) = 3f_1(x)f_2(y) + xf'_1(x)f_2(y) + yf'_2(y)f_1(x)$.  Given a canonical partition of $[0, 1]^2$ induced by $s_1, s_2$, and $P_{\mathrm{crit}}$, we say that it satisfies  {\em GK conditions with respect to $f_1, f_2$} if it satisfies the following conditions:
\begin{itemize}
        \item $\mu(x, y) > 0$ for all $(x, y) \in [0, 1]^2$, and
        \item $\int_{Z} \mu(x, y) = 1$, and
        \item $\int_{s_1(y)}^1 \mu(x, y) \dd x = f_1(1)f_2(y)$, for all $y \in [0, y_{\textrm{crit}}], $ and
        \item $\int_{s_2(x)}^1 \mu(x, y) \dd y = f_1(x)f_2(1)$, for all $x \in [0, x_{\textrm{crit}}]$.
\end{itemize}
\end{definition}

Finally by~\cite{GK18} (specifically, their Theorem $1$ and duality discussions in Section 2.2),  we have the following characterization of the optimal mechanism.

\begin{theorem}[Uniqueness and characterization of optimal auction~\cite{GK18}]\label{thm:characterization}

Given probability density functions $f_1, f_2$ over $[0, 1]$. Suppose that canonical partition  $(Z, \mathcal{A}, \mathcal{B}, \mathcal{W})$ induced by $1$-Lipschitz, concave and non-increasing functions $s_1, s_2$, and $P_{\mathrm{crit}} \in [0, 2]$ satisfies the GK conditions. The utility function $u(x, y)$ of the optimal IC and IR mechanism for selling two items to a single additive buyer with independent prior distributions $f_1, f_2$ is given by 
\[
u(x, y) = \max(0, x - s_1(y), y - s_2(x), x + y - P_{\mathrm{crit}}).
\]
Specifically, 
\begin{itemize}
	\item if $(x,y) \in Z$, $u(x, y) = 0$;
	\item if $(x,y) \in \mathcal{A}$, $u(x, y) = y - s_2(x)$;
	\item if $(x,y) \in \mathcal{B}$, $u(x, y) = x - s_1(y)$;
	\item if $(x,y) \in \mathcal{W}$, $u(x, y) = x + y - P_{\mathrm{crit}}$.
\end{itemize}
Moreover, the optimal utility function $u(x, y)$ is unique in regions $\mathcal{A}, \mathcal{B}, Z$. \footnote{This implies that the utility function of every optimal mechanism for prior $f$ must agree with $u$ in regions $\mathcal{A}, \mathcal{B}, Z$.}

\end{theorem}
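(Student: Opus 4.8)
The plan is to prove this through the strong-duality framework for optimal mechanism design of~\cite{DaskalakisDT17}, as specialized in~\cite{GK18}: the GK conditions are precisely the hypotheses that let one build a dual certificate matching the candidate utility function. First I would rewrite the seller's revenue as a linear functional of the utility function. Plugging $\mathcal{Q}(v)=\nabla u(v)\cdot v-u(v)$ into $\mathrm{Rev}=\int \mathcal{Q}\,f$ and integrating the $\nabla u\cdot v$ term by parts over $[0,1]^2$ gives
\[
\mathrm{Rev}(u)=\int_0^1 u(1,y)\,f_1(1)f_2(y)\,\dd y+\int_0^1 u(x,1)\,f_1(x)f_2(1)\,\dd x-\int_{[0,1]^2} u(x,y)\,\mu(x,y)\,\dd x\,\dd y ,
\]
where $\mu$ is exactly the function from Definition~\ref{def:GK} (it equals $\nabla\cdot(vf)+f=3f_1f_2+xf_1'f_2+yf_1f_2'$), and the boundary contributions on the edges $x=0$ and $y=0$ vanish because the outward flux $v\cdot n$ is zero there. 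So $\mathrm{Rev}$ is a linear functional $u\mapsto\int u\,\dd\mu^*$ against a signed measure $\mu^*$ whose positive part is $f_1(1)f_2(\cdot)$ on the right edge and $f_1(\cdot)f_2(1)$ on the top edge and whose bulk density is $-\mu$; a short computation using $\int f=1$ shows $\int \dd\mu^*=-1$.

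The optimization is then $\max\mathrm{Rev}(u)$ over the convex set of valid utility functions (nonnegative, convex, nondecreasing, $1$-Lipschitz in $\ell_1$). Weak duality in the DDT sense says: for any nonnegative measure $\gamma$ on pairs $(z,w)$ with $z\preceq w$ coordinatewise that ``routes'' the bulk mass $\mu\,\dd x\,\dd y$ onto the positive part of $\mu^*$ together with a free sink of total mass $1$ absorbed by the constraint $u\ge 0$, one has $\mathrm{Rev}(u')\le\int\|w-z\|_1\,\dd\gamma$ for every valid $u'$, with equality iff $u'(w)-u'(z)=\|w-z\|_1$ for $\gamma$-a.e.\ pair and $u'=0$ on the free sink. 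I would first record that the candidate $u(x,y)=\max(0,\,x-s_1(y),\,y-s_2(x),\,x+y-P_{\mathrm{crit}})$ is a valid utility function: it is convex as a max of functions that are affine or of the form ``linear plus convex'' (here $-s_1,-s_2$ are convex because $s_1,s_2$ are concave), it is $1$-Lipschitz in $\ell_1$ because each piece has $\ell_\infty$-gradient at most $1$, and the induced allocation $\nabla u$ has coordinates in $\{0,1\}\cup\{-s_1'(y),-s_2'(x)\}\subseteq[0,1]$ precisely because $s_1,s_2$ are non-increasing and $1$-Lipschitz; checking which term of the max is active in each of $Z,\mathcal{A},\mathcal{B},\mathcal{W}$ recovers the region-wise formulas.

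The heart of the argument is constructing the dual certificate $\gamma$ and checking tightness for this $u$. I would transport the bulk mass as follows: mass sitting in $Z$ is sent to the origin and absorbed by the $u\ge 0$ slack; mass in $\mathcal{A}$ moves vertically up to the top edge; mass in $\mathcal{B}$ moves horizontally right to the right edge; mass in $\mathcal{W}$ moves up-and-right onto the two edges (here any upward coupling works). Each GK condition then plays a precise role: $\mu>0$ gives $\gamma\ge 0$; $\int_Z\mu=1$ says the mass absorbed for free equals the single unit of ``excess'' negative mass (matching $\int\dd\mu^*=-1$), so the remaining bulk mass $f_1(1)+f_2(1)$ balances the boundary mass exactly; $\int_{s_1(y)}^1\mu(x,y)\,\dd x=f_1(1)f_2(y)$ is the row-by-row flow balance that lets the $\mathcal{B}$-mass in each row $y<y_{\mathrm{crit}}$ be exactly absorbed by $f_1(1)f_2(y)$ on the right edge, and symmetrically the fourth condition handles $\mathcal{A}$ column by column. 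Tightness is then immediate: each transport segment is vertical inside $\mathcal{A}$ where $\partial_y u=1$, horizontal inside $\mathcal{B}$ where $\partial_x u=1$, diagonal inside $\mathcal{W}$ where $\nabla u=(1,1)$, so $u$ increases at the $\ell_1$-rate along it, and $u=0$ on $Z$. Hence $\mathrm{Rev}(u)=\int\|w-z\|_1\,\dd\gamma$ and $u$ is optimal.

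Finally, uniqueness in $Z,\mathcal{A},\mathcal{B}$ is complementary slackness run in reverse: any other optimal $u'$ must be tight for the same $\gamma$, so $u'=0$ on the support of the free sink (which is all of $Z$, since $\mu>0$ there) and $u'(w)-u'(z)=\|w-z\|_1$ on $\mathrm{supp}\,\gamma$; since $\gamma$ foliates $\mathcal{A}$ by vertical segments and $\mathcal{B}$ by horizontal segments, each reaching the edges, with positive density everywhere ($\mu>0$), and the edge endpoint values are themselves forced by continuity from these pinned regions, $u'$ must agree with $u$ throughout $Z\cup\mathcal{A}\cup\mathcal{B}$. In $\mathcal{W}$ this rigidity disappears: there $\nabla u=(1,1)$, so many transport couplings are simultaneously aligned and complementary slackness imposes no pointwise reconstruction of $u'$ on $\mathcal{W}$ (nor on the parts of the edges bordering $\mathcal{W}$), which is exactly why the uniqueness claim is confined to the other three regions. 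I expect the main obstacle to be the explicit construction and feasibility verification of $\gamma$ inside $\mathcal{W}$ and its continuity across the curves $x=s_1(y)$, $y=s_2(x)$ and the line $x+y=P_{\mathrm{crit}}$ (ensuring the flow balances match up at region boundaries), since this is the step where one must actually exhibit a coupling rather than merely impose flow-balance identities.
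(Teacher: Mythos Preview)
The paper does not give its own proof of this theorem: it simply cites it from~\cite{GK18} (``specifically, their Theorem~1 and duality discussions in Section~2.2''), together with the closely related framework of~\cite{DaskalakisDT17}. Your proposal is a faithful sketch of exactly the argument carried out in those references --- rewrite revenue as a linear functional of $u$ against the signed measure with bulk density $-\mu$ and edge densities, construct a transport/dual certificate that routes $Z$-mass to the $u\ge 0$ slack, $\mathcal{A}$-mass vertically to the top edge, $\mathcal{B}$-mass horizontally to the right edge, and $\mathcal{W}$-mass diagonally, then read off optimality and region-wise uniqueness from complementary slackness --- so there is nothing to contrast: you have reconstructed the cited proof rather than deviated from it.
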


 \section{Limitations of One-Way Quantum Protocols}\label{sec:limit-1-way}
 ~\cite{DaskalakisDT17} studies the optimal auction of selling two items to a single additive buyer with i.i.d. valuations from Beta$(1, 2)$. It characterizes the unique utility function $u(\cdot)$ for any optimal mechanism. 
 In particular they show that in the region $y=1$ and $x\in[0, 0.06]$, the unique utility function for optimal mechanism is given by $$u(x, 1) = \frac{2 - 2x}{4-5x}.$$ 
 
In this section, we are going to show that no finite one-way quantum mechanism can implement this utility function for $x\in [0, 0.06]$ and $y = 1$. It it worth noting that, although only one qubit is exchanged in a single round per definition of our main model, the following negative result applies to any one-way quantum protocol with an arbitrarily large (but finite) message size.

\begin{theorem}
    Given the two-item additive type space $[0, 1]^2$, for any non-linear rational function $R(x)$, for any $r > 0$, there is no IC quantum one-way protocol $\mathcal{P}$ has utility function $u(x, 1) = R(x)$ for $x \in [0, r).$
\end{theorem}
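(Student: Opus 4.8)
The plan is to reduce the statement to an elementary fact about characteristic polynomials of Hermitian matrix pencils. I would first invoke the normal form of one-way quantum auction protocols sketched in Section~\ref{sec:overview}: a finite one-way protocol whose buyer message lives in an $N$-dimensional space induces Hermitian matrices $A^{(1)},A^{(2)},A^{(\mathrm{pay})}\in\mathbb{C}^{N\times N}$ such that, with $A(v_1,v_2):=v_1A^{(1)}+v_2A^{(2)}-A^{(\mathrm{pay})}$, a buyer of type $(v_1,v_2)$ who sends density matrix $\rho$ obtains expected utility $\Tr(\rho\,A(v_1,v_2))$; here the seller's arbitrarily large private ancilla gets absorbed into an $N$-dimensional POVM, which is what keeps the pencil finite-dimensional. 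Consequently the buyer's best response has value $\lambda_{\max}(A(v_1,v_2))$, and since the protocol is IC the on-path utility must equal it: $u(v_1,v_2)=\lambda_{\max}(A(v_1,v_2))$ for every type. Specializing to $v_2=1$ and setting $C:=A^{(2)}-A^{(\mathrm{pay})}$, the hypothesis reads $\lambda_{\max}(xA^{(1)}+C)=R(x)$ for all $x\in[0,r)$.

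I would then work with the bivariate polynomial $p(x,\lambda):=\det\!\big(\lambda I-xA^{(1)}-C\big)$, which is monic of degree $N$ in $\lambda$ and whose coefficient $a_j(x)$ of $\lambda^j$ satisfies $\deg_x a_j\le N-j$ (up to sign, $a_j$ is a sum of $(N-j)\times(N-j)$ principal minors of the matrix $xA^{(1)}+C$, whose entries are affine in $x$). For every $x\in[0,r)$, $R(x)=\lambda_{\max}(xA^{(1)}+C)$ is an eigenvalue of $xA^{(1)}+C$, hence a root of $p(x,\cdot)$; thus $p(x,R(x))=0$ on $[0,r)$. Writing $R=P/Q$ in lowest terms with $Q$ monic (and nonvanishing on $[0,r)$ since $R$ is finite there), $Q(x)^N p(x,R(x))$ is a polynomial in $x$ vanishing on the infinite set $[0,r)$, so it vanishes identically:
\[
P(x)^N+\sum_{j=0}^{N-1}a_j(x)\,P(x)^j\,Q(x)^{N-j}\equiv 0 .
\]

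From this identity I would extract a contradiction in two cases that jointly cover all non-linear $R$. If $Q$ is non-constant, then every term other than $P(x)^N$ is divisible by $Q(x)$, so $Q\mid P^N$; together with $\gcd(P,Q)=1$ this forces $Q$ to be constant, a contradiction. Hence $Q$ is constant, i.e.\ $R=P$ is a polynomial of some degree $d$. If $d\ge 2$, I compare $x$-degrees: $\deg(P^N)=dN$, while for each $j<N$, $\deg(a_jP^j)\le (N-j)+jd<dN$ (the strict inequality being precisely $1<d$ after dividing by $N-j>0$), so the degree-$dN$ term of $P^N$ survives uncancelled and the identity fails. Therefore $d\le 1$, so $R$ is linear, contradicting the hypothesis; no such protocol exists.

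I expect the only subtle point to be the reduction in the first paragraph --- checking carefully that an arbitrary one-way quantum protocol, with the buyer and seller free to use large local ancillas and arbitrary POVMs, is equivalent to the finite-dimensional form $u=\lambda_{\max}(A(v_1,v_2))$, and that incentive compatibility pins $u$ down to exactly this expression. Once that normal form is established, the remainder is the brief computation above; notably it requires no analyticity, asymptotic analysis, or eigenvalue perturbation theory, only the fact that $p(x,R(x))$ is a rational function vanishing on an interval and therefore identically zero.
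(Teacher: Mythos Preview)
Your proposal is correct and takes a genuinely different route from the paper. Both arguments share the same opening reduction: in a finite one-way protocol the buyer's optimal utility at $(x,1)$ equals the maximum eigenvalue of a Hermitian pencil $xA+C$, hence $R(x)$ is a root of the characteristic polynomial $p(x,\lambda)=\det(\lambda I-xA-C)$ on $[0,r)$. From there the paper proceeds analytically for the specific example $R(x)=\tfrac{2-2x}{4-5x}$: it differentiates $p(x,R(x))$ in $x$, invokes the identity theorem to extend the vanishing of this derivative up to the pole $x=4/5$, and then computes a nonzero limit of $g(x)(4-5x)^{2^k+1}$ there to reach a contradiction (remarking that the same works for any non-linear rational $R$). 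Your argument is instead purely algebraic: after clearing the denominator $Q$ in $R=P/Q$, you obtain the polynomial identity $P^N+\sum_{j<N}a_jP^jQ^{N-j}\equiv 0$ and dispose of the non-polynomial case by a divisibility argument ($Q\mid P^N$ with $\gcd(P,Q)=1$) and the polynomial case by a degree count. This avoids differentiation, analytic continuation, and any pole analysis, handles all non-linear rational $R$ uniformly, and is shorter; the paper's approach, on the other hand, makes the obstruction concretely visible at the pole of $R$, which is perhaps more illuminating for the particular $\mathrm{Beta}(1,2)$ example they care about.
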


\begin{proof}

For ease of exposition, we present the proof for the specific example of $R(x) = \frac{2-2x}{4-5x}$. However, it will be clear that the same argument holds for any non-linear rational function.

Wlog, in a one-way quantum protocol the buyer sends a $k$-qubit state $\rho$ to the seller, then the seller measures the state with a POVM and allocates the bundle, and charges the buyer according to the measurement outcomes. Since we only care about the distribution of measurement outcomes, we can represent a POVM measurement by $L$ positive semidefinite Hermitians $A_1, \ldots, A_L$ such that $\sum_{\ell=1}^L \Tr(A_\ell) = 1.$ Let $a_\ell$ and $p_\ell$ be the allocation and the payment associated with $A_\ell$.

 Define $A_{\hat{a}} := \sum_{\ell: a_\ell = \hat{a}} A_\ell$, for $\hat{a} \in \{ 0, 1\}^2$. In this definition $A_{0,0}$ is the sum of POVM Hermitians such that the corresponding allocation is empty, and $A_{1,0}$ is the sum of Hermitians such that the corresponding allocation is item $1$, etc. With these definitions, when the buyer's value of the first item is $x$ and his value of the second item is $1$, we define 
\begin{align*}
	V(x) &= \sum_{\ell} v(a_\ell) A_\ell \\
	& = x\cdot (A_{1,0} + A_{1,1}) + (A_{0,1} + A_{1,1}).
\end{align*}
Note that for any non-negative $x$, $V(x)$ is a positive semidefinite Hermitian. Then we define the payment matrix
\begin{align*}
	Q &= \sum_{\ell} p_\ell A_\ell.
\end{align*}
The utility function $u(x)$ for the buyer with values $(x, 1)$ when he sends $\rho$ is given by 
\begin{align*}
	u(x) &= \Tr(\rho(V(x) - Q)).
 \end{align*}
Since $V(x)$ and $Q$ are both Hermitian, the optimal strategy for the buyer is to send the state $\rho = \ket{\varphi}\bra{\varphi}$, where $\ket{\varphi}$ is the eigenvector of the largest eigenvalue of $V(x) - Q$. Moreover, the utility is given by the largest eigenvalue of the Hermitian: $\| V(x)-Q \| = \| x \cdot (A_{1,0} + A_{1,1})+ (A_{0,1}+A_{1,1} - Q) \| = \| xA + B \|$, where $A := A_{1,0} + A_{1,1}$ and $B := A_{0,1}+A_{1,1} - Q$.

We prove by contradiction by assuming there is an IC quantum one-way protocol such that $u(x) = \frac{2-2x}{4-5x}$ for $x \in [0, c).$ We have established that for any $x\in [0, c)$, $u(x)$ is given by the largest eigenvalue of  $( xA + B )$ for some Hermitian matrices $A, B$ defined by the protocol.

Let 
\[\det \left (\lambda I -(xA +B) \right) = \sum_{i = 0}^{2^k} q_i(x) \cdot \lambda^i
\] 
be the characteristic polynomial of $xA + B$, where $q_i(x)$ is a polynomial in $x$ of degree at most $2^k - i$. In particular, by the definition of determinant, only one term (the product of all diagonal entries) of $\det \left (\lambda I -(xA +B) \right) $) has $2^k$-th power of $\lambda$, so $q_{2^k} = 1$.

Since $u(x)$ is an eigenvalue of $xA + B$, it has to be a root of the characteristic polynomial of $xA +B$.  
\[
\det \left (u(x) I -(xA +B) \right) = 0.
\] 

Moreover, as $u(x)$ is analytic on for $x \in [0, c)$, $\det \left (u(x) I -(xA +B) \right)$ is also analytic in $x$ in this range. We denote by $g(x)$  the derivative of $\det \left (u(x) I -(xA +B) \right)$, we have
\begin{equation}
\label{eqn:derivative}
\begin{aligned}
g(x) & := \dv{x} \det \left(u(x) I -(xA +B) \right) \\
& = \frac{\dd } {\dd x} \left (\sum_{i=0}^{2^k} q_i(x) \cdot u^i(x) \right)\\
& = \sum_{i = 0}^{2^k} \frac{\dd }{\dd x} \left(q_i(x)\cdot u^i(x)\right) \\
& = \sum_{i = 0}^{2^k}  \left ( q_i(x) \cdot i \cdot u^{i-1}(x) \cdot u'(x) \right ) + \sum_{i = 0}^{2^k}  \left ( q_i'(x) \cdot u^{i}(x)  \right ).
\end{aligned}
\end{equation}
Note that $g(x)$ should be zero everywhere on  $ [0, c)$ since it is the derivative of a zero function $\det \left(u(x) I -(xA +B) \right)$.

By plugging $u(x) = \frac{2-2x}{4-5x}$ into~(\ref{eqn:derivative}), we have 
\[
g(x) = \sum_{i = 0}^{2^k}  \left ( q_i(x) \cdot i \cdot \left (\frac{2-2x}{4-5x} \right )^{i-1}  \cdot  \frac{2}{(4-5x)^2} \right ) + \sum_{i = 0}^{2^k}  \left ( q_i'(x) \cdot \left (\frac{2-2x}{4-5x} \right )^{i}   \right ).
\]
Since $g(x)$ is analytic over $(-\infty , \frac45)$, by the identity theorem, $g(x) = 0$ for $x \in [0, c)$ implies that $g(x) = 0$ for $x\in (-\infty, \frac45)$. We are going to show that 
\[
\lim_{x \to \frac45 -} g(x) \cdot (4-5x)^{2^k + 1} \ne 0,
\]
which leads to a contradiction. 
\[
g(x) \cdot (4-5x)^{2^k + 1} =\sum_{i = 0}^{2^k}  \left ( q_i(x) \cdot 2i \cdot (2-2x)^{i-1}  \cdot  (4-5x)^{2^k - i} \right ) + \sum_{i = 0}^{2^k}  \left ( q_i'(x) \cdot \left (2-2x\right )^{i} \cdot (4-5x)^{2^k - i}   \right ).
\] 

Finally, by the fact that $q_{2^k}(x) = 1$, we conclude that 
\[
	\lim_{x \to \frac45 -} g(x) \cdot (4-5x)^{2^k + 1} = 2^{k+1}\cdot \left (\frac56 \right)^{2^k - 1} \ne 0.
\]

\end{proof}

\section{A Quantum One-Way Mechanism for An Uncountable Menu}\label{sec:1-way-protocol}

In this section, we are going to construct an example of a prior over two additive, independent items, where the corresponding optimal auction can: (i) be implemented by a one-way protocol in $1$ qubit and $2$ classical bits; but (ii) requires an uncountably infinite menu.

\begin{theorem*}[Theorem~\ref{thm:1-way-protocol} restated] \hfill

    For the problem of auctioning two items to a single buyer, there is a Bayesian prior over independent item values, such that there is a revenue-optimal one-way quantum auction protocol where the buyer sends 1 qubit and 2 classical bits; yet no finite classical auction protocol can achieve the optimal revenue.
\end{theorem*}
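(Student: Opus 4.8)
The plan is to exhibit a concrete prior $f_1 \otimes f_2$ over $[0,1]^2$ together with curves $s_1,s_2$ and a critical price $P_{\mathrm{crit}}$ whose canonical partition satisfies the GK conditions, so that Theorem~\ref{thm:characterization} pins down the optimal utility function $u(\cdot,\cdot)$. I would engineer this example so that, restricted to one of the lottery regions (say $y=1$, $x$ small), the utility has the form $u(x,1)=v_1-\tfrac{49}{24}+\tfrac14\bigl(3v_2+\sqrt{121/4-10v_2+v_2^2}\bigr)$ advertised in the overview — i.e. something that is \emph{non-linear} but whose graph is an eigenvalue curve $\|xA+B\|$ of a Hermitian pencil, and whose asymptotics are those of a linear function (so the contradiction engine of Section~\ref{sec:limit-1-way}, which killed non-linear rational functions, does \emph{not} apply here). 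Concretely: the function $s(v_2)=\sqrt{121/4-10v_2+v_2^2}$ satisfies a quadratic $s^2 = (v_2-5)^2+\tfrac{21}{4}$, so $u$ is an algebraic function of degree $2$; the point is that $u$ itself is the top eigenvalue of a $2\times 2$ Hermitian matrix $xA+B$, which is exactly the shape a one-way $1$-qubit protocol can produce.

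The proof then splits into three independent pieces. \textbf{Piece 1 (the prior and the characterization):} choose $f_1,f_2$ (polynomial densities, as in~\cite{GK18,DaskalakisDT17}, normalized on $[0,1]$) and curves $s_1,s_2$, verify the four GK conditions of Definition~\ref{def:GK} by direct integration — positivity of $\mu$, the normalization $\int_Z \mu = 1$, and the two boundary integral identities — and invoke Theorem~\ref{thm:characterization} to conclude that the optimal utility is $u(x,y)=\max(0,\,x-s_1(y),\,y-s_2(x),\,x+y-P_{\mathrm{crit}})$, with the non-linear branch appearing in region $\mathcal A$ (or $\mathcal B$) near the boundary $y=1$. \textbf{Piece 2 (the one-way quantum upper bound):} exhibit explicit $2\times 2$ positive semidefinite $A^{(1)},A^{(2)},A^{(\mathrm{pay})}$ (equivalently the POVM elements $A_{0,0},A_{1,0},A_{0,1},A_{1,1}$ and payment matrix $Q$, with $\sum_\ell \Tr(A_\ell)=1$) such that the largest eigenvalue of $v_1 A^{(1)}+v_2 A^{(2)}-A^{(\mathrm{pay})}$ equals $u(v_1,v_2)$ on the whole square — or at least on the support-relevant region, patching the trivial regions $Z$ and $\mathcal W$ with the obvious deterministic options. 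Using the analysis template from Section~\ref{sec:limit-1-way}, the buyer's optimal message is the top eigenvector, so the induced mechanism's utility is exactly $u$; since $u$ uniquely determines the (revenue-optimal, by Theorem~\ref{thm:characterization}) mechanism, this one-way protocol — $1$ qubit for the eigenvector plus $2$ classical bits to name which of the $\le 4$ allocation/payment outcomes the seller's measurement produced — is revenue-optimal. \textbf{Piece 3 (the classical lower bound):} a finite classical auction protocol has $u$ piecewise linear with finitely many pieces (the menu-size/deterministic-CC correspondence, and more concretely: $u$ is a max of finitely many affine functions $v\cdot A(w)-Q(w)$ over menu entries $w$); but the branch of $u$ on $y=1$ is a genuine non-affine algebraic function on an interval, hence not piecewise linear with finitely many pieces there, contradiction. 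Because the optimal $u$ is \emph{unique} in regions $\mathcal A,\mathcal B,Z$ by Theorem~\ref{thm:characterization}, \emph{every} optimal classical mechanism would have to reproduce this curve, so no finite classical protocol attains optimal revenue.

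The main obstacle I anticipate is \textbf{Piece 1}: concocting $f_1,f_2,s_1,s_2,P_{\mathrm{crit}}$ that simultaneously (a) are legitimate (densities integrate to $1$; $s_i$ are $1$-Lipschitz, concave, non-increasing; the line $x+y=P_{\mathrm{crit}}$ meets both curves), (b) satisfy all four GK conditions exactly, and (c) make $u$ come out to the prescribed square-root form rather than, say, a rational function that Section~\ref{sec:limit-1-way} would rule out. The GK boundary conditions are ODE-like constraints tying $s_1$ to $f_1,f_2$, so in practice one reverse-engineers $f_1,f_2$ from a chosen nice $s_1,s_2$; getting everything to close in elementary form, and checking $\mu>0$ and the normalization numerically/symbolically, is the delicate bookkeeping. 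A secondary (but routine) obstacle in \textbf{Piece 2} is verifying that a single Hermitian pencil reproduces $u$ across region boundaries — in particular that the top-eigenvalue surface of $v_1A^{(1)}+v_2A^{(2)}-A^{(\mathrm{pay})}$ agrees with $\max(0,\ldots)$ globally; if a single pencil cannot cover all four regions, one falls back to having the seller first (classically, from the $2$ classical bits or from a coarse measurement) branch on the region and then apply the appropriate small measurement, which still costs only $O(1)$ qubits and bits. \textbf{Piece 3} is the easiest: it is essentially the observation, already flagged in the overview, that finite classical $\Leftrightarrow$ finitely-piecewise-linear $u$, combined with the uniqueness clause of Theorem~\ref{thm:characterization}.
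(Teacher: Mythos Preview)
Your three-piece plan is the paper's approach, and Piece 3 is exactly right. Two corrections are needed before the argument closes.

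In Piece 2 your primary description of the $2$ classical bits is backwards: in a one-way protocol the buyer cannot ``name the seller's measurement outcome''. The paper uses the bits the way your fallback suggests: the buyer sends $(b_1,b_2)$ to \emph{select the region} --- $(0,0)$ for $Z$ (trivial refusal), $(1,1)$ for $\mathcal{W}$ (grand bundle at $P_{\mathrm{crit}}$), $(0,1)$ for $\mathcal{A}$ (item $2$ at the constant price $s_2$), and only for $(1,0)$ does the seller apply a $2\times 2$ POVM to the qubit, whose top-eigenvalue curve gives $x-s_1(y)$. A single $2\times 2$ pencil cannot produce a $\max$ of four branches (its top eigenvalue is one analytic function), so the fallback is not optional but the mechanism itself.

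In Piece 1, ``polynomial densities'' will not work. The paper fixes $f_1\equiv 1$ (so $\mu(x,y)=3f_2(y)+yf_2'(y)$ is independent of $x$) and then the third GK bullet becomes an ODE $(1-s_1(y))(3f_2(y)+yf_2'(y))=f_2(y)$ which, with the chosen $s_1(y)=\tfrac{49}{24}-\tfrac14(3y+\sqrt{121/4-10y+y^2})$, forces $f_2$ to be a complicated algebraic (non-polynomial) density; $s_2$ then comes out constant from the fourth bullet, and $P_{\mathrm{crit}}$ from the normalization $\int_Z\mu=1$. So expect the verification to be symbolic/numerical rather than clean polynomial algebra.
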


At a high level, we construct the example in the following steps:
\begin{enumerate}
    \item  First, we want to apply Theorem~\ref{thm:characterization} to identify the utility function for optimal mechanisms. To simplify the verification of GK conditions, we choose $f_1(x) = 1$, for $x \in [0, 1]$, aka the value for item 1 is drawn uniformly from $[0,1]$. By choosing $f_1(x) = 1$, the measure $\mu(x, y)$ defined in GK conditions can be simplified to $\mu(x, y) = 3f_2(y) + yf'_2(y).$
    \item Next, we choose a non-increasing, 1-Lipschitz, concave function $s_1$. In addition, we require $s_1$ to be a non-piecewise-linear function, so the utility function in the region $\mathcal{B}$ of the canonical partition $u(x, y) = x - s_1(y)$ is non-piecewise-linear, which implies no finite menu can characterize it. Moreover, as we discussed in the last section, to be able to be implemented by a one-way quantum protocol, $s_1(y)$ has to be a function in the form $\|Ay + B \|$ for some Hermitian matrices $A$ and $B$. After some trial and error, it turns out that $s_1(y) =  \frac{49}{24} - \frac14 (3 y + \sqrt{121/4 - 10 y + y^2})$ is a good idea. 
    \item With the chosen $s_1$, the next step is to reverse-engineer Theorem~\ref{thm:characterization} to obtain a probability density function $f_2(y)$, function $s_2$ and critical price $P_{\mathrm{crit}}$ such that the canonical partition induced by $s_1, s_2, P_{\mathrm{crit}}$ satisfies GK conditions. In particular, By the third bullet of Definition~\ref{def:GK}, $f_2(y)$ has to satisfy the following ODE:
    \[
          (1 - s_1(y)) \left ( 3f_2(y) + yf'_2(y) \right) = f_2(y).
    \]
    \item Finally, we construct a one-way quantum protocol whose utility function is exactly the one given by Theorem~\ref{thm:characterization}.
\end{enumerate}

 \begin{figure} \label{fig:monster}
\caption{Probability density function of $f_2(y)$}
\begin{center}

 \begin{tikzpicture}
 
\begin{axis}[
    xmin = 0, xmax = 1,
    ymin = 0, ymax = 4]

        \addplot[
        domain = 0:1,
        samples = 200,
        smooth,
        thick,
        black,
    ] {(6234.27*(5 + sqrt(5737) + 12*x - 6*sqrt(121 - 40*x + 4*x^2))^(-(75/58) - 7551/(58*sqrt(5737)))*(10 - 2*x + sqrt(121 - 40*x + 4*x^2))^3*(-5 + sqrt(5737) - 12*x + 6*sqrt(121 - 40*x + 4*x^2))^(-(75/58) + 7551/(58*sqrt(5737))))/(11 - 2*x + sqrt(121 - 40*x + 4*x^2))^(99/29)};
\end{axis}
\end{tikzpicture}

This figure is the PDF of $f_2$. Moreover, $f_2$ is convex, decreasing over $[0, 1].$
\end{center}
 \end{figure}

By solving the ODE in bullet $2$, we obtain

\begin{equation} \label{eqn:monster}
    f_2(y) = \frac{c \cdot \left ( -6 g(y) + 12y + \sqrt{5737} + 5 \right )^{-\frac{75}{58} - \frac{7551}{58 \sqrt{5737}}} \left ( g(y) - 2y + 10 \right)^3\left (6g(y)-12y + \sqrt{5737}-5 \right)^{\frac{7551}{58 \sqrt{5737}} - \frac{75}{58}}} {\left (11 - 2 y + g(y) \right)^{99/29}},
\end{equation}

where $g(y) = \sqrt{4y^2 - 40y + 121}$ , and $c$ is the normalization factor such that 
\[
\int_{0}^{1} f_2(y) \dd{y} = 1. 
\]

\subsection{Optimal Mechanisms}

In this subsection, we will further define another non-increasing, 1-Lipschitz, concave function $s_2$ and critical price $P_{\mathrm{crit}} \in [0, 2]$. Next, we verify the canonical partition induced by $s_1, s_2$, and $P_{\mathrm{crit}}$ satisfy GK conditions and give the characterization of the optimal mechanism by applying Theorem~\ref{thm:characterization}.

First, one can verify that $\mu(x, y) = 3f_2(y) + yf'_2(y)$ is negative for all $(x, y) \in [0, 1]^2$.

Next, given $f_1(x) = 1$, by the last bullet of Definition~\ref{def:GK}, we set $s_2(x) \approx 0.558$ as a constant function such that $\int_{s_2(x)}^1 (3f_2(y) + yf'_2(y)) \dd y = f_2(1)$. 

Finally, we set $P_{\mathrm{crit}} \approx 0.669$ such that $\int_{Z} (3f_2(y) + yf'_2(y)) = 1 $. Moreover, line $x + y = P_{\mathrm{crit}}$ intersects both curves $x = s_1(y)$ and $y = s_2(x)$. With $s_1, s_2$ and $P_{\mathrm{crit}}$ we now have $x_{\mathrm{crit}} \approx 0.111$ and   $y_{\mathrm{crit}} \approx 0.005$. By definition, the canonical partition induced by $s_1, s_2$ and $P_{\mathrm{crit}}$ satisfies GK conditions. Therefore, by Theorem~\ref{thm:characterization}, we give the following characterization of the optimal mechanism for selling two items to a single additive buyer with values independently distributed according to $f_1 = 1$ and $f_2$ given by (\ref{eqn:monster}).

 \begin{figure}

 \caption{The canonical partition induced by $s_1$, $s_2$, $P_{\mathrm{crit}}$.} \label{fig:partitionmonster}
 \begin{center}
      \begin{tikzpicture}
\begin{axis}[
	clip=false,
    xmin = 0, xmax = 1,
    ymin = 0, ymax = 1]
    
          \addplot[
        domain = 0.66412:0.666666,
        samples = 200,
        smooth,
        black,
        thick,
    ] {1/48*(117 - 72* x - sqrt(3193 + 1968* x + 576* x^2))};
    \addplot[
        domain = 0.111:0.66412,
        samples = 10,
        smooth,
        black,
    ] {0.669-x};

    \draw  (0.111,0.558) -- (0.111,1) node[anchor=south, font=\tiny] {$x_\textrm{crit} \approx 0.111$};
        \draw  (0.111,0.558) -- (0,0.558|-{axis cs:0,0.558}) node[anchor=east, font=\tiny] {$s_2 = 0.558$}  ;
         \draw  (0.66412, 0.00488) -- (1, 0.00488) node[anchor = west, font=\tiny] {$y_\textrm{crit} \approx 0.005$}  ;
         \node at (axis cs:0.2,0.2){$Z$};
         \node at (axis cs:0.05,0.8){$\mathcal{A}$};
         \node at (axis cs:0.6,0.6){$\mathcal{W}$};
         \node (nodeb) at (axis cs:0.8, 0.003){};
         \node (nodew) at (axis cs: 0.8, 0.2) {};
         \draw [->] (nodew)--(nodeb);
         \node at (axis cs: 0.8, 0.21){\small{$\mathcal{B}$}};

\node[anchor=north, font=\tiny] at (axis cs:0.666666,0) {$\frac23$};
\end{axis}
\end{tikzpicture}
 \end{center}
 \end{figure}

\begin{figure} \label{fig:zoomin}
        \caption{A close-up of the tripoint.}
    \begin{center}
 \begin{tikzpicture}
 
\begin{axis}[
	clip=false,
    xmin = 0.655, xmax = 0.675,
    ymin = 0, ymax = 0.02 ,
    tick label style={
    /pgf/number format/.cd,
    fixed,
    precision=3,
    /tikz/.cd
  }
    ]
    
      \addplot[
        domain = 0.66412:0.666666,
        samples = 200,
        smooth,
        thick,
    ] {1/48*(117 - 72* x - sqrt(3193 + 1968* x + 576* x^2))};
    \addplot[
        domain = 0.655:0.66412,
        samples = 10,
        smooth,
        black,
    ] {0.669-x};
 
\node at (axis cs:0.66,0.004){$Z$};
\node at (axis cs:0.67,0.003){$\mathcal{B}$};
\node at (axis cs:0.666,0.012){$\mathcal{W}$};

 \node at (axis cs:0.668,0.002){\small{$s_1$}};
          \draw [->] (0.667, 0.002) -- (0.666,0.002);

         \draw [dotted]  (0.66412, 0.00488) -- (0.675, 0.00488) node[anchor = west, font=\tiny] {$y_\textrm{crit} \approx 0.005$}  ;

\node[anchor=north, font=\tiny] at (axis cs:0.666666,0) {$\frac23$};
\end{axis}
\end{tikzpicture}
    
     This figure offers a close-up of the tripoint where regions $Z$, $\mathcal{B}$, and $\mathcal{W}$ meet, as seen in Figure~\ref{fig:partitionmonster}. Notice that $s_1$ is actually slightly curved even though it looks linear in the figure.
     \end{center}
\end{figure}

\begin{lemma} \label{lem:monster}
    The optimal mechanism for selling two items to a single additive buyer with values independently distributed according to $f_1 = 1$ and $f_2$ defined by (\ref{eqn:monster}) is given by the utility function 
    \[
    u(x, y) = \max(0, x - s_2(y), y - s_1(x), x + y - P_{\mathrm{crit}}).
    \]
    Specifically, $u(x, y) =  x - s_2(y)$ in region $B$ is not a piecewise linear function, and it is the unique utility function for all optimal mechanisms in region $B$.
\end{lemma}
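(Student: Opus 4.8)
The plan is to derive Lemma~\ref{lem:monster} directly from Theorem~\ref{thm:characterization}, so essentially all the work is in checking that the canonical partition $(Z,\mathcal{A},\mathcal{B},\mathcal{W})$ induced by $s_1$, the constant $s_2\approx 0.558$, and $P_{\mathrm{crit}}\approx 0.669$ satisfies the GK conditions (Definition~\ref{def:GK}) with respect to $f_1\equiv 1$ and $f_2$ given by~(\ref{eqn:monster}). First I would record the elementary structural facts. Writing $q(y)=y^2-10y+\tfrac{121}{4}$, the discriminant of $q$ is negative so $q>0$ and $\sqrt q$ is real-analytic on $[0,1]$; a short computation gives $2q q''-(q')^2=21>0$, hence $\sqrt q$ is convex, so $s_1(y)=\tfrac{49}{24}-\tfrac34 y-\tfrac14\sqrt{q(y)}$ is concave, and differentiating gives $-1<s_1'(y)<0$ on $[0,1]$, so $s_1$ is also non-increasing and $1$-Lipschitz (the constant $s_2$ trivially satisfies all three). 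I would also confirm that $f_2$ from~(\ref{eqn:monster}) is a bona fide density: it is positive on $[0,1]$ so the normalizing constant $c$ is finite, and it is convex and decreasing (Figure~\ref{fig:monster}).

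The key simplification, already used implicitly in Section~\ref{sec:1-way-protocol}, is that since $f_1\equiv 1$ (so $f_1'\equiv 0$) the measure $\mu(x,y)=3f_2(y)+yf_2'(y)$ does not depend on $x$; write it $\mu(y)$. Then the four GK conditions reduce as follows. (i) The third bullet, $\int_{s_1(y)}^1\mu\,\dd x=f_1(1)f_2(y)$, becomes $(1-s_1(y))\,\mu(y)=f_2(y)$, which is exactly the defining ODE $(1-s_1(y))(3f_2(y)+yf_2'(y))=f_2(y)$ from the construction; so I would verify by direct differentiation that~(\ref{eqn:monster}) solves this ODE, which makes the third bullet hold for all $y\in[0,1]$, a fortiori on $[0,y_{\mathrm{crit}}]$. (ii) Rearranging the same ODE gives $\mu(y)=f_2(y)/(1-s_1(y))$; since $f_2>0$ and $s_1(y)\le s_1(0)=\tfrac23<1$, this is positive on all of $[0,1]^2$, which is the first bullet (and corrects the sign in the surrounding discussion). (iii) For the fourth bullet, with $s_2$ constant the requirement is the single scalar equation $\int_{s_2}^1\mu(y)\,\dd y=f_2(1)$; integrating by parts, $\int_0^1\mu(y)\,\dd y=3+\bigl(f_2(1)-1\bigr)=2+f_2(1)$, so $\int_{t}^1\mu$ decreases continuously from $2+f_2(1)$ to $0$ as $t$ runs from $0$ to $1$ and therefore equals $f_2(1)$ at a unique $t=s_2\in(0,1)$, numerically $\approx 0.558$. (iv) For the second bullet, with $s_1,s_2$ now fixed, $\int_Z\mu$ is a continuous increasing function of $P_{\mathrm{crit}}$ ranging upward from $0$, so one selects the value $P_{\mathrm{crit}}\approx 0.669$ at which $\int_Z\mu=1$; simultaneously I would check that the line $x+y=P_{\mathrm{crit}}$ meets both curves $x=s_1(y)$ and $y=s_2$ (it suffices that $P_{\mathrm{crit}}>s_2$ and that the intersection with $x=s_1(y)$ occurs at $y\ge 0$), which makes the canonical partition well-formed and yields $x_{\mathrm{crit}}\approx 0.111$, $y_{\mathrm{crit}}\approx 0.005$.

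With the GK conditions in place, Theorem~\ref{thm:characterization} immediately gives the optimal utility $u(x,y)=\max\bigl(0,\,x-s_1(y),\,y-s_2(x),\,x+y-P_{\mathrm{crit}}\bigr)$ together with its region-wise description, and its uniqueness on $Z$, $\mathcal{A}$, $\mathcal{B}$. It then remains to argue non-piecewise-linearity on $\mathcal{B}$: there $u(x,y)=x-s_1(y)$, and $s_1$ contains the term $-\tfrac14\sqrt{q(y)}$ whose second derivative $-\tfrac{21}{16}q(y)^{-3/2}$ never vanishes on $[0,1]$, so $u$ is not affine on any open subset of $\mathcal{B}$; since a finite classical menu can only realize a utility that is piecewise linear with finitely many pieces (as noted in Section~\ref{sec:overview}), no finite classical auction protocol can reproduce the optimal mechanism. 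Uniqueness on $\mathcal{B}$ is precisely the final assertion of Theorem~\ref{thm:characterization}.

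I expect the main obstacle to be the purely computational verifications rather than any conceptual difficulty: namely, (a) confirming that the explicit expression~(\ref{eqn:monster}) really does solve the ODE $(1-s_1(y))(3f_2(y)+yf_2'(y))=f_2(y)$ and is positive, convex and decreasing on $[0,1]$ with finite normalization $c$, and (b) the numerical determination and mutual consistency of $s_2$, $P_{\mathrm{crit}}$, $x_{\mathrm{crit}}$, $y_{\mathrm{crit}}$, including the geometric condition that the critical line crosses both boundary curves so that Figures~\ref{fig:partitionmonster}--\ref{fig:zoomin} are accurate. Thanks to the choice $f_1\equiv 1$, everything GK-specific collapses to that one ODE plus two one-dimensional root-finding problems, so no genuinely delicate estimate should be needed.
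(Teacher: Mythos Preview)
Your proposal is correct and follows the same route as the paper: verify the GK conditions for the canonical partition induced by $s_1$, the constant $s_2\approx 0.558$, and $P_{\mathrm{crit}}\approx 0.669$, then invoke Theorem~\ref{thm:characterization}. The paper does not give a separate proof block for this lemma; its ``proof'' is precisely the surrounding discussion in Section~\ref{sec:1-way-protocol}, which records the same reductions you make (the ODE for the third bullet, the scalar equation for $s_2$ from the fourth bullet, the choice of $P_{\mathrm{crit}}$ from the second bullet) and then appeals to Theorem~\ref{thm:characterization}. You are more explicit about why $\mu>0$ (via $\mu=f_2/(1-s_1)$), about the existence of $s_2$ and $P_{\mathrm{crit}}$ via monotonicity and the intermediate value theorem, and about why $s_1$ is concave, $1$-Lipschitz, and non-affine; you also correctly note and fix two typos in the paper (the text says $\mu$ is ``negative'' where it means positive, and the lemma statement writes $x-s_2(y)$ on $\mathcal{B}$ where Theorem~\ref{thm:characterization} gives $x-s_1(y)$).
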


\subsection{Exact One-Way Quantum Protocol}

In this subsection, we give an IC one-way quantum protocol with exactly the same utility as the one characterized in Lemma~\ref{lem:monster}. 
\paragraph{Protocol implementation}
In the first (and the only) round, the buyer send a single qubit with reduced density matrix $\rho$, and two classical bits $b_1, b_2$. If $b_1 = b_2 = 0$, then the seller terminates the protocol with empty allocation and payment $0$. If $b_1 = b_2 = 1$, then the seller terminates the protocol with allocation $\{ 1, 2\}$ and payment $P_{\mathrm{crit}} \approx 0.669$. If $b_1 = 0, b_2 = 1$, then the seller terminates the protocol with alloation $\{ 2\}$ and payment $s_2(0) \approx 0.558$. Finally, if $b_1 = 1, b_2 = 0$, the seller measures the qubit using the following POVM and terminates the protocol with allocation and payment associated with each measurement outcome.

\begin{itemize}
	\item \[ 
	A_1 = \begin{pmatrix}
2/5 & \sqrt{21}/16\\
\sqrt{21}/16 & 1/4 
\end{pmatrix}, 	a_{1} = \{ 1, 2\}, p_1 = 2.
      \]
    
    \item \[ 
	A_2 = \begin{pmatrix}
2/5 & -\sqrt{21}/16\\
-\sqrt{21}/16 & 1/4 
\end{pmatrix}, 	a_{2} = \{1, 2\}, p_2 = 0.
      \]
      \item \[ 
	A_3 = \begin{pmatrix}
1/5 & 0\\
0 & 0 
\end{pmatrix}, 	a_{3} = \{1, 2\}, p_3 = \frac{299}{24}.
      \]
      \item \[ 
	A_4 = \begin{pmatrix}
0 & 0\\
0 & 1/2 
\end{pmatrix}, 	a_{4} = \{1\}, p_4 = \frac{7}{12}.
      \]

\end{itemize}

 We define the buyer's suggested strategy as the optimal response to the seller's strategy, given his private value $x$ and $y$. Before we dive into the analysis of buyer's optimal strategy, it is worth noting that this protocol is automatically IC as the suggested buyer's strategy is exactly the optimal one. 

\paragraph{Buyer's optimal strategy}
First it is easy to see that for given the buyer's type $(x, y)$, his expected utility is $0$, $y - s_2(x)$, $x + y - P_{\mathrm{crit}}$ when sending $b_1 = b_2 = 0$, $b_1 = 0, b_2 = 1$, $b_1 = b_2 = 1$, respectively.
Next we consider the buyer's utility when he reports $b_1 = 1, b_2 = 0$ with type $(x, y)$,  we define
\begin{align*}
V(x, y) &= x \cdot (A_1+A_2 +A_3+A_4)+y\cdot (A_1 + A_2+A_3) \\
& = xI + \begin{pmatrix}
	y & 0\\
0 & y/2 
\end{pmatrix}.
\end{align*}
Similarly, we can define the payment matrix $Q$ as follows,
\begin{align*}
	Q & = p_1A_1 + p_2A_2+p_3A_3+p_4A_4 \\
	& = \begin{pmatrix}
	79/24 & \sqrt{21}/8\\
\sqrt{21}/8 & 19/24
\end{pmatrix}.
\end{align*}

Therefore, the optimal strategy is to send a state corresponding to an eigenvector of the largest eigenvalue of $V(x, y) - Q$, and the optimal utility is given by

\begin{align*}
    \| V(x, y) - Q \| & = \left \| xI + \begin{pmatrix}
	y & 0\\
0 & y/2 
\end{pmatrix} -  \begin{pmatrix}
	79/24 & \sqrt{21}/8\\
\sqrt{21}/8 & 19/24
\end{pmatrix}  \right  \| \\
& = x + \left \| \begin{pmatrix}
	y-79/24 & - \sqrt{21}/8\\
-\sqrt{21}/8 & y/2 - 19/24
\end{pmatrix} \right \| \\
& = x - \frac{49}{24} + \frac14 (3 y + \sqrt{121/4 - 10 y + y^2})\\
& = x - s_1(y). 
\end{align*}

Therefore, the utility function of the optimal strategy is given by 
\[
 \max(0, x - s_1(y), y - s_2(x), x + y - P_{\mathrm{crit}}),
\]
which is exactly the same as the one characterized in Lemma~\ref{lem:monster}, as desired.
 
\section{(Barely) interactive one-way quantum auction protocols}\label{sec:barely-interactive}

In Section~\ref{sec:limit-1-way} we see an example of a prior whose optimal mechanism cannot be implemented by a finite one-way quantum auction protocol. In this section, we introduce a barely interactive multi-round quantum auction protocol which is optimal for this example, i.e.  $f_1(x) = 2(1-x), f_2(y) = 2(1-y)$ for $(x, y) \in [0, 1]^2$.

\begin{theorem}[Theorem~\ref{thm:interactive-vs-1-way}]
For the problem of auctioning two items to a single buyer, there is a Bayesian prior over independent item values, such that there is a revenue-optimal quantum auction protocol where the seller sends 1 qubit to the buyer, who replies with 1 qubit and 2 classical bits; yet no finite classical or one-way quantum auction protocol can achieve the optimal revenue.
\end{theorem}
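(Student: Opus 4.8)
\emph{The lower-bound direction (no finite classical or one-way quantum protocol).}
The prior is the i.i.d.\ $\mathrm{Beta}(1,2)$ prior, $f_1(x)=2(1-x)$ and $f_2(y)=2(1-y)$ on $[0,1]$, which is exactly the prior analyzed in Section~\ref{sec:limit-1-way}. By~\cite{DaskalakisDT17} the optimal mechanism is uniquely characterized by a utility function that restricts, on the edge $y=1$ and $x\in[0,0.06]$, to $u(x,1)=\frac{2-2x}{4-5x}$, a non-linear rational function. A finite classical auction protocol has a finite menu (by the menu-size characterization of~\cite{BabaioffGN22,RubinsteinZ21}), so its induced utility function is piecewise linear with finitely many pieces, and hence cannot agree with a non-linear rational function on a full-dimensional sub-interval of the edge. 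A finite one-way quantum protocol is ruled out directly by the theorem of Section~\ref{sec:limit-1-way}: for any non-linear rational $R$ and any $r>0$, no IC one-way quantum protocol has $u(x,1)=R(x)$ on $[0,r)$. Thus no finite classical or one-way quantum protocol achieves the optimal revenue.

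\emph{The barely-interactive protocol.}
The seller prepares an EPR pair $\lvert\Phi^+\rangle=\tfrac1{\sqrt2}(\lvert00\rangle+\lvert11\rangle)$, keeps qubit $A$, and sends qubit $B$ to the buyer; the buyer, holding type $(x,y)$, returns qubit $B$ together with two classical bits $b_1b_2$ that select a mode. Three modes are ``deterministic'' --- empty allocation with payment $0$; both items with payment $P_{\mathrm{crit}}$; and the single-item boundary lotteries corresponding to regions $\mathcal A$ and $\mathcal B$ (which are mirror images of one another by the i.i.d.\ symmetry, so the two single-item modes use POVMs obtained from each other by swapping the items). In the remaining (measurement) mode the buyer first applies a type-dependent quantum channel $\mathcal N_{x,y}$ to qubit $B$ before returning it, and the seller then measures the two-qubit state she holds with a \emph{fixed} two-qubit POVM $\{A_\ell\}$, outputting the allocation $a_\ell$ and payment $p_\ell$ attached to the observed outcome. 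The suggested buyer strategy is, for each type, the mode and channel that maximizes his utility.

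\emph{Reducing the buyer's optimal strategy to an SDP.}
In the measurement mode the returned two-qubit state is exactly the normalized Choi matrix $\sigma_{AB}=(\mathrm{id}_A\otimes\mathcal N_{x,y})(\lvert\Phi^+\rangle\langle\Phi^+\rvert)$; since $\mathcal N_{x,y}$ is trace-preserving, $\operatorname{Tr}_B[\sigma_{AB}]=\operatorname{Tr}_B[\lvert\Phi^+\rangle\langle\Phi^+\rvert]=\tfrac12 I_A$, and as $\mathcal N_{x,y}$ ranges over all channels $\sigma_{AB}$ ranges over the whole spectrahedron $\{\sigma\succeq0:\operatorname{Tr}_B[\sigma]=\tfrac12 I_A\}$ (an ancilla kept by the buyer is never measured and is irrelevant). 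The buyer's expected utility is $\operatorname{Tr}[M_{x,y}\,\sigma_{AB}]$ where $M_{x,y}=xA^{(1)}+yA^{(2)}-Q$ is an affine family of $4\times4$ Hermitians built from $\{A_\ell\}$ and the associated allocations and payments, so his optimal utility in this mode is
\[
u(x,y)\;=\;\max_{\sigma\succeq0,\ \operatorname{Tr}_B[\sigma]=\frac12 I_A}\operatorname{Tr}[M_{x,y}\,\sigma]\;=\;\min\Bigl\{\tfrac12\operatorname{Tr}[H]\ :\ H=H^{\dagger}\in\mathbb C^{2\times2},\ H\otimes I_2\succeq M_{x,y}\Bigr\},
\]
the last equality by SDP strong duality (Slater's condition holds at $\sigma=\tfrac14 I_{AB}$). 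The point is that this minimization of a linear functional of a two-by-two Hermitian matrix over a spectrahedron cut out by a $4\times4$ LMI can realize non-linear rational functions of $(x,y)$ --- strictly richer than the ``affine-plus-square-root-of-affine'' largest-eigenvalue expression forced in the one-way setting (cf.\ Section~\ref{sec:limit-1-way} and Section~\ref{sec:1-way-protocol}) --- which is exactly what lets a qubit of interaction beat one-way quantum.

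\emph{Finishing and the main obstacle.}
It then remains to (i) choose the fixed POVM, i.e.\ the affine family $M_{x,y}$ (and the deterministic modes), so that the SDP value above equals the target $u(x,y)=\max\bigl(0,\,x-s_1(y),\,y-s_2(x),\,x+y-P_{\mathrm{crit}}\bigr)$ of the $\mathrm{Beta}(1,2)$ optimal auction, in particular reproducing $\frac{2-2x}{4-5x}$ on the edge $y=1$; (ii) verify that this $u$ is non-negative, non-decreasing, convex and $1$-Lipschitz in $\ell_1$, so that it induces a legitimate IC and IR mechanism, and that it is the revenue-optimal one for this prior (via~\cite{DaskalakisDT17}, which gives uniqueness even though the GK conditions of Definition~\ref{def:GK} fail here); and (iii) conclude incentive compatibility of the protocol: the buyer's best response --- optimizing over all channels in the measurement mode and over the finitely many classical modes --- attains precisely $u(x,y)$, and the attaining strategy induces the optimal mechanism's allocation-payment distribution, so no deviating strategy can do better; individual rationality follows the same way. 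The main obstacle is step (i): explicitly constructing a $4\times4$ affine family whose SDP optimum is the prescribed semialgebraic (indeed rational, on the boundary) function throughout $[0,1]^2$, together with a matching dual certificate $H(x,y)$ --- this is the analog of the reverse-engineering done in Section~\ref{sec:1-way-protocol}, but harder because an SDP value is a less transparent object than a top eigenvalue.
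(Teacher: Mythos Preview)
Your plan is essentially the paper's: same $\mathrm{Beta}(1,2)$ prior, same lower bound via Section~\ref{sec:limit-1-way}, same EPR-then-return architecture, and the same key observation that the buyer's reachable set of two-qubit states is exactly $\{\rho\succeq 0:\operatorname{Tr}_B\rho=\tfrac12 I\}$. The one place your outline diverges from the paper is the verification step, and the one genuine gap is precisely the step you flag as ``the main obstacle.''

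On the gap: the paper does not leave the POVM abstract. It writes down three explicit $4\times4$ matrices $A_1,A_2,A_3$ (with allocations $\pi,\pi,\{1,2\}$ and payments $3,0,0$, where $\pi$ is the single item selected by the classical bits) and checks $A_1+A_2+A_3=I$. Also, a small correction to your mode count: there are only \emph{two} deterministic modes ($b_1b_2=00$ and $11$); both $b_1\neq b_2$ cases are measurement modes related by the item swap, so your ``single-item boundary lotteries'' are not deterministic menu entries but the symmetric pair of POVM branches.

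On the verification: rather than your dual certificate $H(x,y)$, the paper stays in the primal. After subtracting $xI$ it must maximize $\langle\rho,C(y)\rangle$ over $\{\rho\succeq0,\ \operatorname{Tr}_2\rho=\tfrac12 I\}$, where $C(y)$ has nonzero entries only on the diagonal and the two corners. A short claim (zero out all but diagonal and corner entries of $\rho$; this preserves feasibility and the objective) reduces the SDP to a one-parameter problem in $\alpha=\rho_{11}\in[0,\tfrac12]$, which is solved in closed form and gives exactly $x - s_1(y)$ with $s_1(y)=\frac{2-3y}{4-5y}$ on $[0,\tfrac{4}{15}]$. This primal reduction is both simpler and more transparent here than chasing a dual $H$, because the corner-only structure of $C(y)$ makes the optimal $\rho$ essentially rank-one-plus-slack and visible by inspection; your duality route would work in principle but would still require guessing $H(x,y)$, which amounts to re-deriving the same closed form.

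Finally, the IC/IR wrap-up is exactly as you say: once the four branch utilities equal $0$, $y-s_2(x)$, $x-s_1(y)$, $x+y-P_{\mathrm{crit}}$ (with the extra linear piece on $[\tfrac{4}{15},1]$ dominated by the grand-bundle branch), the maximum matches the unique optimal utility of Lemma~\ref{lem:ddtutility}, so the suggested strategy is best-response and revenue is optimal.
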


Due to~\cite{DaskalakisDT17} Section $8.2.1$, the (unique) optimal mechanism for this example can be characterized by the following lemma.

\begin{lemma}[\cite{DaskalakisDT17}]
\label{lem:ddtutility}
    The optimal mechanism for selling two items to a single additive buyer with prior $f_1(x) = 2(1-x), f_2(y) = 2(1-y)$ is given by the utility function\footnote{The expression looks slightly different from the one given in~\cite{DaskalakisDT17}, but they are identical on $[0, 1]^2$.}
\[
  u(x, y) = \max(0, x - s_1(y), y - s_2(x), x + y - P_{\mathrm{crit}}),
\]
where $P_{\mathrm{crit}} \approx 0.5535$ is the critical price defined in~\cite{DaskalakisDT17} Section $8.2.1$, and 
\[
s_1(y) =
\begin{cases}
    \frac{2-3y}{4-5y} \quad y \in [0, \frac{4}{15}] \\
    2 \quad y \in (\frac{4}{15}, 1] 
\end{cases},
\]

\[
s_2(x) =
\begin{cases}
    \frac{2-3x}{4-5x} \quad x \in [0, \frac{4}{15}] \\
    2 \quad x \in (\frac{4}{15}, 1] 
\end{cases}.
\]

\end{lemma}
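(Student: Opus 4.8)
The plan is to observe that this lemma merely transcribes into our notation the explicit solution of the two-item additive monopolist problem with i.i.d.\ $\mathrm{Beta}(1,2)$ values (density $2(1-t)$ on $[0,1]$) carried out in \cite[Section~8.2.1]{DaskalakisDT17}; so the work splits into (a) recalling why that solution is optimal and unique, and (b) checking that the $\max$-form utility function together with the piecewise $s_1,s_2$ we wrote coincide, on $[0,1]^2$, with the region-by-region description of \cite{DaskalakisDT17} (the cosmetic discrepancy flagged in the footnote). Concretely: first form the candidate mechanism $\mathcal M=(\mathcal A,\mathcal Q)$ from $u$ via $\mathcal A(v)=\nabla u(v)$ and $\mathcal Q(v)=\mathcal A(v)\cdot v-u(v)$; second, verify that $u$ is a legitimate utility function; third, invoke the strong-duality certificate of \cite{DaskalakisDT17} for this prior to conclude optimality and uniqueness; fourth, reconcile the two ways of writing $u$.

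For the second step, differentiating $s_1(y)=\tfrac{2-3y}{4-5y}$ gives $s_1'(y)=-2/(4-5y)^2\in[-1,0]$ on $[0,\tfrac4{15}]$ with $s_1''<0$, so $s_1$ (and by the $x\leftrightarrow y$ symmetry of the prior, $s_2$) is $1$-Lipschitz, concave and non-increasing there; extending by the constant $2$ preserves what we need (see the last paragraph). Hence $u(x,y)=\max(0,\,x-s_1(y),\,y-s_2(x),\,x+y-P_{\mathrm{crit}})$ is a maximum of convex, non-negative, non-decreasing functions that are each $1$-Lipschitz in the $\ell_1$ norm, so $u$ inherits all these properties, and by the characterization recalled in Section~\ref{sec:prelim-2-items} it defines a valid IC and IR mechanism. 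For the third step I would appeal directly to the duality of \cite{DaskalakisDT17}: for exactly this prior they exhibit a dual transport measure $\gamma$ on $[0,1]^2$ whose boundary marginals are supported on the curves $x=s_1(y)$, $y=s_2(x)$ and the line $x+y=P_{\mathrm{crit}}$, for which the complementary-slackness conditions hold (the allocation $\nabla u$ equals $(0,0)$ on $Z$, $(1,1)$ on $\mathcal W$, and the prescribed nontrivial lottery on $\mathcal A,\mathcal B$), certifying that $u$ maximizes revenue; the strict form of these conditions (as in the uniqueness clause of Theorem~\ref{thm:characterization}) yields uniqueness. The constants $x_{\mathrm{crit}}=y_{\mathrm{crit}}=\tfrac4{15}$ and $P_{\mathrm{crit}}\approx0.5535$ are the ones pinned down there by the normalization and marginal-matching integrals. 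I would also note explicitly that the GK Conditions of Definition~\ref{def:GK} are \emph{not} satisfied here --- already because region $\mathcal B$ cannot be described by a function $s_1:[0,1]\to[0,1]$, and because $\mu$ changes sign on $[0,1]^2$ --- which is precisely why we cannot route through Theorem~\ref{thm:characterization} and must cite \cite{DaskalakisDT17} directly.

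The only genuinely new bookkeeping is step four, and the point worth spelling out is why padding $s_1$ by the value $2$ on $(\tfrac4{15},1]$ is harmless: for such $y$ we have $x-s_1(y)=x-2<0\le u(x,y)$ for every $x\in[0,1]$, so that term never realizes the maximum and $u$ reduces to $\max(0,\,y-s_2(x),\,x+y-P_{\mathrm{crit}})$ above the line $y=\tfrac4{15}$, which is exactly the description of regions $Z$, $\mathcal A$, $\mathcal W$ there in \cite{DaskalakisDT17}; the symmetric statement handles $s_2$ and $x>\tfrac4{15}$. Since the heavy analytic lifting --- constructing the dual transport and establishing strong duality --- is done in \cite{DaskalakisDT17}, I expect the only real obstacle to be this matching of expressions (and re-deriving the numerical constants to confirm they agree with theirs), which should be routine.
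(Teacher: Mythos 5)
Your proposal is correct and matches the paper's treatment: the paper gives no proof of this lemma at all, simply citing \cite{DaskalakisDT17} Section 8.2.1 for optimality and uniqueness and asserting in a footnote that the two expressions agree on $[0,1]^2$ --- exactly the reconciliation you carry out in your final step (the capped term $x-s_1(y)=x-2$ never attains the maximum). One small caution: your step-two claim that $u$ is a maximum of convex, $1$-Lipschitz functions is not literally true with the capped $s_1$, which is neither concave nor continuous on all of $[0,1]$, but this is harmless since convexity and the other properties of $u$ follow once the identity with the \cite{DaskalakisDT17} utility function is in hand.
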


\subsection*{Protocol implementation}

The seller's strategy is as follows.

In this protocol, the seller first prepares an EPR pair: $\frac{1}{\sqrt{2}}\left ( \ket{0}\ket{0} + \ket{1}\ket{1} \right)$ and sends one qubit of the EPR pair to the buyer.  

The density matrix of an EPR pair is 
\[
\rho_{\textrm{EPR}} = \begin{pmatrix}
\frac12 & 0 & 0 & \frac12  \\
0 & 0 & 0 & 0 \\
0 & 0 & 0 & 0 \\
\frac12 & 0 & 0 & \frac12 
\end{pmatrix}.
\]

Next, the seller receives one qubit from the buyer (and two classical bits). We denote by $b_1, b_2$ the two classical bits. If $b_1 = b_2 = 0$, the protocol terminates with empty allocation and payment $0$.  If $b_1 = b_2 = 1$  the protocol terminates with allocation $\{1, 2\}$ and payment $P_{\mathrm{crit}} \approx 0.5535$ defined in~\cite{DaskalakisDT17}. If $b_1 \ne b_2$, the seller measures the joint state (two qubits) of his half of the EPR pair and the qubit she receives from the buyer. 
The seller will use the following POVM and corresponding allocation and payments. For convenience, we define bundle $\pi = \{1\}$ if $b_1 = 1$, and  $\pi = \{2\}$ if $b_2 = 1$.

\begin{itemize}

\item \[ 
	A_1 = \begin{pmatrix}
\frac{3}{20} & 0 & 0 & \frac{1}{10}\\
0 & 0 & 0 & 0 \\
0 & 0 & \frac{2}{5} & 0 \\
\frac{1}{10} & 0 & 0 & \frac{2}{5}
\end{pmatrix}, a_{1} = \pi, p_1 = 3.
      \]
      	\item \[ 
	A_2 = \begin{pmatrix}
\frac{23}{80} & 0 & 0 & -\frac{1}{10}\\
0 & 1 & 0 & 0 \\
0 & 0 & \frac{3}{5} & 0\\
-\frac{1}{10} & 0 & 0 & \frac{3}{5}
\end{pmatrix}, 	a_{2} = \pi, p_2 = 0.
      \]
      \item \[ 
	A_3 = \begin{pmatrix}
\frac{9}{16} & 0 & 0 & 0 \\
0 & 0 & 0 & 0 \\
0 & 0 & 0 & 0 \\
0 & 0 & 0 & 0 
\end{pmatrix}, 	a_{3} = \{1, 2\}, p_3 = 0.
      \]
      
  \end{itemize}
  
 We can verify by calculation that this is a valid POVM as all three matrices are positive semidefinite and $A_1 + A_2 + A_3 = I.$ 

 We define the buyer's suggested strategy as the optimal response to the seller's strategy, given his private value $x$ and $y$. Before we dive into the analysis of buyer's optimal strategy, it is worth noting that this protocol is automatically IC as the suggested buyer's strategy is exactly the optimal one. 

\subsection*{Structure of buyer's optimal strategy}

Our first question is, what joint state of two qubits can be achieved by the buyer operating only on his qubit (before sending it back to the seller)?
 An obvious restriction is that the reduced density matrix of the first qubit of the seller (her half of EPR pair) should be   \[
 \begin{pmatrix}
1/2 & 0\\
0 & 1/2 
\end{pmatrix}, 
\] which is exactly the reduced density matrix of the first qubit of an EPR pair, since the buyer has no way to touch the seller's half. Furthermore, by the unitary equivalence of purifications, one can see that the joint state the seller has in the end can be any two-qubit  state that satisfies this condition. 
In other words, the buyer can control the joint state $\rho$ that the seller measures as long as it satisfies the condition $\Tr_{2} (\rho) = \frac12 I$, where $\Tr_2(\cdot)$ is the partial trace operator that traces out the second qubit from the system and get the reduced density matrix of the first qubit.

Suppose 
\[
\rho = \begin{pmatrix}
\rho_1 & \rho_2 & \rho_3 & \rho_4 \\
\rho_5 & \rho_6 & \rho_7 & \rho_8 \\
\rho_9 & \rho_{10} & \rho_{11} & \rho_{12} \\
\rho_{13} & \rho_{14} & \rho_{15} & \rho_{16} 
\end{pmatrix},
\] then 
\[
\Tr_2(\rho) = \begin{pmatrix} 
\rho_1 + \rho_6 & \rho_3 + \rho_8 \\
\rho_9 + \rho_{14} & \rho_{11} + \rho_{16}
\end{pmatrix}.
\]

Wlog, we only consider the optimal utility the buyer can obtain conditioned on he sends $b_1 = 1, b_2 = 0$.

The optimization problem that the buyer with type $(x, y)$ want to solve is the following:
 
 \begin{align*}
 & \max_{\rho \succeq 0, \Tr_2(\rho) = I/2}   \left \langle \rho, x \cdot \left ( F_1 + F_2 + F_3  \right) + y \cdot F_3 - 3\cdot F_1  \right \rangle \\
  =  & \max_{\rho \succeq 0, \Tr_2(\rho) = I/2} \left \langle \rho, x \cdot I + y \cdot F_3 - 3\cdot F_1  \right \rangle \\
  =  x + & \max_{\rho \succeq 0, \Tr_2(\rho) = I/2} \left \langle \rho, C(y)  \right \rangle,
 \end{align*}
where,
\[
 C(y)  = y \cdot F_3 - 3\cdot F_1 =\begin{pmatrix}
\frac{9}{16} y  -\frac{9}{20} & 0 & 0 & -\frac{3}{10}\\
0 & 0 & 0 & 0 \\
0 & 0 & -\frac{6}{5} & 0 \\
-\frac{3}{10} & 0 & 0 & -\frac{6}{5}
\end{pmatrix}.
\]

As you can see, non-zero entries appear only on the diagonal and corners. We are going to who that for any $y \in [0, 1]$,  for the optimization problem \[ \max_{\rho \succeq 0, \Tr_2(\rho) = I/2}  \langle \rho,  C(y) \rangle \] it is without loss of generality to only consider a restricted solution space where non-zero entries appear only on the diagonal and corners of $\rho$. 

\begin{claim}
\label{claim:diagonalsol}
Suppose $\rho'$ is the matrix after we removing all but diagonal or corner entries from $\rho$. If $\rho$ is a feasible solution to the original optimization, so is $\rho'$ and $\langle \rho, C(y) \rangle = \langle \rho', C(y) \rangle .$
\end{claim}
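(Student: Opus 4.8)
The plan is to verify both assertions of the claim essentially by inspection, leveraging the single structural fact that $\rho'$ and $C(y)$ share the same sparsity pattern: all entries vanish except on the diagonal and at the two corner positions $(1,4)$ and $(4,1)$. I would organize the argument into the objective-value equality and the two feasibility conditions ($\Tr_2(\rho')=I/2$ and $\rho'\succeq 0$).

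For the equality of objective values, write $\langle\rho,C(y)\rangle=\Tr\big(\rho\,C(y)\big)=\sum_{i,j}\rho_{ij}\,C(y)_{ji}$. Since $C(y)$ is supported only on the positions $(1,1),(2,2),(3,3),(4,4),(1,4),(4,1)$, this sum involves only the entries $\rho_1,\rho_6,\rho_{11},\rho_{16},\rho_4,\rho_{13}$ of $\rho$, which are exactly the ones retained in $\rho'$; hence $\langle\rho,C(y)\rangle=\langle\rho',C(y)\rangle$ with no computation.

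For the partial-trace constraint, I would plug $\rho'$ into the formula for $\Tr_2$ recorded just above the claim. Because $\rho'$ zeroes out $\rho_3,\rho_8,\rho_9,\rho_{14}$, this gives $\Tr_2(\rho')=\begin{pmatrix}\rho_1+\rho_6 & 0\\ 0 & \rho_{11}+\rho_{16}\end{pmatrix}$, and the hypothesis $\Tr_2(\rho)=I/2$ forces $\rho_1+\rho_6=\rho_{11}+\rho_{16}=\tfrac12$; hence $\Tr_2(\rho')=I/2$ (and in particular $\Tr(\rho')=1$). For positive semidefiniteness, I would reorder the computational basis as $\big(\ket{1},\ket{4},\ket{2},\ket{3}\big)$; in this ordering $\rho'$ becomes block diagonal with blocks $\begin{pmatrix}\rho_1 & \rho_4\\ \rho_{13} & \rho_{16}\end{pmatrix}$, $(\rho_6)$, and $(\rho_{11})$. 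Each block is a principal submatrix of $\rho$ (reordering the basis is just a relabeling), hence positive semidefinite since $\rho\succeq0$; and a block-diagonal Hermitian matrix is positive semidefinite exactly when each of its blocks is. Therefore $\rho'\succeq0$ and $\rho'$ is feasible, which together with the objective-value equality proves the claim.

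I do not expect a genuine obstacle: the whole argument is a short bookkeeping exercise. The only points deserving care are keeping the row-major indexing $\rho_1,\dots,\rho_{16}$ consistent when reading off $\Tr_2(\rho')$, and observing that the off-diagonal of $I/2$ is zero, so that discarding $\rho_3,\rho_8$ (and $\rho_9,\rho_{14}$) cannot break the partial-trace constraint.
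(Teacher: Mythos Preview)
Your proposal is correct and follows essentially the same approach as the paper: both arguments use that $C(y)$ is supported only on the diagonal and corners for the objective equality, that the partial-trace formula only involves entries preserved in $\rho'$ (with the off-diagonals of $I/2$ being zero), and that $\rho'$ is positive semidefinite because its nontrivial $2\times 2$ block is a principal submatrix of $\rho$. Your block-diagonalization via the basis reordering $(\ket{1},\ket{4},\ket{2},\ket{3})$ makes the PSD step slightly more explicit than the paper's direct statement of the determinant condition $ae-b\bar b\ge 0$, but the underlying idea is identical.
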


\begin{proof}

\begin{itemize}
\item First, removing non-diagonal values won't affect the partial trace condition $\Tr_2(\rho')  =  I/2.$ 
\item Second, let $a, e$ be the top-left and bottom-right entry of $\rho$ and $b, \bar{b}$ be the top-right and bottom-left entry of $\rho$. $\rho \succeq 0$ implies that all diagonal entries are non-negative reals and $a e - b \bar{b} \ge 0$.  Finally, observe that since $\rho'$ only has diagonal and corner entries, $\rho' \succeq 0$ if and only if all diagonal entries are non-negative reals and $a e - b \bar{b} \ge 0$.
\item Third, $C(y)$ only has non-zero entries on the diagonal and corners, so $\langle \rho, C(y) \rangle = \langle \rho', C \rangle .$

\end{itemize}
\end{proof}

By this claim, we only need to consider solution $\rho$ in the following form
\[
\begin{pmatrix}
\alpha & 0 & 0 & \beta\\
0 & \gamma & 0 & 0 \\
0 & 0 & \delta & 0 \\
\bar{\beta} & 0 & 0 & \epsilon
\end{pmatrix},
\]
Since $\rho$ is a positive-semidefinite Hermitian with $\Tr_2(\rho) = I /2$, we have the following additional constraints for $\rho$.

\begin{itemize}
    \item $\alpha, \gamma, \delta, \epsilon$ are non-negative reals.
    \item $\alpha + \gamma = \delta + \epsilon = \frac12.$
    \item $\alpha \epsilon - |\beta|^2 \ge 0.$ 
\end{itemize}

Now let's observe the objective $\langle \rho,  C \rangle$. We have, 
\begin{align*}
\langle \rho,  C \rangle & = \alpha \cdot \left (\frac{9}{16} y  -\frac{9}{20} \right) - \frac{3}{10}(\beta + \bar{\beta}) - \frac{6}{5}(\delta + \epsilon) \\
& = \alpha \cdot \left (\frac{9}{16} y  -\frac{9}{20} \right) - \frac{3}{10}(\beta + \bar{\beta}) - \frac{3}{5} \\
& \le \alpha \cdot \left (\frac{9}{16} y  -\frac{9}{20} \right) + \frac{3}{5}|\beta| - \frac{3}{5} \\
& \le \alpha \cdot \left (\frac{9}{16} y  -\frac{9}{20} \right) + \frac{3}{5}\sqrt{\alpha\epsilon} - \frac{3}{5}. \\
& \le \alpha \cdot \left (\frac{9}{16} y  -\frac{9}{20} \right) + \frac{3}{5}\sqrt{\frac{\alpha}{2}} - \frac{3}{5}. \\
\end{align*}
The equality holds when $\beta = -\sqrt{\frac{\alpha}{2}}, \delta = 0, e = \frac12.$

Finally, it is easy to see that w.l.o.g., we only need to consider solution $\rho$ of the following form, 
\[
\begin{pmatrix}
\alpha & 0 & 0 & -\sqrt{\frac{\alpha}{2}}\\
0 & \frac12 - \alpha & 0 & 0 \\
0 & 0 & 0 & 0 \\
-\sqrt{\frac{\alpha}{2}} & 0 & 0 & \frac12
\end{pmatrix},
\]
where $0 \le \alpha \le 1/2$ is a real number.

As a consequence, the utility function $u$ of the optimal strategy conditioned on $b_1 = 1, b_2 = 0$ is given by the following optimization
 \begin{align*}
 u_1(x, y) & = x + \max_{\rho \succeq 0, \Tr_2(\rho) = I/2} \left \langle \rho, y \cdot F_3 - 3\cdot F_1  \right \rangle \\
 &= x +  \max_{0 \le \alpha \le \frac12} \left (\frac{9}{16} y - \frac{9}{20} \right ) \alpha + \frac{3}{5} \sqrt{\frac{a}{2}} - \frac{3}{5}.
 \end{align*}
 
By solving the optimization, we have
\[
u_1(x, y) = \begin{cases}
x + \frac{2 - 3y}{5y - 4}, 0 \le y \le \frac{4}{15} \\
x + \frac{9y}{32} - \frac{21}{40}, \frac{4}{15} \le y \le 1.
\end{cases}
\]

Symmetrically, the optimal utility conditioned on $b_1 = 0, b_2 = 1$ is given by 
\[
u_2(x, y) = \begin{cases}
y + \frac{2 - 3x}{5x - 4}, 0 \le x \le \frac{4}{15} \\
y + \frac{9x}{32} - \frac{21}{40}, \frac{4}{15} \le x \le 1.
\end{cases}
\]

Taking into account the cases where $b_1 = b_2 = 0$ and $b_1 = b_2 = 1$, we deduce that the optimal utility obtained by a buyer with type $(x, y)$ is represented by $\max(0, u_1(x, y), u_2(x, y), x + y - P_{\mathrm{crit}})$.

Furthermore, we note that for $y \in [\frac{4}{15}, 1]$, the inequality $x + \frac{9y}{32} - \frac{21}{40} \le x + y - P_{\mathrm{crit}}$ holds true, which implies that the utility function of the optimal strategy for this auction protocol  is identical to that of the optimal mechanism characterized by Lemma~\ref{lem:ddtutility}. By selecting the suggested strategy as the optimal one, we establish an IC and IR quantum protocol that precisely implements the optimal mechanism for this prior.

\section{Limitations of finite-round quantum protocols}\label{sec:finite-round}

In this section we give an example where no finite IC and IR protocol obtains optimal revenue. 

\begin{theorem*}[Theorem~\ref{thm:non-algebraic} restated]\hfill

    For the problem of auctioning two items to a single buyer, there is a Bayesian prior over independent item values, such that there is a revenue-optimal classical auction protocol that requires a constant number of bits {\em in exepctation}; yet no finite quantum auction protocol can achieve the optimal revenue.
\end{theorem*}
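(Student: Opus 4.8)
The plan is to combine three ingredients: (i) the classical protocols of~\cite{RubinsteinZ21}, which implement the optimal two-item mechanism with $O(1)$ expected communication for every independent additive prior; (ii) the Choi--Jamio{\l}kowski / SDP characterization of finite quantum protocols from Theorem~\ref{thm:choi}; and (iii) a prior constructed in~\cite{GK18} whose optimal mechanism has a utility function that is real-analytic but \emph{not} semialgebraic on a uniquely-characterized region. First I would fix the prior: take the~\cite{GK18} construction of independent densities $f_1,f_2$ on $[0,1]$ for which the GK conditions hold, so that Theorem~\ref{thm:characterization} applies and the unique optimal utility $u^*(x,y)$ is analytic on (say) region $\mathcal{B}$ yet is not a semialgebraic function there. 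For this prior the classical half of the statement is immediate: by~\cite{RubinsteinZ21} there is a randomized classical auction protocol with $O(1)$ expected communication whose induced mechanism is exactly the optimal one.

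The heart of the argument is to show that the utility function induced by \emph{any} finite quantum auction protocol is semialgebraic. Given a finite quantum auction protocol $\mathcal{P}$, I would first use the simplifications of Subsection~\ref{sub:simplifications} and Subsection~\ref{sub:wc2e} to assume it has a fixed number of rounds, sends one qubit per round (so the worst-case communication $K$ is fixed), defers all measurements to the end, and treats classical bits as qubits the seller measures immediately; early-terminating or variable-length protocols with a finite worst-case bound can be padded with dummy messages without changing the induced mechanism. By Theorem~\ref{thm:choi}, the seller's (fixed) strategy is a family $\{\Phi_{(S,q)}\}$ of PSD matrices indexed by the finitely many outcomes $(S,q)$, and every buyer co-strategy is represented by a matrix $\Psi$ ranging over a fixed compact spectrahedron $\mathcal{C}_K$ (PSD, trace one, plus the linear causality constraints of~\cite{GutoskiW07}), with $\Pr[(S,q)] = 2^K\Tr(\Phi_{(S,q)}\Psi)$. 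Since the buyer is additive, $v(S)=x\,\mathbbm{1}[1\in S]+y\,\mathbbm{1}[2\in S]$, so the buyer's expected utility equals $\Tr(\Psi N(x,y))$ where $N(x,y):=2^K\sum_{(S,q)}(v(S)-q)\,\Phi_{(S,q)}$ is Hermitian with entries \emph{affine} in $(x,y)$. Hence the induced utility is $u_{\mathcal{P}}(x,y)=\max_{\Psi\in\mathcal{C}_K}\Tr(\Psi N(x,y))$, the value of a semidefinite program whose feasible region does not depend on $(x,y)$ and whose objective is affine in $(x,y)$; in particular $u_{\mathcal{P}}$ is convex and $1$-Lipschitz, so it defines a valid IC, IR mechanism $(\mathcal{A}_{\mathcal{P}},\mathcal{Q}_{\mathcal{P}})=(\nabla u_{\mathcal{P}},\,\nabla u_{\mathcal{P}}\cdot v - u_{\mathcal{P}})$. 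The set $\{(x,y,t):\exists\,\Psi\in\mathcal{C}_K,\ \Tr(\Psi N(x,y))\ge t\}$ is the projection of a semialgebraic set, hence semialgebraic by Tarski--Seidenberg, so its upper envelope $u_{\mathcal{P}}$ is a semialgebraic function on $[0,1]^2$.

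Finally I would derive the contradiction. If a finite IC, IR quantum auction protocol $\mathcal{P}$ achieves the optimal revenue for our prior, then $(\mathcal{A}_{\mathcal{P}},\mathcal{Q}_{\mathcal{P}})$ is a revenue-optimal IC, IR mechanism, so by the uniqueness clause of Theorem~\ref{thm:characterization} $u_{\mathcal{P}}$ agrees with $u^*$ almost everywhere on region $\mathcal{B}$, and since both functions are continuous they agree on all of $\overline{\mathcal{B}}$; but $u_{\mathcal{P}}$ is semialgebraic while $u^*|_{\mathcal{B}}$ is not, a contradiction. Combined with the classical protocol above, this proves the theorem.

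\textbf{Main obstacle.} I expect the bulk of the work to be the semialgebraicity step together with pinning down the right notion of ``finite quantum protocol'': one must check that the reductions to a fixed round/communication structure preserve the induced mechanism, that $\mathcal{C}_K$ from~\cite{GutoskiW07} really is a single fixed spectrahedron independent of the buyer's type, and that the parametric SDP value is genuinely semialgebraic. The remaining ingredients amount to quoting~\cite{RubinsteinZ21} for the classical upper bound and~\cite{GK18} for the example, although verifying the GK conditions and the non-semialgebraicity of $u^*$ on $\mathcal{B}$ for that concrete prior is where the explicit computation lies.
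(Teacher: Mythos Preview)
Your proposal is correct and follows essentially the same approach as the paper: reduce a finite quantum protocol to a parametric SDP via the Choi--Jamio{\l}kowski representation of~\cite{GutoskiW07}, use Tarski--Seidenberg to conclude the induced utility is semialgebraic, invoke the~\cite{GK18} exponential prior whose optimal utility (involving the Lambert $W$ function) is uniquely determined on the relevant region but not semialgebraic, and quote~\cite{RubinsteinZ21} for the classical upper bound. The only cosmetic differences are that the paper restricts to the one-dimensional slice $u(x,1)$ before running the semialgebraicity argument and then explicitly proves the Lambert-$W$ slice is not annihilated by any nonzero bivariate polynomial, whereas you keep the two-dimensional formulation and defer that computation.
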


We will first review some definitions and results about semialgebraic functions and sets. 
We then proceed to show that for any IC  finite-round quantum protocol, the utility function $u(x, 1)$ after fixing $y = 1$ is the optimal value of an SDP parameterized by $x$ (with a slight abuse of notation, let $u(x) = u(x, 1)$). 
Next we argue that such $u(x)$ has to be semialgebraic. Finally, we will give an example of two item auctions where the utility function corresponding to the optimal mechanism is not semialgebraic. 

\subsection{Semialgebraic preliminaries}

Below are definitions of semialgebraic sets and semialgebraic functions. (See e.g.~\cite{coste2000} for reference.)
\begin{definition}[semialgebraic sets]
    A subset of $\mathbb{R}^n$ is semialgebraic if it can be represented as a finite union of sets of the form: 
    \[
    \{x \in \mathbb{R}^n : f(x) = 0, g_1(x) > 0, \ldots, g_m(x) > 0  \},
    \]
    where $f$ and $g_i$s are real polynomials in $x$.
\end{definition}

Moreover, the semialgebraic sets are closed under finite unions, intersections, complement and closure.

As a corollary, the boundary of a semialgebraic set is also semialgebraic, as the boundary of a set is the intersection of its closure and its complement's closure.

\begin{corollary}
\label{coro:boundary}
The boundary of a semialgebraic set is also semialgebraic.
\end{corollary}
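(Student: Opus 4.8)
The plan is to realize the boundary of a semialgebraic set using only the set operations already known (from the paragraph preceding the corollary) to preserve semialgebraicity: complement, closure, and finite intersection.

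Concretely, I would fix the ambient space $\mathbb{R}^n$, let $S \subseteq \mathbb{R}^n$ be semialgebraic, and invoke the standard topological identity
\[
\partial S \;=\; \overline{S} \,\cap\, \overline{\mathbb{R}^n \setminus S},
\]
i.e.\ a point lies on $\partial S$ precisely when every one of its neighborhoods meets both $S$ and its complement. Then I would build up the right-hand side one operation at a time: $\mathbb{R}^n \setminus S$ is semialgebraic by closure under complement; hence $\overline{S}$ and $\overline{\mathbb{R}^n \setminus S}$ are semialgebraic by closure under closure; and finally $\partial S$, being their intersection, is semialgebraic by closure under finite intersections. Equivalently one could write $\partial S = \overline{S} \setminus \operatorname{int} S$ with $\operatorname{int} S = \mathbb{R}^n \setminus \overline{\mathbb{R}^n \setminus S}$, which again uses only complement and closure.

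I do not expect any genuine obstacle: the statement is an immediate consequence of the closure properties recalled just above. The only points that warrant a moment's care are choosing a correct expression for $\partial S$ in terms of closures and complements, and keeping the ambient dimension $n$ fixed throughout so that ``complement'' is unambiguous.
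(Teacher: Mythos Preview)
Your proposal is correct and matches the paper's own argument essentially verbatim: the paper also derives the corollary from the identity $\partial S = \overline{S} \cap \overline{S^c}$ together with closure of semialgebraic sets under complement, closure, and finite intersection.
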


\begin{definition}[Semialgebraic functions]
    A function $f: \mathbb{R}^n \to \mathbb{R}$ is semialgebraic if its graph $\{(x, y) \in \mathbb{R}^{n+1} : f(x) = y \}$  is a semialgebraic set.
\end{definition}

Following is an important property of semialgebraic functions.
\begin{lemma}
    \label{lem:semialgebraicfunction}
    Let $f: \mathbb{R} \to \mathbb{R}$ be a semialgebraic function, then there is a non-zero bivariate polynomial $q(X, Y) \ne 0$, such that $q(x, f(x)) = 0$ for all $x \in \mathbb{R}$. 
\end{lemma}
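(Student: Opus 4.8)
The plan is to use the definition of semialgebraic set directly, with essentially no machinery. By hypothesis the graph $\Gamma = \{(x,y) \in \mathbb{R}^2 : f(x) = y\}$ is semialgebraic, so we may write
\[
\Gamma = \bigcup_{j=1}^{N} \{(x,y) \in \mathbb{R}^2 : P_j(x,y) = 0,\ g_{j,1}(x,y) > 0, \ldots, g_{j,m_j}(x,y) > 0\}
\]
for suitable real polynomials $P_j, g_{j,i}$; call the $j$-th set in this union $S_j$, so that $\Gamma = \bigcup_j S_j$. The natural candidate for the desired polynomial is $q := \prod_j P_j$, which clearly vanishes at every point of $\Gamma$; the only thing to check is that this product is not the zero polynomial.

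First I would argue that every \emph{nonempty} $S_j$ has $P_j \not\equiv 0$. Indeed, if $P_j$ were the zero polynomial, then $S_j = \{(x,y) : g_{j,1}(x,y) > 0, \ldots, g_{j,m_j}(x,y) > 0\}$ would be an open subset of $\mathbb{R}^2$; if it were also nonempty it would contain an open ball, hence two distinct points $(x_0, y_1), (x_0, y_2) \in S_j$ with $y_1 \ne y_2$. But $S_j \subseteq \Gamma$ and $\Gamma$ is the graph of a single-valued function, a contradiction. Hence we may discard the empty pieces and assume $P_j \ne 0$ for each remaining index.

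Now set $q(X,Y) := \prod_j P_j(X,Y)$, the product over the finitely many nonempty pieces. Since $\mathbb{R}[X,Y]$ is an integral domain and each factor is nonzero, $q \ne 0$. For any $x \in \mathbb{R}$ the point $(x, f(x))$ lies in $\Gamma = \bigcup_j S_j$, hence in some $S_j$, whence $P_j(x, f(x)) = 0$ and therefore $q(x, f(x)) = 0$. As this holds for all $x$, the lemma follows.

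The only subtle point is the degenerate-piece argument: one must rule out components of the defining union on which the equality constraint is vacuous, since such a component would contribute a zero factor to $q$. The observation that an open subset of $\mathbb{R}^2$ cannot sit inside the graph of a function disposes of this, and everything else is bookkeeping with the integral-domain property of $\mathbb{R}[X,Y]$. (This is a standard fact about semialgebraic functions; see e.g.~\cite{coste2000}.)
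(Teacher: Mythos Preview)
Your proof is correct. The paper does not actually prove this lemma: it is stated without proof in the preliminaries subsection on semialgebraic sets, with a blanket reference to \cite{coste2000} for the background material. Your argument supplies a clean, self-contained justification that the paper omits; the key step---observing that a piece $S_j$ with $P_j \equiv 0$ would be open and hence could not sit inside the graph of a function---is exactly the right way to handle the only nontrivial point.
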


The following theorem about semialgebraic sets is particularly useful in our application.

\begin{theorem}[Tarski-Seidenberg]
\label{thm:tarski-seidenberg}
    Let $A \subseteq \mathbb{R}^{n + m}$ be a semialgebraic set. Then
    \[
    \{x \in \mathbb{R}^n : (x, y) \in A \text{ for some } y \in \mathbb{R}^m \}
    \]
    is semialgebraic.
\end{theorem}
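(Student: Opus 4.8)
The statement to prove is that the image of a semialgebraic set $A \subseteq \mathbb{R}^{n+m}$ under the coordinate projection $\pi : \mathbb{R}^{n+m} \to \mathbb{R}^n$ is semialgebraic. My plan is to reduce to the one-variable elimination step $m = 1$ and then iterate; since eliminating one variable at a time turns a semialgebraic subset of $\mathbb{R}^{n+1}$ into a semialgebraic subset of $\mathbb{R}^n$, $m$ applications handle the general case. So the whole proof rests on the following lemma: if $B \subseteq \mathbb{R}^{n+1}$ is semialgebraic, then $\{x \in \mathbb{R}^n : (x,t) \in B \text{ for some } t \in \mathbb{R}\}$ is semialgebraic. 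By the definition of semialgebraic sets (and since the projection of a finite union is the union of the projections), it suffices to treat $B$ of the form $\{(x,t) : f(x,t) = 0,\ g_1(x,t) > 0, \ldots, g_k(x,t) > 0\}$ for polynomials $f, g_i$; actually, for the cleanest bookkeeping I would first rewrite an arbitrary semialgebraic $B$ as a finite union of such ``basic'' pieces and project each.

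\textbf{The elimination step via sign conditions.} The key combinatorial fact I would invoke is that there is a finite family of polynomials --- built from $f, g_1, \ldots, g_k$ and their iterated derivatives with respect to $t$, together with certain remainders in the Euclidean algorithm (i.e., a family closed under the operations needed for Sturm-type reasoning) --- such that whether the system $f = 0,\ g_1 > 0, \ldots, g_k > 0$ has a solution $t$ for a given $x$ depends only on the vector of signs that these polynomials in $t$ take as $t$ ranges over $\mathbb{R}$ after substituting $x$. More precisely, the real line is divided by the roots of these polynomials into finitely many intervals and points, on each of which every polynomial in the family has a constant sign; whether one of these cells yields a valid $t$ is a Boolean function of the sign data, and the sign data is in turn a Boolean combination of polynomial equalities and inequalities \emph{in $x$ alone} (the leading coefficients, the signs at $\pm\infty$, and the number/arrangement of roots as controlled by generalized Sturm sequences). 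Hence the projection is a finite Boolean combination of sets $\{x : p(x) = 0\}$, $\{x : p(x) > 0\}$ — i.e., semialgebraic. For the writeup I would either develop the Sturm/subresultant bookkeeping explicitly or, to keep it short, cite the standard source (e.g.~\cite{coste2000}) for the statement that ``the truth value of a quantified polynomial formula $\exists t\, \Phi(x,t)$ is equivalent to a quantifier-free polynomial formula in $x$,'' which is exactly this step.

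\textbf{Assembling the theorem.} Given the $m=1$ lemma, the full statement follows by induction on $m$: write $\pi = \pi' \circ \pi''$ where $\pi''$ forgets the last coordinate and $\pi'$ forgets the remaining $m-1$; by the lemma $\pi''(A)$ is semialgebraic in $\mathbb{R}^{n+m-1}$, and by the inductive hypothesis its further projection is semialgebraic in $\mathbb{R}^n$. (The base case $m=0$ is trivial.) Along the way I would also record the trivial closure facts already quoted before the theorem --- that semialgebraic sets are closed under finite unions, intersections, and complements --- since they are what let me pass from basic pieces to arbitrary semialgebraic $B$ and back.

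\textbf{Main obstacle.} The genuine content, and the only hard part, is the $m=1$ elimination lemma: producing the finite family of auxiliary polynomials in $x$ whose signs determine solvability of $\exists t$, and proving that the root structure of a parametrized family of univariate polynomials is governed by finitely many polynomial sign conditions on the parameters (this is where Sturm sequences / subresultant coefficients, or equivalently a careful cylindrical decomposition of $\mathbb{R}^{n+1}$ over $\mathbb{R}^n$, enter). Everything else --- the reduction to basic sets, the union/complement bookkeeping, and the induction on $m$ --- is routine. Given the paper only needs Tarski--Seidenberg as a black box downstream, I expect the cleanest exposition is to state the quantifier-elimination fact precisely, give the short reduction-and-induction argument in full, and cite \cite{coste2000} for the Sturm-theoretic core rather than reproving it.
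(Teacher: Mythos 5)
The paper does not prove this theorem at all: Tarski--Seidenberg is stated as a classical black box, with \cite{coste2000} cited earlier in the section as the general reference for semialgebraic geometry. Your outline is the standard textbook proof of that classical result, and it is structurally correct: reduce to projecting out one variable, handle a basic set $\{f=0,\ g_1>0,\dots,g_k>0\}$ by showing that solvability in $t$ is determined by finitely many polynomial sign conditions on $x$ (via Sturm/subresultant data for the family of polynomials in $t$ and their derivatives), pass to general $B$ by taking unions of basic pieces, and induct on $m$. Be aware, though, that as written this is a proof \emph{plan} rather than a proof: the entire mathematical content lives in the one-variable elimination lemma, and you explicitly defer that to \cite{coste2000}. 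That is perfectly consistent with how the paper itself treats the theorem (it needs Tarski--Seidenberg only as an input to Lemma~\ref{lem:quantumsemialgebraic}, to conclude that the value function of a parametrized SDP is semialgebraic), so citing the quantifier-elimination core rather than reproving it is the right call here; just don't present the sketch as a self-contained proof, since the Sturm-sequence bookkeeping you gesture at is precisely where all the work is.
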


\subsection{The utility function of finite round IC protocol is semialgebraic}

\begin{lemma} \label{lem:quantumsemialgebraic}
    Given an IC an IR finite-round quantum auction protocol, its utility function $u(x)$ is semialgebraic.
\end{lemma}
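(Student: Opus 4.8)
The plan is to express the buyer's best response as a semidefinite program whose data depend affinely on the scalar $x$, and then to conclude via Tarski--Seidenberg (Theorem~\ref{thm:tarski-seidenberg}). First I would put the protocol into the canonical form of Subsection~\ref{sub:wc2e} (fixed message size in each round, all measurements deferred to the end of the protocol); since the protocol has a finite worst-case communication bound $K$, this is without loss of generality and makes Theorem~\ref{thm:choi} applicable. Exactly as in the proof of Lemma~\ref{lem:worstcasecounting}, the protocol is then described by Choi--Jamio\l{}kowski matrices $\Phi_1,\dots,\Phi_L$ (with outcome $\ell$ carrying allocation $a_\ell$ and payment $p_\ell$), and any buyer strategy by a single matrix $\Psi$ in the set $\mathcal{S}\subseteq\{\Psi\succeq 0:\Tr\Psi=1\}$ of valid buyer-strategy representations, with outcome $\ell$ occurring with probability $2^K\Tr(\Phi_\ell\Psi)$. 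Setting $\widehat\Phi_a=2^K\sum_{\ell:a_\ell=a}\Phi_\ell$ and $Q=2^K\sum_\ell p_\ell\Phi_\ell$, the expected utility of a buyer with values $(x,1)$ playing $\Psi$ equals
\[
\sum_{a}\Big(x\cdot\mathbf 1[1\in a]+\mathbf 1[2\in a]\Big)\Tr(\Psi\widehat\Phi_a)-\Tr(\Psi Q)=\Tr\!\big(\Psi(xA+B)\big),
\]
where $A:=\sum_{a\ni 1}\widehat\Phi_a$ and $B:=\sum_{a\ni 2}\widehat\Phi_a-Q$ are fixed Hermitian matrices determined by the protocol (using that the additive valuation is affine in $x$). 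Since the protocol is IC, the suggested strategy is optimal, so $u(x)=\sup_{\Psi\in\mathcal{S}}\Tr(\Psi(xA+B))$: an SDP whose objective is affine in the parameter $x$ and whose feasible region $\mathcal{S}$ does not depend on $x$.

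Next I would argue that $\mathcal{S}$ is a compact semialgebraic set. Viewing the $2^{2K}\times 2^{2K}$ Hermitian matrices as a real vector space, $\mathcal{S}$ is a spectrahedron: by~\cite{GutoskiW07} it is cut out by finitely many linear (partial-trace and trace) equalities together with the constraint $\Psi\succeq 0$, and positive semidefiniteness of a Hermitian matrix is equivalent to the nonnegativity of its finitely many principal minors, each a polynomial in the real and imaginary parts of the entries; hence $\mathcal{S}$ is semialgebraic. Moreover $\mathcal{S}$ is bounded (it lies in the trace-one slice of the PSD cone) and closed, hence compact, so the supremum defining $u(x)$ is attained and $u(x)$ is a well-defined real number for every $x$.

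Finally I would invoke Tarski--Seidenberg. Write the graph of $u$ as
\[
\{(x,t):t=u(x)\}=\{(x,t):\exists\,\Psi\in\mathcal{S},\ \Tr(\Psi(xA+B))\ge t\}\ \cap\ \{(x,t):\forall\,\Psi\in\mathcal{S},\ \Tr(\Psi(xA+B))\le t\}.
\]
The set $\{(x,t,\Psi):\Psi\in\mathcal{S},\ \Tr(\Psi(xA+B))\ge t\}$ is semialgebraic (intersection of the semialgebraic set $\mathcal{S}$ with a polynomial inequality in $(x,t,\Psi)$), so by Theorem~\ref{thm:tarski-seidenberg} its projection onto the $(x,t)$-plane is semialgebraic; the second set is the complement of the projection of a semialgebraic set, hence also semialgebraic. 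Since semialgebraic sets are closed under finite intersection, the graph of $u$ is semialgebraic, i.e.\ $u$ is a semialgebraic function, and (intersecting with $\{0\le x\le 1\}$ if one prefers to restrict to the type space) the same holds on the relevant domain.

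The substantive step is the first one — recasting the buyer's best response as a single SDP with an $x$-independent feasible set. This is exactly where the Gutoski--Watrous characterization (Theorem~\ref{thm:choi}) is essential: it is what lets us represent an arbitrary buyer strategy by a matrix of dimension depending only on $K$, and present $\mathcal{S}$ by finitely many polynomial (in)equalities, despite the parties' unbounded local memory. Once the SDP is in place, Steps~2 and~3 are routine quantifier-elimination bookkeeping.
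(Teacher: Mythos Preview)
Your proof is correct and follows essentially the same approach as the paper: put the protocol in canonical form, invoke the Gutoski--Watrous SDP characterization to express $u(x)$ as a polynomial optimization over a semialgebraic feasible set, and apply Tarski--Seidenberg. The only cosmetic difference is that the paper extracts the graph of $u$ as the boundary of the hypograph $\{(x,y):y\le u(x)\}$ (via Corollary~\ref{coro:boundary}) rather than as the intersection of the $\exists$- and $\forall$-projections you use.
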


\begin{proof}

Just as in Subsection~\ref{sub:wc2e}, wlog we assume the length of each message of a finite round quantum protocol is fixed and measurements only happen at the very end of the protocol. By a semidefinite programming (SDP) characterization of feasible strategies in~\cite{GutoskiW07}, given an auction protocol $\mathcal{P}$, one can write $u(x)$ in the following optimization over real variables: 

\begin{equation}
\begin{aligned}
    u(x)  = \max_{\mathbf{z}} \quad C(x, \mathbf{z}) \text{ subject to } \mathbf{z} \in S,
\end{aligned}    
\end{equation}
where $\mathbf{z} $ is a finite-dimensional real vector, $C$ is a polynomial objctive function, and the constraint $\mathbf{z} \in S$ can be represented in finitely many polynomial equalities and inequalities. We say that $\mathbf{z}$ is {\em feasible} if $\mathbf{z} \in S$.

From the aforementioned discussion it is easy to see that the set 
\[
  \{(x, y, \mathbf{z}) : x \in \mathbb{R}, y \le C(x, \mathbf{z}),  \mathbf{z} \text{ is feasible} \}
\]
is semialgebraic.

By Theorem~\ref{thm:tarski-seidenberg}, we know set
\[
  A = \{(x, y) : x \in \mathbb{R}, y \le C(x, \mathbf{z}) \text{ for some feasible $\mathbf{z}$ }  \}
\] 
is semialgebraic. Note that $(x, y) \in A$ if and only if \[y \le \max_{\mathbf{z} \text { is feasible }} C(x, \mathbf{z}).\] Next, by Corollary~\ref{coro:boundary}, the boundary of $A:$ $\mathrm{bd}(A)$ is semialgebraic. Observe that  
\[ \mathrm{bd}(A) = \{(x, y) : x \in \mathbb{R}, y = \max_{\mathbf{z} \text{ is feasible} } C(x,\mathbf{z})  \}. \] Finally, we further restrict $x \in [0, 1]$, we know that the graph of $u(x)$
\[
\{(x, y) : x \in [0, 1], y = \max_{\mathbf{z} \text{ is feasible}} C(x,\mathbf{z}) \}
\] is also semialgebraic.  Thus, we conclude that $u(x)$ is semialgebraic. 

\end{proof}

\subsection{A mechanism with a non-semialgebraic utility function}

\cite{GK18} characterizes the optimal mechanism for selling two items to an additive buyer with i.i.d. priors $\frac{e^{-x}}{1-1/e}$ (i.e. $f_1(x)  =  \frac{e^{-x}}{1-1/e}$ and $f_2(y)  =  \frac{e^{-y}}{1-1/e}$). In particular, they show that the utility function of the optimal mechanism satisfies  $u(x, 1) = 2x + W(e^{1-x} (2-x)) - 1$ for $x \in [0, 0.1]$, where $W(\cdot)$ is the Lambert $W$ function\footnote{Defined as the inverse function of $f(w) = w\cdot e^w$. Moreover, by the Lagrange inversion theorem, $W$ is analytic everywhere on $(-1/e, \infty)$.}. Furthermore, by Theorem~\ref{thm:characterization}, we also know this utility function is unique in this region $(y = 1, x \in[0, 0.1])$. 

Next, we show that the unique utility function $g(x) = 2x + W(e^{1-x} (2-x)) - 1$ is not a semialgebraic function. Together with Lemma~\ref{lem:quantumsemialgebraic}, this implies that no finite quantum IC protocol achieves exactly optimal revenue (aka completing the proof of Theorem~\ref{thm:non-algebraic}).

\begin{lemma}
    Let $f: \mathbb{R} \to \mathbb{R}$ be a semialgebraic function. $f(x)$ cannot be equal to $g(x) = 2x + W(e^{1-x} (2-x)) - 1$ on $[0, r)$ for any $r > 0$.
\end{lemma}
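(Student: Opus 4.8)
The plan is to argue by contradiction: suppose $g(x)=2x+W(e^{1-x}(2-x))-1$ agrees with a semialgebraic $f$ on some interval $[0,r)$. By Lemma~\ref{lem:semialgebraicfunction}, there is a nonzero bivariate polynomial $q(X,Y)$ with $q(x,f(x))=0$ for all $x\in\mathbb{R}$; in particular $q(x,g(x))=0$ on $[0,r)$. The strategy is to extract from the functional equation satisfied by $W$ a nonzero bivariate polynomial identity that $g$ \emph{cannot} satisfy, and then contradict $q$ via analyticity / the identity theorem.

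First I would isolate the Lambert part. Write $w(x):=W(e^{1-x}(2-x)) = g(x)-2x+1$, which is analytic on a neighborhood of $[0,r)$ since $e^{1-x}(2-x)>-1/e$ there and $W$ is analytic on $(-1/e,\infty)$. The defining identity $W(z)e^{W(z)}=z$ gives
\[
w(x)\, e^{w(x)} = e^{1-x}(2-x),
\]
i.e. $w(x)\,e^{w(x)+x-1} = 2-x$. Substituting $w(x)=g(x)-2x+1$ turns this into $(g(x)-2x+1)\,e^{g(x)-x} = 2-x$ on $[0,r)$. Now if $q(x,g(x))\equiv 0$ on $[0,r)$ with $q\neq 0$, then since $g$ is analytic on $[0,r)$ and $q(x,g(x))$ is analytic, it vanishes on the whole interval; I would use this together with the exponential identity to force a contradiction. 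The cleanest route: from $q(x,g(x))=0$ one can (by resultant/elimination, exactly the Tarski–Seidenberg-style manipulation already used in the paper) deduce that $w(x)=g(x)-2x+1$ is itself semialgebraic, hence satisfies $p(x,w(x))=0$ for some nonzero polynomial $p$. But $w$ also satisfies the transcendental relation $w\,e^{w} = e^{1-x}(2-x)$. The contradiction then comes from the classical fact that the Lambert $W$ function (equivalently, any nonconstant branch of it composed with a nonconstant rational argument) is not algebraic over $\mathbb{R}(x)$.

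To make that last step self-contained I would argue directly rather than cite algebraicity of $W$. Assume $p(x,w(x))=0$ with $p(X,Y)=\sum_{i=0}^{d} c_i(X) Y^i$, $c_d\not\equiv 0$, $d$ minimal. Differentiate: using $w' = \frac{-w/(2-x) - w}{?}$ — more precisely, differentiating $w e^{w} = e^{1-x}(2-x)$ gives $w'(1+w)e^w = e^{1-x}(2-x)\cdot\frac{-1}{2-x} - e^{1-x}(2-x) = -\frac{e^{1-x}(2-x)}{2-x} - e^{1-x}(2-x)$, so $w'(1+w) = -\frac{1}{2-x} - 1 = -\frac{3-x}{2-x}$, hence $w' = -\frac{3-x}{(2-x)(1+w)}$, a \emph{rational} function of $x$ and $w$. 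Then $\frac{d}{dx} p(x,w(x)) = 0$ yields, after clearing the denominator $(2-x)(1+w)$, a polynomial relation in $x$ and $w$ of degree $\le d$ in $w$; comparing with the minimal $p$ and checking the leading coefficient does not vanish identically produces a lower-degree annihilating polynomial, contradicting minimality of $d$ (this is the standard argument that solutions of $w'=R(x,w)$ with $R$ rational that are algebraic must have very constrained form, and here one checks the derived relation is genuinely nonzero). This forces $d=0$, i.e. $w$ is rational in $x$; but a rational $w$ with $w e^w = e^{1-x}(2-x)$ is impossible by comparing growth/poles (e.g. evaluate asymptotics, or note $e^w$ would be algebraic, forcing $w$ constant, contradicting $w e^w=e^{1-x}(2-x)$ nonconstant).

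The main obstacle is the algebraicity-of-$W$ step: turning "$W$ is transcendental" into a clean, short, self-contained contradiction inside this proof. The differentiation/degree-reduction argument above is the route I expect to work, but it requires care in verifying that the derived polynomial relation is not identically zero (so that it genuinely contradicts minimality) — this is the one place where a short computation with the leading coefficients $c_d(x)$ and the factor $(1+w)$ is unavoidable. Everything else (analyticity of $g$ on $[0,r)$, the identity theorem to pass from $[0,r)$ to the whole domain, and the reduction from "$q(x,g(x))=0$" to "$w$ semialgebraic/algebraic") is routine given Lemma~\ref{lem:semialgebraicfunction} and the manipulations already demonstrated in the paper.
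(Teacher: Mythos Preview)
Your overall framework (reduce to showing $w(x)=W(e^{1-x}(2-x))$ is not algebraic over $\mathbb{R}(x)$) is sound, but the core step --- the differentiation/degree-reduction argument --- has a genuine gap. Every algebraic function $w$ of $x$ satisfies a first-order ODE $w'=R(x,w)$ with $R$ rational (just differentiate its minimal polynomial and solve for $w'$). So the mere fact that $w$ satisfies such an ODE cannot by itself force $d=0$; when you differentiate $p(x,w(x))=0$, substitute the ODE, clear denominators, and reduce modulo $p$, the result may simply be a $\mathbb{C}(x)$-multiple of $p$ rather than a genuinely lower-degree annihilator. Your parenthetical ``checking the derived relation is genuinely nonzero'' is precisely the whole difficulty, and you do not carry it out; doing so essentially amounts to proving that $e^{w}$ (equivalently $e^{w+x-1}$) is transcendental over $\mathbb{C}(x)$ whenever $w$ is nonconstant algebraic, which is a real theorem requiring its own argument. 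There is also a minor slip: from $w e^{w}=e^{1-x}(2-x)$ one gets $w'(1+w)=-\dfrac{(3-x)\,w}{2-x}$, not $w'(1+w)=-\dfrac{3-x}{2-x}$.

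The paper takes a far more elementary route that avoids differential algebra entirely. After using analyticity to extend $P(x,v(x))=0$ from $[0,r)$ to all $x\ge 0$, it exploits the asymptotics of $v$ at infinity: for large $x$ the argument $e^{1-x}(2-x)$ tends to $0$, so $v(x)=W(e^{1-x}(2-x))$ is exponentially small. Writing $P(x,Y)=\sum_i a_i(x)Y^i$ and letting $l$ be the least index with $a_l\not\equiv 0$, dividing through by $v(x)^l$ gives $a_l(x)+\sum_{i>l} a_i(x)\,v(x)^{i-l}=0$. Each term with $i>l$ is a polynomial times a positive power of an exponentially decaying quantity and hence tends to $0$, forcing $a_l(x)\to 0$ --- impossible for a nonzero polynomial. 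This growth-rate argument is a couple of lines and sidesteps the transcendence machinery your approach would need.
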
 

\begin{proof}

To this end, we will show that there cannot be a \emph{non-zero} bivariate polynomial $P$ such that $P(x, g(x)) = 0$ for all $x \in [0, r).$ Note: it is sufficient to show that there is no \emph{non-zero} bi-variate polynomial $P$ such that $P(x, v(x)) = 0$ for all $x \in [0, r)$, where $v(x) = W(e^{1-x} (2-x))$. This is because $g(x) = v(x) + 2x - 1$, assuming by contradiction $P(x, g(x)) = 0$ on $[0, r)$, through expanding this polynomial, we sure have another polynomial $P'(x, v(x)) = 0$ on $[0, r)$.

We prove it by contradiction. Assume there exists a non-zero polynomial  $P(x, v(x)) = 0$, and it can be written as 
\[
P(x, y) = a_n (x) y^n + a_{n-1}(x) y^{n-1} + \cdots + a_0(x) ,
\]
where coefficients $a_i(x)$ are real polynomials in $x$ . 

By our assumption, $P(x, v(x)) = 0$ for $x \in [0, r)$. Furthermore, $v(x)$ is analytic for $x \in [0, +\infty)$, and so is $P(x, v(x))$. Thus, by the identity theorem, $P(x, v(x)) = 0$ for $x \in [0, +\infty).$ Next observe that $v(x)$ is negative, monotone increasing for $x \in [3, +\infty)$. More importantly, $e^{1-x} (2-x)>v(x) > 2e^{1-x} (2-x)$ for $x \in [3, +\infty)$. Let $l$ be the smallest index such that $a_l(x)$ is non-zero, we define 
\[
F(x, y) = \frac{P(x, y)}{y^l} =  a_n (x) y^{n-l} + a_{n-1}(x) y^{n-l-1} + \cdots + a_l(x),
\]
and then we plug $y = v(x)$ into it, 
\[
F(x, v(x))  =  a_n (x) v^{n-l}(x) + a_{n-1}(x) v^{n-l-1}(x) + \cdots + a_l(x).
\]

When $x$ goes to infinity, $\lim_{x \to \infty} a_i (x) v^{i-l}(x) = 0$ for all $i > l$ as $v(x)$ is exponentially small. Next, $\lim_{x \to \infty} a_l(x) \ne 0 $ as $a_l(x)$ is a non-zero polynomial. Finally, we conclude that 
\[
\lim_{x \to \infty} F(x, v(x)) \ne 0,
\]
which contradicts with the assumption that  $P(x, v(x)) = 0$ for $x \in [0, +\infty).$

\end{proof}

\bibliographystyle{alpha} 
\newcommand{\etalchar}[1]{$^{#1}$}

\end{document}